\documentclass[11pt]{article}
\usepackage{bbm}
\usepackage{geometry}
\usepackage{color}
\usepackage{amsfonts}
\usepackage{algorithm}
\usepackage{algorithmic}
\usepackage{latexsym, amssymb, amsmath, amscd, amsthm, amsxtra}
\usepackage{mathtools}
\usepackage{enumerate}
\usepackage[all]{xy}
\usepackage{mathrsfs}
\usepackage{fancyhdr}
\usepackage{listings}
\usepackage{hyperref}
\usepackage{enumitem}

\pagestyle{plain}

\def\E{\mathbb{E}}

\def\11{\mathbbm{1}}

\newtheorem{Theorem}{Theorem}[section]

\newtheorem{Definition}[Theorem]{Definition}
\newtheorem{Lemma}[Theorem]{Lemma}
\newtheorem{Corollary}[Theorem]{Corollary}
\newtheorem{Proposition}[Theorem]{Proposition}
\newtheorem{Remark}[Theorem]{Remark}
\newtheorem{Claim}[Theorem]{Claim}

\numberwithin{equation}{section}

\makeatletter
\newenvironment{breakablealgorithm}
{
		\begin{center}
			\refstepcounter{algorithm}
			\hrule height.8pt depth0pt \kern2pt
			\renewcommand{\caption}[2][\relax]{
				{\raggedright\textbf{\ALG@name~\thealgorithm} ##2\par}%
				\ifx\relax##1\relax 
				\addcontentsline{loa}{algorithm}{\protect\numberline{\thealgorithm}##2}%
				\else 
				\addcontentsline{loa}{algorithm}{\protect\numberline{\thealgorithm}##1}%
				\fi
				\kern2pt\hrule\kern2pt
			}
		}{
		\kern2pt\hrule\relax
	\end{center}
}
\makeatother

\hypersetup{colorlinks=true,linkcolor=blue}

\title{A polynomial time iterative algorithm for matching Gaussian matrices with non-vanishing correlation}
\author{Jian Ding \\ Peking University  \and  Zhangsong Li \\ Peking University}
\date{\today}

\begin{document}

\maketitle

\begin{abstract}
Motivated by the problem of matching vertices in two correlated Erd\H{o}s-R\'enyi graphs, we study the problem of matching two correlated Gaussian Wigner matrices. We propose an iterative matching algorithm, which succeeds in polynomial time as long as the correlation between the two Gaussian matrices does not vanish. Our result is the first polynomial time algorithm that solves a graph matching type of problem when the correlation is an arbitrarily small constant. 
\end{abstract}

\section{Introduction}

In this work, we study the problem of matching two correlated Wigner matrices, and we consider the case of symmetric matrices in order to be consistent with the graph matching problem. More precisely, for two sets $V$ and $\mathsf{V}$ with cardinality $n$, define $E(V)$ to be the set of unordered pairs $(u,v)$ with $u,v \in V, u \not = v$ and define $\mathsf{E(V)}$ similarly with respect to $\mathsf{V}$. Let $\pi$ be a uniform bijection between $V$ and $\mathsf{V}$. Let $G$ and $\mathsf{G}$ be two symmetric random matrices indexed by $V$ and $\mathsf{V}$ respectively. In addition, conditioned on $\pi$ we have $\left(G_{u,v}, \mathsf{G}_{\pi(u), \pi(v)}\right)\sim \mathbf F$ independent among all unordered pairs $(u, v) \in E(V)$, where $\mathbf F$ is the law for a pair of correlated random variables. Then, $G$ and $\mathsf{G}$ can be viewed as complete graphs with correlated edge weights.

\begin{Theorem}\label{thm-main}
Let $\mathbf F$ be the law of a pair of standard bivariate normal variables (i.e., with mean 0 and variance 1) with correlation $\epsilon > 0$. Then there exists a constant $C = C(\epsilon)>0$ and an algorithm (see Algorithm~\ref{algo:matching}) with $O(n^C)$-running time that takes $(G, \mathsf G)$ as input and outputs the latent matching $\pi$ with probability tending to 1 as $n\to \infty$.
\end{Theorem}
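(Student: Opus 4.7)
The plan is to build, by an iterative procedure, a similarity score $S^{(T)}_{u,v}$ for every pair $(u,v)\in V\times\mathsf V$, and to recover $\pi$ by reading off the row-wise argmax $\hat\pi(u)=\arg\max_v S^{(T)}_{u,v}$. The key is to design the iteration so that, after $T=O(\log n)$ rounds, the gap between $S^{(T)}_{u,\pi(u)}$ and $\max_{v\ne\pi(u)} S^{(T)}_{u,v}$ is large enough, with high probability, to withstand a union bound over the $n(n-1)$ mismatched pairs.

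A convenient starting point is the canonical coupling $\mathsf G = \epsilon G^{\pi} + \sqrt{1-\epsilon^2}\,H$, where $G^{\pi}$ denotes $G$ relabelled by $\pi$ and $H$ is an independent Gaussian Wigner matrix. Any polynomial in $(G,\mathsf G)$ then splits into a ``signal'' component proportional to a power of $\epsilon$ (from the cross terms between $G$ and $G^\pi$) and a mean-zero ``noise'' component. I would use an iteration of the schematic form
\[
S^{(t+1)}_{u,v} \;=\; \sum_{u'\in V,\,v'\in\mathsf V} G_{u,u'}\,\mathsf G_{v,v'}\, S^{(t)}_{u',v'} \;-\; \text{correction}_t(u,v),
\]
initialized either from a weak degree-profile pre-alignment or from a simple deterministic matrix, with $\text{correction}_t$ designed to kill contributions from pairs of walks in $V$ and $\mathsf V$ that revisit vertices or edges, in the spirit of Onsager corrections in AMP or of the non-backtracking-walk techniques used in sparse graph matching.

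The second step is the central analysis: show by induction that the signal part of $S^{(t)}_{u,\pi(u)}$ gains a fixed power of $\epsilon$ per iteration, while the noise variance grows only by an $\epsilon$-independent geometric factor. Combined, this pushes the signal-to-noise ratio to $n^{\Omega(1)}$ after $T=C(\epsilon)\log n$ rounds. A Gaussian-type concentration inequality for the iterates, together with the union bound, then yields $\hat\pi=\pi$ with probability $1-o(1)$. Since each iteration amounts to two $n\times n$ matrix multiplications, the total running time is $O(n^C)$ for some $C=C(\epsilon)$, as claimed.

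The hard part is controlling the noise through the iteration. Because the same matrices $G$ and $\mathsf G$ are reused in every round, $S^{(t)}$ expands combinatorially into a sum over pairs of length-$t$ walks in the two graphs, and walks that revisit vertices or edges create cross-correlations which, if left unchecked, dominate the genuine signal. The technical core of the proof will be to (i) design the Onsager-type corrections so that only ``nice'', essentially non-backtracking, pairs of walks survive at leading order, (ii) bound the combinatorial entropy of such walks in terms of $\epsilon$ and $t$, and (iii) propagate sharp concentration through the symmetry of the matrices and the self-correlation between $G$ and $\mathsf G$. Keeping the exponent $C(\epsilon)$ polynomial in $1/\epsilon$ demands tight control over these walk sums uniformly across all $O(\log n)$ iterations, and this is where I expect the bulk of the work to lie.
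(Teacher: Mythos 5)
There is a genuine gap in the heart of the proposal, and it is exactly the obstruction that the paper's construction is designed to circumvent. With a single scalar score $S^{(t)}_{u,v}$ per candidate pair, the bilinear iteration $S^{(t+1)}_{u,v}=\sum_{u',v'}G_{u,u'}\mathsf G_{v,v'}S^{(t)}_{u',v'}$ (with or without Onsager/non-backtracking corrections) has the feature that, per step, the signal picks up roughly a factor $\epsilon n$ while the noise standard deviation picks up roughly a factor $n$; in other words the per-entry signal-to-noise ratio gets \emph{multiplied} by roughly $\epsilon<1$ at every round. Your sentence ``signal gains a fixed power of $\epsilon$ per iteration, while the noise variance grows by an $\epsilon$-independent geometric factor'' is precisely a description of a decaying SNR, so it cannot push the SNR up to $n^{\Omega(1)}$; after $T$ rounds it sits at about $\epsilon^{T}$ times its initial value. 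This is acknowledged in Section~\ref{sec:algorithm-description}: ``a moment of thinking convinces us that in this way $\varepsilon_t$ (i.e.\ the signal) will have to decrease in $t$.'' The paper's resolution is structural, not a sharper estimate on walk sums: at step $t$ it carries not one scalar per vertex but a vector of $K_t$ statistics (the degree-type quantities $D_v^{(t)}\in\mathbb R^{K_t}$ relative to a family of index sets $\Gamma_k^{(t)},\Pi_k^{(t)}$), where $K_{t+1}\propto K_t^2$ (see \eqref{equ_iter_K}). Although the signal strength $\varepsilon_t$ per coordinate collapses doubly exponentially (see \eqref{equ_epsilon_t_bound}), the combined quantity $K_t\varepsilon_t^2$ grows (see \eqref{equ_estimation_K_t_Varepsilon_t}), which is the whole point. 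A scalar scoring scheme that finishes with a row-wise argmax has no analogue of this dimension-growth mechanism, so the induction you sketch cannot close.

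Two further mismatches with the paper are worth noting. First, the time horizon: the paper stops at $t^*\sim 2\log_2\log\log n$ (so $K_{t^*}=e^{\Theta((\log\log n)^2)}$), not after $O(\log n)$ rounds; running for $\log n$ rounds with a single scalar is exactly what the SNR calculation rules out. Second, initialization: for an arbitrarily small constant $\epsilon$ there is no usable degree-profile pre-alignment (degree profiles as in \cite{DMWX21} require $\epsilon\to 1$), and the paper instead brute-forces over all $n^{K_0}$ seed choices, which is where the $\epsilon$-dependence of the exponent $C(\epsilon)$ comes from (Proposition~\ref{prop_time_complexity}). Finally, the analysis technology is different: you propose walk-combinatorics with Onsager-style corrections, whereas the paper controls cross-iteration dependence by conditioning on the Gaussian $\sigma$-algebra $\mathfrak S_{t}$ and projecting (Lemma~\ref{lemma_Gaussian_condition} and Section~\ref{sec:conditional-distributions}), which is possible here precisely because the model is Gaussian. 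The two novel difficulties there (and not present in the AMP literature you are channeling) are that (a) the correlation with the other matrix enters through the hidden permutation $\pi$ and so the corresponding projection component is not computable by the algorithm, and (b) one must propagate quantitative ``admissibility'' bounds (Definition~\ref{def-admissible}) that make the projection/conditioning calculation rigorous across all iterations.
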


\subsection{Backgrounds and related works}

Our work is closely related to the problem of matching two correlated Erd\H{o}s-R\'enyi graphs, when $\mathbf F$ is the law of a pair of Bernoulli variables with mean $ps$ and covariance $p(1-p)s^2$. Recently, the random graph matching problem has been extensively studied with important motivations from applied fields such as social network analysis \cite{NS08,NS09}, computer vision \cite{CSS07,BBM05}, computational biology \cite{SXB08,VCP15} and natural language processing \cite{HNM05}. From the collective efforts of the community  \cite{CK16, CK17, HM20, WXY20+,WXY21+, GML21, DD22+, DD22+b}, it is fair to say that up to now we have fairly complete understanding on the information thresholds for the problem of correlation detection as well as vertex matching for Erd\H{o}s-R\'enyi graph models. In what follows, we elaborate on the progress on the computational aspect, which is the main focus of the present work. 

The existing algorithms are essentially of two types, the optimization-based method that relies on ``convex relaxation and rounding'' \cite{FMWX22a, FMWX22b} and the signature-based method that relies on ``computing and comparing signatures'' \cite{PG11, YG13, LFP14, KHG15, FQRM+16, SGE17, BCL19, DMWX21, BSH19, CKMP19,DCKG19, MX20,  GM20, MRT21+, MWXY21+, GMS22+, MWXY22+}. The optimization-based method has the appeal that it directly addresses the problem of maximizing the overlap. In a couple of very impressive works \cite{FMWX22a, FMWX22b}, employing a clever spectral relaxation method and a novel probabilistic analysis, the authors obtained a polynomial time matching algorithm when the correlation approaches 1 at a rate polylog in $n$. That being said, the signature-based method seems to have been pushed much further. In \cite{BCL19}, the authors proposed a quasi-polynomial time algorithm (based on subgraph counts) which succeeds when the correlation is non-vanishing; in \cite{DMWX21}, the authors proposed a polynomial time algorithm (based on degree profile)  which succeeds when the correlation approaches 1 at a rate polylog in $n$. It is fair to say that both methods are of inspiration to future works in this line. In a later breakthrough \cite{MRT21+}, the authors found the first polynomial time algorithm (based on some sophisticated partition tree) that succeeds for exact matching with constant correlation; in a recent breakthrough \cite{MWXY22+} (see also \cite{GMS22+} for a remarkable result on partial recovery of similar flavor when the average degree is $O(1)$), the authors substantially improved \cite{MRT21+} and obtained a polynomial time algorithm which succeeds as long as the correlation is above some threshold given by the Otter's constant which is around $\sqrt{0.338}$ (their algorithm is based on a carefully curated family of rooted trees called chandeliers, and this also covers a much wider parameter regime than that in \cite{MRT21+}).

\subsection{Our contributions}

Our Theorem~\ref{thm-main} is on matching correlated Gaussian Wigner matrices, and this model is formally simpler than random graph matching since we can apply thresholding to reduce our model to a graph matching model. As in \cite{DMWX21, FMWX22a}, the assumption of Gaussianity provides a substantial technical simplification: on the one hand, we do believe that the phenomenon we reveal and the method we obtain \emph{should} apply to the random graph matching problem (with a caveat that there shall be an assumption on the lower bound of the edge density); on the other hand, we do acknowledge that it is of substantial challenge to extend our analysis to graph matching, and this is a natural future direction. With these clarified, we  wish to emphasize that our work has a number of conceptual novelties which may offer useful insights for this class of problems.
\begin{itemize}
\item While in \cite{MWXY22+} (see also \cite{GMS22+}) polynomial time matching algorithms were obtained when the correlation is above the threshold from the Otter's constant,  our work strongly suggests a polynomial time algorithm for any non-vanishing correlation, at least in the regime for fixed $p>0$. In fact, we believe a polynomial time algorithm should succeed as long as $p \geq n^{-\alpha}$ for a fixed constant $\alpha < 1$. 
\item It seems that the power for the running time in \cite{MWXY22+} tends to $\infty$ as the correlation approaches the threshold from the Otter's constant. For our algorithm the power only tends to $\infty$ as the correlation tends to $0$, and we are under the feeling that this is the best possible.
\item Our work provides an iterative algorithm which seems to have novel features. While the message-passing algorithm in \cite{GMS22+, PSSZ22} is also iterative, it is completely different from ours and in particular their algorithm exploits the local tree structure in a crucial way. In addition, an iterative greedy algorithm has been employed recently in \cite{DDG22+} to obtain a polynomial time approximation scheme for the maximal overlap between two \emph{independent} Erd\H{o}s-R\'enyi graphs. The iteration in \cite{DDG22+} is also completely different from the one used in this work; this is not surprising since in \cite{DDG22+} we are dealing with a pure optimization problem without a planted signal. In a broader context, iterative algorithms have been extensively applied, and we wish to emphasize one novel feature of our algorithm: in the usual application of an iterative algorithm, the outcomes converge to the planted truth as the iteration evolves; but in our algorithm, along our iteration we obtain signals in a vector with increasing dimensions where the signal carried at each coordinate decreases. It seems at least somewhat surprising that an iterative algorithm with signal per coordinate weakened at every step could eventually work, and it seems not obvious at all to harness the increase on the number of  coordinates for compensation.
\item Partly related to the aforementioned novel feature of our iteration and partly related to the fact that our algorithm handles a pair of correlated matrices simultaneously, the analysis of our iterative algorithm is of substantial challenge and this challenge seems to be of a rather novel type (e.g., compared to that in \cite{BM10}). We believe that our method of analysis will be of some inspirations for future works, possibly even outside the scope of graph matching problems.
\end{itemize}

\subsection{Discussions and perspectives}

Our work reiterates a number of future research directions as we discuss below.

\underline{Computational phase transition for random graph matching.} As we suggested above, it would be very interesting to extend our iterative algorithm to random graph matching for $p = n^{-\alpha}$ with $\alpha < 1$. In addition, we feel that this is tight in the following sense: as the correlation $\epsilon \to 0$, no polynomial time algorithm with a fixed power would be able to match two random graphs. One way to provide an evidence is to employ the framework of low degree polynomial as in \cite{SW22, MWXY21+}, although it seems that one needs to develop a version of low degree polynomial with suitable truncation in order to obtain the sharp phase transition. Another possibility is to use the framework of overlap gap property \cite{Garmarnik21}. Finally, we point out that for $p = \lambda/n$, it was conjectured in \cite{GML22} that the computational threshold is indeed given by the Otter's constant; this conjecture is consistent with our intuition. 

\underline{Robust algorithms.} Currently, essentially all matching algorithms are proposed for specific graph models (i.e., correlated  Erd\H{o}s-R\'enyi graphs) or at least the analysis of the algorithms crucially relies on the specific properties of the model. It would be of great importance to develop robust algorithms that would apply to a wide class of random graph models, and it would be a great success even if the proposed robust algorithms do not necessarily achieve the presumed sharp computational threshold for any specific model. For instance, one may consider a general correlated random graph model where the two graphs are independently subsampled from a mother graph. It would be really exciting if some \emph{minimal} assumptions can be posed on the mother graph under which an efficient algorithm can be developed for graph matching.   
 
\underline{Other important random graph models.} Another important direction is to understand computational phase transitions for matching other important correlated random graph models, such as the random geometric graph model \cite{WWXY22+}, the random growing graph model \cite{RS20+} and the stochastic block model \cite{RS21}. We emphasize that it is also important to propose and study correlated graph models based on important real-world and scientific problems, albeit the models do not appear to be ``canonical'' from a mathematical point of view.
 
\underline{Other matching problems.} Perhaps an even more canonical matching problem is to match correlated random vectors, where we observe two pools of random vectors and each pair of vectors under the latent matching are correlated (and one may assume that different pairs are independent of each other). In many practical problems, the correlation is only positive in a small unknown set of coordinates (see \cite{CJMNZ22+} for study on a closely related model as well as its applications on single cell problems). It is a very interesting question to obtain a computational phase transition for this model, where the parameters are naturally the number and the dimension of these vectors,  the strength of the correlation (when it is positive) and the number of coordinates with positive correlation. It is possible that our iterative algorithm would shed some light on this model too.
 
\subsection{Notations}

We record in this subsection some notation conventions.

In the rest of the paper, we assume that $\mathbf F$ is the law of a pair of standard bivariate normal variables as in Theorem~\ref{thm-main}.

Given two random variables $X,Y$ and a $\sigma$-algebra $\mathfrak{S}$, the notation $ X|{\mathfrak{S}} \overset{d}{ = } Y|{\mathfrak{S}}$ means
that for any integrable function $\phi$ and  for any bounded random variable $Z$ measurable on $\mathfrak{S}$, we have $\mathbb{E}[ \phi(X)Z ] = \mathbb{E}[ \phi(Y)Z ]$. In words, $X$ is equal in distribution to $Y$ conditioned on $\mathfrak{S}$. When $\mathfrak{S}$ is the trivial $\sigma$-field, we simply write $X \overset{d}{=} Y$.

We also need some standard notations in linear algebra. For an $m*m$ matrix $A=(a_{ij})_{m*m}$, if $A$ is symmetric we let $\varsigma_1(A) \geq \varsigma_2(A) \geq \ldots \geq \varsigma_m(A)$ be the eigenvalues of $A$. Denote by $\mathrm{rank}(A)$ the rank of the matrix $A$. We define the Hilbert-Schmidt norm (i.e., the 2-Frobenius norm), operator norm, 1-norm and $\infty$-norm of $A$ respectively by
\begin{align*}
    &\| A \|^2_{\mathrm{HS}}=\sum_{i,j} a_{ij}^2 = \mathrm{tr}(AA^{*}) = \mathrm{tr}(A^{*}A)\,,\\ 
    &\| A \|_{\mathrm{op}}= \max_{x \not = 0} \left \{  \frac{\|  Ax \|_2}{\| x \|_2} \right \}\,,\\
    &\| A  \|_1 = \max_{x \not = 0} \left \{  \frac{ \| Ax  \|_{\infty} }{ \| x \|_{\infty}  }  \right \} = \max_{1 \leq k \leq m} \left \{ \sum_{i=1}^{m} |a_{k,i}| \right  \}\,,\\
    &\| A \|_{\infty} =  \max_{x \not = 0} \left \{ \frac{ \| Ax  \|_{1} }{ \| x \|_{1}  } \right \} = \max_{1 \leq k \leq m} \left \{ \sum_{i=1}^{m} |a_{i,k}|  \right \}\,,
\end{align*}
where $\mathrm{tr}(\cdot)$ is the trace for a squared matrix. 
Note that $\| A\|_{\mathrm{op}}, \|A  \|_{1}$ and $\| A \|_{\infty}$ are the norms of $A$ regarded as an operator on different Banach spaces, i.e., on $(\mathbb{R}^m, \| \cdot \|_2),(\mathbb{R}^m, \| \cdot\|_{\infty})$ and $(\mathbb{R}^m, \| \cdot \|_1)$ respectively.

For two vectors $\gamma, \mu \in \mathbb{R}^d$, we say $|\gamma| \geq |\mu|$ if the entries satisfy $|\gamma(i)| \geq |\mu(i)|$ for $1 \leq i \leq d$; we define $|\gamma| \leq |\mu|, \gamma \leq \mu$ similarly. In addition, for $\alpha \in \mathbb R$, we write $|\gamma| \leq \alpha$ if $|\gamma(i)| \leq \alpha$ for $1 \leq i \leq d$.

We will use $\mathrm{I}_{d \times d}$ to denote the $d \times d$ identity matrix (and we drop the subscript if the dimension is clear from the context). Similarly, we denote $\mathrm{0}_{m \times d}$  the $m \times d$ zero matrix and denote $\mathsf{J}_{m \times d}$  the $m \times d$ matrix with all entries being 1. The indicator function of sets $A$ is denoted by $\mathbf{I}_{A}$. For a matrix or a vector $A$, we will use $A^*$ to denote its transpose.

\medskip
 
{\bf Acknowledgment.} We thank Zongming Ma, Yihong Wu and Jiaming Xu for extensive and stimulating discussions on random graph matching problems; we thank James Johndrow and Nancy Zhang for stimulating discussions which led to the formulation of the aforementioned problem of matching correlated random vectors; we thank Fan Yang for stimulating discussions on an early stage of the project.

\section{An iterative matching algorithm}\label{sec:algorithm-description}
We first describe the underlying heuristics of our algorithm. For $t \geq 0$ and $K_t \geq 1$ (to be specified), our wish is to iteratively construct a sequence of paired sets $( \Gamma^{(t)}_k,\Pi^{(t)}_k)_{1 \leq k \leq K_t}$ where $\Gamma^{(t)}_k \subset V$, $\Pi^{(t)}_k \subset \mathsf V$ and $|\pi(\Gamma^{(t)}_k) \cap \Pi^{(t)}_k| \geq (1+\varepsilon_t) \frac{|\Gamma^{(t)}_k| \cdot |\Pi^{(t)}_k|}{n}$, i.e., $\Gamma^{(t)}_k \times \Pi^{(t)}_k$ contains more true pairs of form $(v, \pi(v))$ than that when the two sets are sampled uniformly randomly.  

The initialization can be achieved if we have $K_0$ true pairs available to us as seeds, since we can then define $(\Gamma^{(0)}_k, \Pi^{(0)}_k)$ as in \eqref{equ_initial_set}. In fact, if indeed we have $K_0$ true pairs as seeds, then the running time of our algorithm can be reduced to $O(n^{2+o(1)})$ (see Proposition~\ref{prop_time_complexity}). In order to address the fact that we do not have seeds, we essentially just take arbitrary $K_0$ vertices from $V$ and try all possible pairings to these vertices; this increases the running time by a factor of $n^{K_0}$. 

The core challenge is on the iteration. Since each pair $(\Gamma^{(t)}_k,\Pi^{(t)}_k)$ carries some signal (i.e., $\varepsilon_t>0$ as we suppose by induction), we then hope to construct paired sets for $t+1$ by checking the total edge weights (say denote as $D_{v, k}^{(t)}$) between each $v\in V$ (respectively $\mathsf v\in \mathsf V$) and $\Gamma^{(t)}_k$ (respectively, $\Pi^{(t)}_k$). A moment of thinking convinces us that in this way $\varepsilon_t$ (i.e., the signal) will have to decrease in $t$ (see \eqref{equ_def_iter_varepsilon}). In order to address this, here comes our main (albeit simple in retrospect) observation: we may take advantage of \emph{many} linear combinations of $\{ D_{v, k}^{(t)}: 1 \leq k \leq K_t \}$ (this is why we choose $K_{t+1}$ recursively as in \eqref{equ_iter_K}) and by Proposition~\ref{prop_random_vector} these linear combinations are effectively independent of each other. In one sentence, we use the increase in the number of paired sets to compensate the loss that the signal carried in each pair decreases. As we hope, once the iteration progresses to time $t = t^*$ we would have accumulated enough total signal so that we can just complete the matching directly in the next step, as described in Section~\ref{sec:ending-iteration}. 

At this point, it seems the ``only'' remaining challenge is to control the correlation among different iterative steps. However, let us stress that this challenge has quite some novel features. A natural attempt is to employ Gaussian projections to remove the influence of conditioning on outcomes in previous steps. This is indeed very useful since all the conditioning can be expressed as conditioning on linear combinations of Gaussian variables. Although highly-nontrivial, this is possible to deal with as have done in e.g., \cite{BM10} (and this is the main reason why Gaussian model simplifies the analysis here). However, there is a new difficulty in our model since there is a latent matching which is \emph{a priori} inaccessible by our algorithm, and as a result the part of projection arising from the correlation between the two matrices cannot be subtracted from the algorithmic point of view (of course in the analysis we can still do this). We also note that in a recent work \cite{DDG22+} an iterative greedy algorithm was proposed on maximizing the overlap between two \emph{independent} Erd\H{o}s-R\'enyi graphs. Thanks to independence, the major challenge of correlation between two matrices (through the latent matching) was not present in \cite{DDG22+}; even so, the analysis of \cite{DDG22+} is difficult and delicate. We refer to Section~\ref{sec:analysis-outline} for an overview discussion on how such correlations are dealt with in the analysis of the algorithm. 

Next, we describe in detail our iterative algorithm, which consists of a few steps including preprocessing (see Section~\ref{sec:preprocessing}), initialization (see Section~\ref{sec:initialization}), iteration (see Section~\ref{sec:iteration}) and finishing (see Section~\ref{sec:ending-iteration}). We discuss in Section~\ref{sec:sampling} the random sampling procedure employed in the iteration since a resampling may be necessary (see Remark~\ref{remark-sampling}). We formally present our algorithm in Section~\ref{sec:formal-algorithm}. In Section~\ref{sec:matrix-rank} we prove a lemma regarding to the ranks of some matrices arising from our algorithm (which then ensures that our algorithm is well-defined). In Section~\ref{sec:runtime-analysis} we analyze the time complexity of the algorithm.

\subsection{Preprocessing}\label{sec:preprocessing}

In order to facilitate analysis later, we first employ some preprocessing for $G$ and $\mathsf{G}$. Sample i.i.d.\ standard (i.e., mean 0 and variance 1) Gaussian variables $\{ \Tilde{G}_{u,v}, \Tilde{\mathsf{G}}_{\mathsf{u,v}} : (u,v) \in E(V),(\mathsf{u,v}) \in \mathsf{E}(\mathsf{V})\}$. Arbitrarily assign an orientation to each edge in $E$ and $\mathsf E$, and this gives two sets of directed edges  $ \overrightarrow{E}(V)=  \{ \overrightarrow{(u,v)}:(u,v) \in E(V)  \}$ and $\overrightarrow{\mathsf{E}}(\mathsf{V})=  \{ \overrightarrow{(\mathsf{u},\mathsf{v})}:\mathsf{(u},\mathsf{v)} \in \mathsf{E(V)}  \}$. Then we define
\begin{align*}
    & \Hat{G}_{u,v}=\frac{G_{u,v}+\Tilde{G}_{u,v}}{\sqrt{2}} \mbox{ and } \Hat{G}_{v,u}= \frac{G_{u,v} - \Tilde{G}_{u,v}}{\sqrt{2}} \textup{ for } \overrightarrow{(u,v)} \in \overrightarrow{E}(V)\,,\\
    & \Hat{\mathsf{G}}_{\mathsf{u,v}}= \frac{\mathsf{G}_{\mathsf{u,v}}+\Tilde{\mathsf{G}}_{\mathsf{u,v}}}{\sqrt{2}} \mbox{ and }\Hat{\mathsf{G}}_{\mathsf{v,u}}= \frac{\mathsf{G}_{\mathsf{u,v}}-\Tilde{\mathsf{G}}_{\mathsf{u,v}}}{\sqrt{2}} \textup{ for } \overrightarrow{(\mathsf{u,v})} \in \overrightarrow{\mathsf{E}}(\mathsf{V})\,.
\end{align*}
With these modifications, instead of being a symmetric matrix, $\Hat{G}$ (respectively $\Hat{\mathsf {G}})$ is a matrix with independent entries. This is useful since for instance we now have that the sum of random weights on all outgoing edges from $u$ and from $v$ are independent. It is straightforward to verify that $\{  \Hat{G}_{u,v}  \}_{u \not = v}$ and $\{  \Hat{\mathsf{G}}_{\mathsf{u,v}}  \}_{\mathsf{u} \not = \mathsf{v}}$ are two families of i.i.d. Gaussian variables. Also, we have
\begin{align*}
    \mathrm{Cov}(\Hat{G}_{u,v},\Hat{\mathsf{G}}_{\pi(u),\pi(v)}) =\mathrm{Cov} (\Hat{G}_{u,v},\Hat{\mathsf{G}}_{\pi(v),\pi(u)})
    =\frac{\epsilon}{2}\,.  
\end{align*}
This means that, the strength of the signal is weakened, but only by a factor of 2 which is not an issue for our purpose.

\subsection{Initialization}\label{sec:initialization}
For a pair of standard bivariate normal variables $(X, Y)$ with correlation $u$, we define $\phi: [-1, 1] \mapsto [0, 1]$ by (below the number 10 is somewhat arbitrarily chosen)
\begin{align}
    \phi(u) = \mathbb{P} [|X| \geq 10 ,|Y| \geq 10]\,.
    \label{equ_def_func_phi}
\end{align}
In addition, we define $\iota = \frac{1}{2} \phi''(0)$ and we write $\alpha = \mathbb{P}[|X| \geq 10]$. Let $\kappa = \kappa(\epsilon)$ be a sufficiently large constant depending on $\epsilon$ whose exact value will be decided later in \eqref{eq-kappa-choice}. Set $K_0 = \kappa$. We then arbitrarily choose a sequence $A = (u_1, u_2, \ldots , u_{K_0})$ where $u_i$'s are distinct vertices in $V$, and list all the sequences of length $K_0$ with distinct elements in $\mathsf{V}$  as $\mathsf{A}_1, \mathsf{A}_2, \ldots , \mathsf{A}_{\mathtt{M}}$ where $\mathtt{M} = \mathtt{M}(n,\epsilon) = n (n-1) \ldots (n-K_0+1)$. As hinted earlier, for each $1 \leq \mathtt{m} \leq \mathtt{M}$, we will run a procedure of initialization and iteration and we know that for one of them (although we cannot decide which one it is \emph{a priori}) we are running an algorithm as if we have $K_0$ true pairs as seeds. For notation convenience, when describing the initialization and iteration we will drop $\mathtt m$ from notations, but we should keep in mind that this procedure is applied to each $\mathsf A_{\mathtt m}$. With this clarified, we take a fixed $\mathtt m$ and denote $\mathsf{A}_{\mathtt{m}} = \{ \mathsf{u}_1, \mathsf{u}_2, \ldots, \mathsf{u}_{K_0}\}$. In what follows, we abuse the notation and write $V \setminus A$ when regarding $A$ as a set (similarly for $\mathsf A_{\mathtt m}$). Define for $1 \leq k \leq K_0$,
\begin{equation}
    \Gamma^{(0)}_k = \{ v \in V \backslash A : |\Hat G_{v, u_k}| \geq 10\}  \mbox{ and }
    \Pi^{(0)}_k = \{ \mathsf{v} \in \mathsf{V} \backslash \mathsf{A}_{\mathtt{m}} : |\Hat{\mathsf{G}}_{\mathsf{v},\mathsf{u}_k}| \geq 10\} \,.
    \label{equ_initial_set}
\end{equation}
(In the above and in the iteration below we have used the absolute value of a Gaussian instead of a Gaussian itself, and the purpose is to introduce more symmetry in order to facilitate our analysis. For instance, this would be useful in controlling \eqref{eq-3.47-part-2} later.)
In addition, we define $\Phi^{(0)}, \Psi^{(0)}$ to be $K_0 \times K_0$ matrices by
\begin{align}
    \Phi^{(0)} = \mathrm{I} \mbox{ and }
    \Psi^{(0)} = \frac{\phi(\epsilon) - \alpha^2 }{\alpha-\alpha^2} \mathrm{I}\,,
    \label{equ_initial_matrix}
\end{align}
and in the iterative steps we will also construct $\Phi^{(t)}$ and $\Psi^{(t)}$ for $t\geq 1$.

\subsection{Iteration}\label{sec:iteration}
We emphasize again that in this subsection we are describing the iteration for a fixed $1\leq \mathtt m \leq \mathtt M$ and eventually this iterative procedure will be applied to each $\mathtt m$.
Define
\begin{align}
    \varepsilon_{0} = \frac{\phi(\epsilon) - \phi(0)}{\alpha - \alpha^2} \mbox{ and } K_{t+1} = \frac{1}{ \varkappa  } K_t^2 \mbox{ for } t \geq 0\,,
    \label{equ_iter_K}
\end{align}
where we set $\varkappa = \varkappa(\epsilon) =  \frac{10^{20} \epsilon^{-20}}{\iota^2 (\alpha - \alpha^2)^2 }$. As one may expect, we will define our iteration step in an inductive manner. Now suppose that $( \Gamma^{(s)}_k,\Pi^{(s)}_k)_{1 \leq k \leq K_s}$ has been constructed for $s \leq t$. For $v \in V \backslash A,\mathsf{v} \in \mathsf{V} \backslash \mathsf{A}_{\mathtt{m}}$,  define $D^{(t)}_v, \mathsf{D}^{(t)}_{\mathsf{v}} \in \mathbb{R}^{K_t}$ to be the ``normalized degrees'' of $v$ to $\Gamma^{(t)}_k$ and of $\mathsf{v}$ to $\Pi^{(t)}_k$ as follows:
\begin{equation}
    \begin{aligned}
        D^{(t)}_v(k) &= \frac{1}{\sqrt{(\alpha - \alpha^2)n}} \sum_{u \in V \backslash A }( \mathbf{I}_{u \in \Gamma^{(t)}_k} - \alpha) \Hat{G}_{v,u}\,, \\
        \mathsf{D}^{(t)}_{\mathsf{v}}(k) &= \frac{1}{\sqrt{(\alpha-\alpha^2)n}} \sum_{\mathsf{u} \in \mathsf{V} \backslash \mathsf{A}_{\mathtt{m}} }(\mathbf{I}_{\mathsf{u} \in \Pi^{(t)}_k} -\alpha) \Hat{\mathsf{G}}_{\mathsf{v,u}}\,.
        \label{equ_degree}
    \end{aligned}
\end{equation}
Recalling \eqref{equ_initial_matrix}, in order to further describe our iterative step we will also need to use an important property on eigenvalues for $\Phi^{(t)}$ and $\Psi^{(t)}$ (see \eqref{equ_def_iter_matrix} for their definitions), as incorporated in the next lemma.
\begin{Lemma}{\label{lemma_matrix_eigenvalue}}
Let $(\Phi^{(t)}, \Psi^{(t)})$ be initialized as in \eqref{equ_initial_matrix} and inductively defined as in \eqref{equ_def_iter_matrix}. Then, $\Phi^{(t)}$ has $\frac{3}{4}K_t$ eigenvalues between $0.9$ and $1.1$, and $\Psi^{(t)}$ has $\frac{3}{4}K_t$ eigenvalues between $ 0.9 \varepsilon_t$ and $1.1 \varepsilon_t$.
\end{Lemma}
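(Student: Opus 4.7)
The plan is a straightforward induction on $t$. The base case $t=0$ follows directly from \eqref{equ_initial_matrix}: $\Phi^{(0)}=\mathrm{I}$ has every eigenvalue equal to $1\in[0.9,1.1]$, and since $\phi(0)=\mathbb{P}[|X|\geq 10]^2=\alpha^2$ by the independence of $X,Y$ at correlation $0$, a direct substitution gives $\Psi^{(0)}=\varepsilon_0\mathrm{I}$, so all $K_0$ eigenvalues equal $\varepsilon_0$. In particular the conclusion holds at $t=0$ with the stronger statement that \emph{all} eigenvalues lie in the respective target windows.

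For the inductive step, assume the lemma at level $t$. The recursion \eqref{equ_def_iter_matrix} will express $\Phi^{(t+1)}$ and $\Psi^{(t+1)}$ through $\Phi^{(t)},\Psi^{(t)}$ and the sampling mechanism that produces $K_{t+1}=K_t^2/\varkappa$ new features out of the $K_t$ previous ones (as informally described above \eqref{equ_iter_K}). I expect this expression to factor into a deterministic algebraic operation on $\Phi^{(t)},\Psi^{(t)}$ of total degree $2$ (for instance a Kronecker or Hadamard square, producing matrices of size $K_t^2$), followed by a stochastic compression by a random $K_t^2\times K_{t+1}$ matrix $\Xi^{(t)}$ with approximately orthonormal columns. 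The proof then splits into: (a) using multiplicativity of eigenvalues under the tensor/Hadamard product to push the inductive spectral concentration from $\Phi^{(t)}$ (respectively $\Psi^{(t)}$) to the lifted matrix, possibly in a wider window and with a reduced good fraction; and (b) invoking concentration of singular values of $\Xi^{(t)}$ to pull the spectrum of the compressed matrix back to $[0.9,1.1]$ (respectively $[0.9\varepsilon_{t+1},1.1\varepsilon_{t+1}]$). The standard tools here are Weyl's perturbation inequality and the Courant--Fischer min-max characterization of eigenvalues.

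The main obstacle is the quantitative bookkeeping that ensures the target eigenvalue window does \emph{not} shrink from one level of the induction to the next. Naively, a Kronecker lift of the inductive hypothesis only delivers a $(3/4)^2 = 9/16$ fraction of eigenvalues in the enlarged window $[0.81,1.21]$, which is worse on both counts than what is needed. The compression step must therefore simultaneously tighten the window back to $[0.9,1.1]$ (via Weyl, requiring $\|\Xi^{(t)*}\Xi^{(t)}-\mathrm{I}\|_{\mathrm{op}}$ to be small) and raise the good-eigenvalue fraction above $3/4$ (via a random-subspace angle argument against the at-most-$\tfrac{7}{16}K_t^2$ bad eigen-directions of the lifted matrix). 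Both goals are possible only because the aspect ratio $\varkappa$ is large, and the explicit choice $\varkappa=10^{20}\epsilon^{-20}/(\iota^2(\alpha-\alpha^2)^2)$ in \eqref{equ_iter_K} is calibrated to close the two estimates uniformly in $t$. Tracking the $\varepsilon$-scaling through the same argument for $\Psi^{(t)}$ simultaneously produces the correct recursive update rule for $\varepsilon_{t+1}$, completing the induction.
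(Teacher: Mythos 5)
There is a genuine gap in the inductive step: the ``lift-then-compress'' structure you posit (a Kronecker or Hadamard square of $\Phi^{(t)}$ producing a $K_t^2\times K_t^2$ matrix, followed by compression by a random near-isometry $\Xi^{(t)}$) is not what the recursion \eqref{equ_def_iter_matrix} actually does. The $(i,j)$ entry of $\Phi^{(t+1)}$ is simply $\phi$ evaluated at the normalized inner product $\frac{12}{K_t}\langle\beta^{(t)}_i,\beta^{(t)}_j\rangle$ of two fresh random sign vectors in $\{\pm 1\}^{K_t/12}$, recentered by $\alpha^2$ and rescaled by $(\alpha-\alpha^2)^{-1}$. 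The matrix $\Phi^{(t)}$ does not appear in this formula at all, and $\Psi^{(t)}$ enters the formula for $\Psi^{(t+1)}$ only through scalar weights $\eta^{(t)}_j\Psi^{(t)}(\eta^{(t)}_j)^*$ that rescale the coordinates of $\hat\beta^{(t)}_k$; there is no tensor/Hadamard square of the previous-level matrix and no random $K_t^2\times K_{t+1}$ compression matrix. Consequently your plan --- multiplicativity of eigenvalues under tensor products, concentration of singular values of $\Xi^{(t)}$, Weyl plus Courant--Fischer to pull the lifted spectrum back down --- has no object to act on; step (a) cannot even be set up, and the ``tightening'' accounting in your final paragraph (recovering $3/4$ from $9/16$ via compression) never arises.

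The mechanism the proof actually needs is different in kind. The diagonal of $\Phi^{(t+1)}$ is exactly $\mathrm{I}$ because $\phi(1)=\alpha$. For the off-diagonal part one Taylor-expands $\phi(u)=\sum_k c_k u^k$ around $0$, uses that $c_1=0$ (Claim~\ref{claim_Taylor_c_1}) and that $|c_m|\leq(m+1)4^m$ (Claim~\ref{claim_Taylor_c_n}), and writes the perturbation as $\sum_{k\geq 2}\frac{c_k}{\alpha-\alpha^2}\Phi_k$ where $\Phi_k$ has off-diagonal entries $\big(\frac{12}{K_t}\langle\beta^{(t)}_i,\beta^{(t)}_j\rangle\big)^k$. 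The resampled-concentration bounds of Proposition~\ref{prop_random_vector}, which make the typical inner product of order $\sqrt{\log K_t/K_t}$ and the fourth-moment sums small, together with the size constraint $K_{t+1}=K_t^2/\varkappa$, bound the Hilbert--Schmidt norm of $\Phi_2$ and the $\infty$-norm (hence HS norm) of $\sum_{k\geq 3}$. Then Lemma~\ref{lemma_rank_perturbed} (a Chebyshev argument on eigenvalues) shows only a small fraction of eigenvalues can be large, and Lemmas~\ref{lemma_symmetric_decomposition}--\ref{lemma_sum_symmetric_eigenvalue} convert this into the assertion that $0.96 K_{t+1}$ eigenvalues lie in $(0.98,1.02)$. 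The induction is used only for well-definedness --- the $\eta^{(t)}_j$ needed to form $\hat\beta^{(t)}_k$ exist because $\Phi^{(t)},\Psi^{(t)}$ have the stated spectral structure --- not as a spectral transfer from level $t$ to level $t+1$. Your base case and the overall inductive framing are correct, but the inductive-step machinery needs to be replaced wholesale.
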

\begin{Remark}
Lemma~\ref{lemma_matrix_eigenvalue} will be proved in Section \ref{sec:matrix-rank}. The proof is by induction: once it was proved for the $t$-th step, then the iterative construction for the $(t+1)$-th step makes sense and formally it is only at this point \eqref{equ_def_iter_matrix} is well-defined. In what follows, we will ignore this subtlety since it is only an issue of formality. 
\end{Remark}
Assuming Lemma~\ref{lemma_matrix_eigenvalue}, we can then write $\Phi^{(t)}$ and $\Psi^{(t)}$ as their spectral decompositions:
\begin{equation}\label{eq-spectral-decomposition}
    \Phi^{(t)}=\sum^{K_t}_{i=1} \lambda^{(t)}_i
    \left({\nu^{(t)}_i} \right)^{*} \left(\nu^{(t)}_i \right) \mbox{ and }
    \Psi^{(t)}= \sum_{i=1}^{K_t} \mu^{(t)}_i \left({\xi^{(t)}_i} \right)^{*} \left(\xi^{(t)}_i \right)
\end{equation}
where 
\begin{equation}\label{eq-lambda-mu-bound}
\lambda^{(t)}_i \in (0.9,1.1), \mu^{(t)}_i \in ( 0.9 \varepsilon_t, 1.1 \varepsilon_t ) \mbox{ for } 1\leq  i \leq \frac{3K_t}{4}
\end{equation} and $\nu_i,\xi_i$ are the unit eigenvectors with respect to $\lambda_i,\mu_i$ respectively. In addition, for $s, t$ we define
$\mathrm{M}_{\Gamma}^{(t,s)},\mathrm{M}_{\Pi}^{(t,s)},\mathrm{P}_{\Gamma,\Pi}^{(t,s)}$ to be  $K_t*K_s$ matrices by
\begin{equation}
    \begin{aligned}
        \mathrm{M}_{\Gamma}^{(t,s)}(i,j) & = \frac{  |\Gamma^{(t)}_i \cap \Gamma^{(s)}_j | - \alpha |\Gamma^{(t)}_i | - \alpha |\Gamma^{(s)}_j | + \alpha^2  n }{ (\alpha - \alpha^2) n} \,,\\
        \mathrm{M}_{\Pi}^{(t,s)} (i,j)  & = \frac{ |\Pi^{(t)}_i \cap \Pi^{(s)}_j| - \alpha |\Pi^{(t)}_i| - \alpha |\Pi^{(s)}_j| + \alpha^2 n} {(\alpha - \alpha^2) n}   \,, \\
        \mathrm{P}_{\Gamma,\Pi}^{(t,s)}(i,j) & = \frac{ |\pi(\Gamma^{(t)}_i) \cap \Pi^{(s)}_j| - \alpha |\Gamma^{(t)}_i| - \alpha |\Pi^{(s)}_j| + \alpha^2 n }{(\alpha - \alpha^2) n}\,.
        \label{equ_martix_M_P}
    \end{aligned}
\end{equation}
Note that $\mathrm{M}_{\Gamma},\mathrm{M}_{\Pi}$ are accessible by the algorithm but $\mathrm{P}_{\Gamma,\Pi}$ is not (since it relies on the latent matching). We further define two linear subspaces as follows:
\begin{equation}
    \begin{aligned}
        \mathrm{W}^{(t)} & \overset{\triangle}{=}  \left \{ x \in \mathbb{R}^{K_t} :
        x \mathrm{M}_{\Gamma}^{(t,s)} = 0, 
        x \mathrm{M}_{\Pi}^{(t,s)} = 0,  \mbox{ for all } s<t \right\} \,, \\
        \mathrm{V}^{(t)} & \overset{\triangle}{=}  \mathrm{span} \left \{ \nu_1,\nu_2,\ldots,\nu_{\frac{3}{4}K_t} \right\}  \cap \mathrm{span} \left\{  \xi_1,\xi_2,\ldots,\xi_{\frac{3}{4}K_t}   \right\}  \cap \mathrm{W}^{(t)} \,.
        \label{equ_linear_space}
    \end{aligned}
\end{equation}
We refer to Remark~\ref{remark-3.5} for underlying reasons of the definition above. 
Note that the number of linear restrictions posed on $\mathrm{W}^{(t)}$ are at most $ 2\sum_{i=1}^{t}K_{i-1} < 3K_{t-1}$, so $\dim ( \mathrm{V}^{(t)}) \geq \frac{1}{2}K_t - 3K_{t-1}$. We claim we can choose $\eta^{(t)}_1,\eta^{(t)}_2,\ldots,\eta^{(t)}_{\frac{1}{12}K_t}$ from $\mathrm{V}^{(t)}$ such that
\begin{align}
    & \eta^{(t)}_i \mathrm{M}_{\Gamma}^{(t,t)} \left(\eta^{(t)}_j \right)^{*}
    =\eta^{(t)}_i \mathrm{M}_{\Pi}^{(t,t)} \left(\eta^{(t)}_j \right)^{*}
    =\eta^{(t)}_i \Psi^{(t)} \left(\eta^{(t)}_j \right)^{*} =0 \,, \label{equ_vector_orthogonal}
    \\
    & \eta^{(t)}_i \Phi^{(t)} \left(\eta^{(t)}_i \right)^{*} =1, \quad  2 \varepsilon_t \geq   \eta^{(t)}_i \Psi^{(t)} \left(\eta^{(t)}_i \right)^{*} \geq 0.5 \varepsilon_t\,. \label{equ_vector_unit}
\end{align}
We first verify that \eqref{equ_vector_orthogonal} can be satisfied. To this end, we may choose an arbitrary $\eta^{(t)}_1 \in \mathrm{V}^{(t)}$. Supposing we have chosen valid $\eta^{(t)}_1,\eta^{(t)}_2,\ldots, \eta^{(t)}_k \in \mathrm{V}^{(t)}$, we will show that there is a valid choice for $\eta^{(t)}_{k+1}$ as long as $k < \frac{1}{12} K_t$ (and this completes the verification of \eqref{equ_vector_orthogonal}). Under the assumption of $k < \frac{1}{12} K_t$, the orthogonal space of $\{ \eta^{(t)}_1,\eta^{(t)}_2, \ldots, \eta^{(t)}_k \}$ with respect to $\mathrm{M}^{(t,t)}_{\Gamma}, \mathrm{M}^{(t,t)}_{\Pi}$ and $\Psi^{(t)}$ has dimension $K_t - 3k> \frac{3}{4}K_t$, and thus has a non-empty intersection with $\mathrm{V}^{(t)}$ which results in a choice of $\eta^{(t)}_{k+1}$ satisfying \eqref{equ_vector_orthogonal}. We next verify that we can simultaneously satisfy \eqref{equ_vector_unit}. Since $\eta^{(t)}_i \in \mathrm{span} \{ \nu_1,\nu_2,\ldots,\nu_{\frac{3}{4}K_t} \}  \cap \mathrm{span} \{  \xi_1,\xi_2,\ldots, \xi_{\frac{3}{4}K_t}\}$,
we can write
\begin{align*}
    \eta^{(t)}_i = \sum_{i=1}^{\frac{3}{4}K_t} x_i \nu_i
    = \sum_{j=1}^{\frac{3}{4}K_t}y_i \xi_i \mbox{ for } x_i, y_i \in \mathbb R\,.
\end{align*}
By the orthogonality of $\{  \xi_i \}$ and $\{ \nu_i \}$, we then have that
\begin{align*}
    \eta^{(t)}_i \Phi^{(t)} \Big(\eta^{(t)}_i \Big)^{*}
    & = \Big( \sum_{i=1}^{\frac{3}{4}K_t} x_i \nu^{(t)}_i \Big)
    \Big(\sum_{i=1}^{K_t} \lambda_i \Big(\nu^{(t)}_i \Big)^{*} \nu^{(t)}_i  \Big)
    \Big(\sum_{i=1}^{\frac{3}{4}K_t} x_i \Big(\nu^{(t)}_i \Big)^{*}  \Big)
    = \sum_{i=1}^{\frac{3}{4}K_t} \lambda_i x_i^2  \,, \\
    \eta^{(t)}_i \Psi^{(t)} \Big(\eta^{(t)}_i \Big)^{*}
    & = \Big( \sum_{i=1}^{\frac{3}{4}K_t} y_i \xi^{(t)}_i \Big)
    \Big(\sum_{i=1}^{K_t} \mu_i \Big(\xi^{(t)}_i \Big)^{*} \xi^{(t)}_i  \Big)
    \Big(\sum_{i=1}^{\frac{3}{4}K_t} x_i \Big(\xi^{(t)}_i  \Big)^{*} \Big)
    = \sum_{i=1}^{\frac{3}{4}K_t} \mu_i y_i^2\,.
\end{align*}
Since $\| \eta^{(t)}_i  \|_2^2 = \sum_{i=1}^{\frac{3}{4}K_t} x_i^2 = \sum_{i=1}^{\frac{3}{4}K_t} y_i^2$, recalling \eqref{eq-lambda-mu-bound} we can satisfy \eqref{equ_vector_unit} by properly choosing the norms of these $\eta$'s. Furthermore, we must have $\| \eta^{(t)}_i \| \in (\frac{1}{2},2)$. Define
\begin{align}
    \varepsilon_{t+1} = \frac{\iota}{(\alpha- \alpha^2) } \Big( \frac{\epsilon}{2} \frac{12}{K_t} \sum_{j=1}^{\frac{K_t}{12}} \eta^{(t)}_j \Psi^{(t)} \left( \eta^{(t)}_j \right)^{*}   \Big)^2\,.
    \label{equ_def_iter_varepsilon}
\end{align}
By \eqref{equ_vector_unit}, we have that
\begin{equation}\label{equ_epsilon_t_bound}
\varepsilon_{t+1} \in \big[ \frac{\epsilon^2 \iota}{4(\alpha-\alpha^2)}  (0.5 \varepsilon_t)^2, \frac{\epsilon^2 \iota}{4(\alpha-\alpha^2)}  (2 \varepsilon_t)^2 \big]\,.
\end{equation}
Next, we sample $\beta^{(t)}_k(j)$ as i.i.d.\ uniform variables on $\{-1, 1\}$ (see Remark~\ref{remark-sampling} for a minor modification for this) and we define
\begin{align}
    \sigma_k^{(t)}=\frac{1}{\sqrt{\frac{1}{12}K_t}} \sum_{j=1}^{\frac{1}{12}K_t} \beta^{(t)}_k(j)  \eta_j^{(t)} \mbox{ for } k=1,2,\ldots,K_{t+1}\,.
    \label{equ_def_sigma}
\end{align}
Write $\Hat{\beta}^{(t)}_k(j)= \sqrt{\eta^{(t)}_j \Psi^{(t)} {( \eta^{(t)}_j )}^{*}} \beta^{(t)}_k(j)$ and define $\Phi^{(t+1)},\Psi^{(t+1)}$ to be $K_{t+1}*K_{t+1}$ matrices such that
\begin{equation}
    \begin{aligned}
    & \Phi^{(t+1)}(i,j) = (\alpha-\alpha^2)^{-1} \left \{  \phi \left( \frac{12}{K_t} \langle {\beta}^{(t)}_i,{\beta}^{(t)}_j \rangle \right) - \alpha^2 \right \}  \,,  \\
    & \Psi^{(t+1)}(i,j)= (\alpha-\alpha^2)^{-1} \left \{ \phi \left(\frac{\epsilon}{2} \frac{12}{K_t} \langle \Hat{\beta}^{(t)}_i , \Hat{\beta}^{(t)}_j \rangle \right) - \alpha^2    \right \}\,.
    \label{equ_def_iter_matrix}
    \end{aligned}
\end{equation}
Later we will show that $\frac{12}{K_t} \langle {\beta}^{(t)}_i,{\beta}^{(t)}_j \rangle$ is the ``typical'' correlation between $\langle \sigma^{(t)}_i, D^{(t)}_v \rangle $ and $\langle \sigma^{(t)}_j, D^{(t)}_v \rangle$, and that $\frac{\epsilon}{2} \frac{12}{K_t} \langle \Hat{\beta}^{(t)}_i , \Hat{\beta}^{(t)}_j \rangle$ is the ``typical'' correlation between $\langle \sigma^{(t)}_i, D^{(t)}_v \rangle $ and $\langle \sigma^{(t)}_j, \mathsf{D}^{(t)}_{\pi(v)} \rangle$. Thus, we can expect $\mathrm{M}^{(t+1,t+1)}_{\Gamma}, \mathrm{M}^{(t+1,t+1)}_{\Pi}$ and $\mathrm{P}_{\Gamma,\Pi}^{(t+1,t+1)}$ to concentrate around $\Phi^{(t+1)}$, $\Phi^{(t+1)}$ and $\Psi^{(t+1)}$ respectively. Finally, we complete our iteration by setting
\begin{equation}
         \Gamma^{(t+1)}_k= \{  v \in V \backslash A :   | \langle \sigma^{(t)}_k,D^{(t)}_v\rangle |      \geq 10  \}\,; \,\Pi^{(t+1)}_k=  \{  \mathsf{v} \in \mathsf{V} \backslash \mathsf{A}_{\mathtt{m}} :  | \langle \sigma^{(t)}_k,\mathsf{D}^{(t)}_{\mathsf{v}} \rangle | \geq 10  \}\,.
        \label{equ_def_iter_sets}
\end{equation}

\subsection{Finishing}\label{sec:ending-iteration}

In this subsection we describe how we find the matching once we accumulate enough signal along the iteration. To this end, define
\begin{equation*}
    t^{*}=\min \{  t \geq 0: K_t \geq e^{ (\log \log n)^2 }\}\,.
\end{equation*}
By \eqref{equ_iter_K}, we have $K_t = (K_0^{2^t})/ (\varkappa^{2^t-1})$ and thus $t^* \sim 2 \log_2 \log \log n$. So we have that $K_{t^*} \leq e^{ 2 (\log \log n)^2 } \ll n^{0.0001}$. Recalling \eqref{equ_epsilon_t_bound}, we have $\varepsilon_t \geq (\frac{\iota \epsilon^2}{16(\alpha - \alpha^2)})^{2^t-1} \varepsilon_0^{2^t}$, so we may choose 
\begin{equation}\label{eq-kappa-choice}
K_0 = \kappa = \kappa (\epsilon) \geq 1000 (\alpha - \alpha^2)^2 \iota^{-2} \epsilon^{-4} \varepsilon_{0}^{-2} \varkappa
\end{equation} 
such that for an absolute constant $c>0$ (using $t^* \sim 2 \log_2 \log \log n$ again) 
\begin{align}
    K_{t^*} \varepsilon_{t^*}^2 \geq \Big( \frac{ K_0 \iota^2 \epsilon^4 \varepsilon^2_0 }{ 256 (\alpha-\alpha^2)^2 \varkappa } \Big)^{2^{t^*}} = \exp \{ c (\log \log n)^2 \}\,.
    \label{equ_estimation_K_t_Varepsilon_t}
\end{align}
For each $1 \leq \mathtt m \leq \mathtt M$, we run the procedure of initialization and then run the iteration up to time $t^{*}$, and then we construct a permutation $\pi_{\mathtt{m}}$ (with respect to  $\mathsf{A}_{\mathtt{m}}$) as follows. For $A= \{ u_1, \ldots, u_{K_0}  \}$ and $\mathsf{A}_{\mathtt{m}} = \{ \mathsf{u}_1 , \ldots, \mathsf{u}_{K_0}  \}$, set $\pi_{\mathtt{m}}(u_j) = \mathsf{u}_j$ for $1 \leq j \leq K_0$. For each $v \in V\setminus A$, we check $\mathsf v\in \mathsf V\setminus \mathsf A_{\mathtt m}$ in a prefixed ordering until we encounter the first $\mathsf v$ such that  
\begin{equation}
    \sum_{k=1}^{\frac{1}{12}K_{t^*}} \langle \eta^{(t^*)}_k, D^{(t^*)}_v \rangle \langle \eta^{(t^*)}_k, \mathsf{D}^{(t^*)}_{\mathsf{v}} \rangle
    \geq \frac{1}{100} K_{t^{*}} \varepsilon_{t^{*}}\,.
    \label{equ_def_matching_ver}
\end{equation}
If a desired $\mathsf v$ is found, then we set $\pi_{\mathtt m}(v) = \mathsf v$; otherwise, we set $\pi_{\mathtt m} = \emptyset$ which amounts to declaring failure for this corresponding $\mathsf A_{\mathtt m}$. 
Finally, we set 
\begin{align}
    \Hat{\pi} = \arg \max_{ \pi_{\mathtt{m}}: \pi_{\mathtt m} \neq \emptyset}  \{  \sum_{(u,v) \in E(V)} G_{u,v} \mathsf{G}_{\pi_{\mathtt{m}}(u),\pi_{\mathtt{m}}(v)} \}\,.
    \label{equ_def_final_pi_hat}
\end{align}
The success of of our algorithm is then guaranteed by the following theorem.
\begin{Theorem}{\label{main-thm}}
   With probability tending to 1 as $n\to \infty$, we have that $\Hat{\pi} = \pi$.
\end{Theorem}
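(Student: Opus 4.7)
The plan is to run the algorithm for the ``correct'' trial index and verify three things in turn: the iteration produces precise concentration statements, the finishing step recovers $\pi$ for that trial, and the arg max in \eqref{equ_def_final_pi_hat} picks out this trial among the $\mathtt{M}=n(n-1)\cdots(n-K_0+1)$ candidates. Let $\mathtt m^{*}$ be the unique index with $\mathsf A_{\mathtt m^{*}}=(\pi(u_1),\ldots,\pi(u_{K_0}))$, so that for $\mathtt m=\mathtt m^{*}$ the initialization \eqref{equ_initial_set} is built from genuine pairs, whereas for every other $\mathtt m$ the seeds are wrong and one expects no real signal to accumulate.

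The main step is to prove, by induction on $t\leq t^{*}$, a joint statement for $\mathtt m=\mathtt m^{*}$: with probability $1-o(1)$ uniformly over $t$, (a) Lemma~\ref{lemma_matrix_eigenvalue} holds, so $\varepsilon_{t+1}$ in \eqref{equ_def_iter_varepsilon} and the subsequent iteration step are well-defined; (b) the accessible $\mathrm{M}^{(t,s)}_\Gamma,\mathrm{M}^{(t,s)}_\Pi$ and the inaccessible $\mathrm P^{(t,s)}_{\Gamma,\Pi}$ concentrate, in an appropriate norm, around their deterministic targets ($\Phi^{(t)},\Phi^{(t)},\Psi^{(t)}$ when $s=t$, and a suitable off-diagonal analogue when $s<t$); and (c) for each fixed $v\in V\setminus A$ the pair $(\langle\sigma^{(t)}_k,D^{(t)}_v\rangle,\langle\sigma^{(t)}_k,\mathsf D^{(t)}_{\pi(v)}\rangle)$ is approximately jointly Gaussian with the covariance structure indicated right after \eqref{equ_def_iter_matrix}. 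The key tool I expect to use is a Gaussian projection/decomposition in the spirit of \cite{BM10} applied jointly to $\hat G$ and $\hat{\mathsf G}$: at each step one writes each entry as its conditional expectation given the algorithm's history (encoded through the previous $D^{(s)}$'s and the signs $\beta^{(s)}$) plus an independent Gaussian residual. The constraints defining $\mathrm W^{(t)}$ in \eqref{equ_linear_space} are precisely tailored so that vectors $\eta\in \mathrm V^{(t)}$ annihilate the past conditional contributions, while the random-sign spreading \eqref{equ_def_sigma} redistributes the per-coordinate signal across the $K_{t+1}=K_t^2/\varkappa$ new coordinates, which explains both the recursion \eqref{equ_iter_K} and the signal loss \eqref{equ_epsilon_t_bound}.

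Given the inductive conclusions at $t=t^{*}$, the finishing analysis is short. For $\mathtt m=\mathtt m^{*}$ and $\mathsf v=\pi(v)$, the sum in \eqref{equ_def_matching_ver} has mean of order $\tfrac{1}{12}K_{t^{*}}\varepsilon_{t^{*}}$ (using \eqref{equ_def_iter_varepsilon} and \eqref{equ_vector_unit}) with sub-Gaussian fluctuation $O(\sqrt{K_{t^{*}}}\varepsilon_{t^{*}})$; for $\mathsf v\ne\pi(v)$ the two inner products are effectively independent, giving mean $O(1)$ and fluctuation $O(\sqrt{K_{t^{*}}})$. By the choice \eqref{eq-kappa-choice} and \eqref{equ_estimation_K_t_Varepsilon_t}, these scales separate by a factor $\exp\{\Omega((\log\log n)^2)\}$, so a union bound over the $O(n^2)$ pairs $(v,\mathsf v)$ yields $\pi_{\mathtt m^{*}}=\pi$ with high probability. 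For the arg max, the true matching gives $\sum_{(u,v)}G_{u,v}\mathsf G_{\pi(u),\pi(v)}$ with mean $\epsilon\binom{n}{2}$ and fluctuation $O(n)$, while any non-empty $\pi_{\mathtt m}$ disagreeing with $\pi$ on $\delta n$ vertices has mean smaller by $\Omega(\epsilon\delta n^2)$. Since $\mathtt M\leq n^{K_0}$ is polynomial, a union bound over trials together with the (much simpler) fact that wrong seeds produce $\pi_{\mathtt m}$'s that are either empty or agree with $\pi$ on $o(n)$ vertices then gives $\hat\pi=\pi$.

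The hard part will be the inductive step (a)--(c). The novel conceptual difficulty, emphasized in Section~\ref{sec:algorithm-description}, is that the latent $\pi$ couples $\hat G$ and $\hat{\mathsf G}$ yet is invisible to the algorithm, so the projection bookkeeping must track cross-conditional corrections that cannot be subtracted algorithmically. Moreover, one must show that correlations of size $\sim\varepsilon_t$ introduced at step $t$ do not interfere with the weaker signal $\varepsilon_{t+1}$ generated at step $t+1$, so that the dimensional boost $K_t\to K_t^2/\varkappa$ genuinely compensates the per-coordinate signal loss; the sharp quantitative choice of $\varkappa$ in \eqref{equ_iter_K} reflects this balance. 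The remaining ingredients --- Hanson--Wright-type estimates for the bilinear forms appearing in \eqref{equ_def_matching_ver}, and the comparison of objective values across different matchings in the arg max --- are relatively standard given Gaussianity.
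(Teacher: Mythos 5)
Your outline of the iteration analysis for the good trial --- inductive concentration of the admissibility-type quantities, Gaussian projection to decouple steps, and the finishing large-deviation bound --- mirrors the paper's route (Definition~\ref{def-admissible}, Proposition~\ref{prop_admissible}, Remark~\ref{remark_projection_form}, Section~\ref{sec-proof-main-prop}), and the plan is sound in spirit even though the inductive step is of course where nearly all of the work lives.

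The gap is in how you argue that the arg max~\eqref{equ_def_final_pi_hat} picks out the good trial. The paper closes this by invoking \cite[Theorem 1]{WXY21+}: with probability $1-o(1)$ the objective $\pi' \mapsto \sum_{(u,v)\in E(V)} G_{u,v}\mathsf G_{\pi'(u),\pi'(v)}$ has a \emph{unique} maximizer over all permutations, and it equals $\pi$; since Proposition~\ref{main-prop} puts $\pi$ in the candidate set, the arg max must return $\pi$ with no need to understand what the bad trials output. You instead propose ``a union bound over trials together with the (much simpler) fact that wrong seeds produce $\pi_{\mathtt m}$'s that are either empty or agree with $\pi$ on $o(n)$ vertices.'' Two problems here. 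First, a union bound over the $\mathtt M$ trials is not legitimate as stated: each $\pi_{\mathtt m}$ is a data-dependent map, so the Gaussian tail you would prove for a \emph{fixed} permutation cannot be applied to $\pi_{\mathtt m}$ directly --- you would need to union bound over all permutations that $\pi_{\mathtt m}$ could possibly equal, and that is precisely the content of the MLE uniqueness statement you did not cite. Second, the ``much simpler'' wrong-seed claim is neither necessary (once you union bound over permutations correctly you need nothing about bad trials beyond the fact that they output maps) nor actually simple: ruling out a wrong-seeded run that happens to accumulate a spurious signal passing the test~\eqref{equ_def_matching_ver} at every $v$ would require its own admissibility-style analysis of the iteration under bad seeds, which the paper's route deliberately avoids. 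The clean fix is just to invoke the known uniqueness-of-MLE result as the paper does.
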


\subsection{On the random sampling}\label{sec:sampling}
In the iterative steps we have sampled random vectors and we wish they generate enough cancellations for calculations later. This is incorporated in the following proposition.
\begin{Proposition}{\label{prop_random_vector}}
    In the $t$-th step of the iteration, with probability at least $0.5$ the random samples $\{ \beta^{(t)}_k (j) : 1 \leq k \leq K_{t+1}, 1 \leq j \leq \frac{1}{12}K_t \}$ we draw satisfy the following:
    \begin{align}
        &\frac{12}{K_t} \langle \beta^{(t)}_k, \beta^{(t)}_l \rangle \leq \frac{ 24 \sqrt{\log K_t}}{ \sqrt{K_t} } \mbox{ for } 1 \leq k < l \leq K_{t+1}\,,
        \label{equ_averaging_1} \\
        &\frac{12}{K_t} \langle \Hat{\beta}^{(t)}_k, \Hat{\beta}^{(t)}_l \rangle \leq \frac{24 \varepsilon_t \sqrt{\log K_t}}{ \sqrt{K_t} }, \mbox{ for } 1 \leq k < l \leq K_{t+1}\,,
        \label{equ_averaging_2} \\
        & \sum_{1\leq k \not = l \leq K_{t+1}} \left( \frac{12}{K_t} \langle \beta^{(t)}_k , \beta^{(t)}_l \rangle \right)^4 \leq  10000 \frac{K_{t+1}^2}{K^2_t}\,,
        \label{equ_sum_averaging_1}  \\
        & \sum_{1\leq k \not = l \leq K_{t+1}} \left( \frac{12}{K_t} \langle \Hat{\beta}^{(t)}_k , \Hat{\beta}^{(t)}_l \rangle \right)^4 \leq    100000  \varepsilon_t^{4} \frac{K_{t+1}^2}{K^2_t}\,.
        \label{equ_sum_averaging_2}
    \end{align}
\end{Proposition}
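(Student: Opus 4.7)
The plan is to handle the four bounds individually by elementary concentration arguments and then combine them with a union bound; since the target probability $1/2$ is generous, each of the four events only needs to fail with probability well below $1/8$. The conditioning structure is clean: the fresh signs $\{\beta^{(t)}_k(j)\}$ are independent of everything constructed in previous iterations, and in particular independent of $\Psi^{(t)}$ and of the $\eta^{(t)}_j$'s. Thus we may condition on $\Psi^{(t)}$ throughout and treat the weights $c_j := \sqrt{\eta^{(t)}_j \Psi^{(t)} (\eta^{(t)}_j)^{*}}$ as deterministic, where by \eqref{equ_vector_unit} they satisfy $c_j \in [\sqrt{0.5\varepsilon_t},\sqrt{2\varepsilon_t}]$.

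For \eqref{equ_averaging_1}, fix $k \neq l$ and write $\langle \beta^{(t)}_k, \beta^{(t)}_l \rangle = \sum_{j=1}^{K_t/12} \beta^{(t)}_k(j)\beta^{(t)}_l(j)$ as a sum of $K_t/12$ i.i.d.\ Rademacher variables. Hoeffding's inequality gives $\mathbb{P}(|\langle\beta^{(t)}_k,\beta^{(t)}_l\rangle| \geq 2\sqrt{K_t \log K_t}) \leq 2 K_t^{-24}$, and union bounding over $\binom{K_{t+1}}{2} \leq K_t^4$ pairs yields total failure probability $O(K_t^{-20})$. For \eqref{equ_averaging_2}, after conditioning we have $\langle \hat\beta^{(t)}_k,\hat\beta^{(t)}_l\rangle = \sum_j c_j^2\,\beta^{(t)}_k(j)\beta^{(t)}_l(j)$ with each summand bounded by $2\varepsilon_t$; Hoeffding gives $\mathbb{P}(|\langle\hat\beta^{(t)}_k,\hat\beta^{(t)}_l\rangle| \geq 2\varepsilon_t\sqrt{K_t \log K_t}) \leq 2 K_t^{-6}$, and the same union bound closes the argument.

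For \eqref{equ_sum_averaging_1} and \eqref{equ_sum_averaging_2}, use direct fourth-moment calculations followed by Markov's inequality. For $k \neq l$, the standard Rademacher moment computation (keeping only contributions where each index appears an even number of times) gives $\mathbb{E}[(\langle\beta^{(t)}_k,\beta^{(t)}_l\rangle)^4] = 3m(m-1)+m \leq 3m^2$ with $m = K_t/12$, hence $\mathbb{E}[(\tfrac{12}{K_t}\langle\beta^{(t)}_k,\beta^{(t)}_l\rangle)^4] \leq 432/K_t^2$. Summing over $\leq K_{t+1}^2$ pairs and applying Markov bounds the failure probability for \eqref{equ_sum_averaging_1} by $432/10000 < 0.05$. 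For \eqref{equ_sum_averaging_2}, the analogous expansion (conditional on $\Psi^{(t)}$) gives
\[
\mathbb{E}\bigl[(\langle\hat\beta^{(t)}_k,\hat\beta^{(t)}_l\rangle)^4\bigr] = \sum_j c_j^8 + 3 \sum_{j \neq j'} c_j^4 c_{j'}^4 \leq 3 \Bigl( \sum_j c_j^4 \Bigr)^2\,,
\]
and using $\sum_j c_j^4 \leq (\max_j c_j^2)\sum_j c_j^2 \leq 2\varepsilon_t \cdot 2\varepsilon_t \cdot K_t/12 = \varepsilon_t^2 K_t/3$ yields, after scaling by $(12/K_t)^4$ and summing over pairs, expectation at most $6912\,\varepsilon_t^4 K_{t+1}^2/K_t^2$; Markov then gives failure probability below $6912/100000 < 0.07$.

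The only obstacle is bookkeeping on constants: the prefactors $24$, $10000$ and $100000$ in the statement are chosen generously to absorb the Hoeffding constants and the Markov ratios, and the four resulting failure probabilities sum to well under $1/2$. No complication arises from the dependence between pairs $(k,l)$ sharing a common index, since the tail bounds only use marginal Hoeffding combined with a union bound, while the fourth-moment bounds only require linearity of expectation.
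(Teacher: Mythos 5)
Your proof is correct and takes essentially the same route as the paper: Hoeffding/Azuma-Hoeffding for the pairwise bounds \eqref{equ_averaging_1}--\eqref{equ_averaging_2}, a fourth-moment computation plus Markov's inequality for the aggregate bounds \eqref{equ_sum_averaging_1}--\eqref{equ_sum_averaging_2}, and a union bound to combine. Your version is a bit more explicit than the paper's (the Rademacher fourth-moment count $3m(m-1)+m$ and the weighted analogue via $\sum_j c_j^8 + 3\sum_{j\neq j'}c_j^4 c_{j'}^4 \leq 3(\sum_j c_j^4)^2$ are spelled out where the paper simply asserts $\leq 8(K_t/12)^2$), and you correctly note the clean conditioning: the $\beta^{(t)}$'s are drawn after $\Psi^{(t)}$ and the $\eta^{(t)}_j$ are fixed, so the weights $c_j^2 = \eta^{(t)}_j\Psi^{(t)}(\eta^{(t)}_j)^* \in [0.5\varepsilon_t, 2\varepsilon_t]$ may be treated as deterministic. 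The constants you obtain are tighter than the paper's; both comfortably sum to under $1/2$.
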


\begin{Remark}\label{remark-sampling}
    Since  $\Gamma^{(t)}_k,\Pi^{(t)}_k$ and $\Phi^{(t)},\Psi^{(t)}$ are accessible by our algorithm, we can resample $\beta$'s if any of the conditions in \eqref{equ_averaging_1}, \eqref{equ_averaging_2}, \eqref{equ_sum_averaging_1} or \eqref{equ_sum_averaging_2} is not satisfied. This will increase the sampling complexity by a constant factor thanks to Proposition \ref{prop_random_vector}. For this reason in what follows, we assume that we have performed resampling until all these conditions are satisfied. 
\end{Remark}
\begin{proof}[Proof of Proposition~\ref{prop_random_vector}]
We will upper-bound the probability for each of the desired conditions to be violated. For \eqref{equ_averaging_1}, note that $\sum_{j} \beta_k^{(t)}(j) \beta_l^{(t)}(j)$ is a sum of $\frac{1}{12}K_t$ i.i.d.\ random signs. By Azuma-Hoeffding inequality, we have
\begin{align*}
    \mathbb{P} \Big[\frac{12}{K_t} \sum_{n=1}^{\frac{1}{12}{K_t}}  \beta_k^{(t)} (j) \beta_l^{(t)} (j) > \frac{24 \sqrt{\log K_t}}{\sqrt{K_t}} \Big] & \leq
    2 \exp \Big \{ \frac{- (\frac{1}{12} K_t \cdot \frac{24 \sqrt{\log K_t}}{\sqrt{K_t}} )^2}{2 \cdot \frac{1}{12}K_t} \Big\}  \\
    & \leq 2 \exp \{  - 24 \log K_t  \} \leq K_{t+1}^{-4}\,.
\end{align*}
Also, a similar bound for \eqref{equ_averaging_2} can be derived by recalling \eqref{equ_vector_unit}. 

In addition, note that 
\begin{align*}
    & \mathbb{E} \Big \{ \sum_{1\leq k \not = l \leq K_{t+1}} \Big( \frac{12}{K_t} \langle \beta^{(t)}_k , \beta^{(t)}_l \rangle \Big)^4  \Big \} = \frac{12^4}{K_t^4} \sum_{1\leq k \not = l \leq K_{t+1}} \mathbb{E} \Big\{  \Big(  \langle \beta^{(t)}_k , \beta^{(t)}_l \rangle \Big)^4  \Big\} \\
    \leq & \frac{12^4 K_{t+1}^2}{K_t^4} \mathbb{E} \Big \{ \Big( \sum_{j=1}^{\frac{K_t}{12}} \beta^{(t)}_k(j) \beta^{(t)}_l(j) \Big)^4 \Big\} \leq \frac{12^4 K_{t+1}^2}{K_t^4} \cdot 8 \Big( \frac{K_t}{12} \Big)^2 \leq \frac{2000K_{t+1}^2}{K_t^2}\,.
\end{align*}
Thus by Markov's inequality the probability for \eqref{equ_sum_averaging_1} to fail is at most $0.2$.
Similarly,
\begin{align*}
    & \mathbb{E} \Big \{ \sum_{1\leq k \not = l \leq K_{t+1}} \Big(\frac{12}{K_t} \langle \Hat{\beta}^{(t)}_k , \Hat{\beta}^{(t)}_l \rangle \Big)^4  \Big\} = \frac{12^4}{K_t^4} \sum_{1\leq k \not = l \leq K_{t+1}} \mathbb{E} \Big\{  \Big(\langle \Hat{\beta}^{(t)}_k , \Hat{\beta}^{(t)}_l \rangle \Big)^4  \Big\} \\
    \leq & \frac{12^4 K_{t+1}^2}{K_t^4} \mathbb{E} \Big \{ \Big( \sum_{j=1}^{\frac{K_t}{12}} \eta^{(t)}_j \Psi^{(t)} \Big( \eta^{(t)}_j \Big)^{*} \beta^{(t)}_k(j) \beta^{(t)}_l(j) \Big)^4 \Big\} \leq \frac{20000 K_{t+1}^2 \varepsilon_{t}^{4} }{K_t^2}\,,
\end{align*}
where in the last inequality we recalled \eqref{equ_vector_unit}. So by Markov's inequality \eqref{equ_sum_averaging_2} fails with probability at most 0.2. The desired result then follows by a simple union bound.
\end{proof}

\subsection{Formal description of the algorithm} \label{sec:formal-algorithm}
We are now ready to present our algorithm formally. 
\begin{breakablealgorithm}
			\label{algo:matching}
			\caption{Gaussian Matrix Matching Algorithm}
	\begin{algorithmic}[1]
		\STATE Define $\Hat{G}, \Hat{\mathsf{G}}, A, \phi, \mathtt{M}, \iota, \alpha, \varkappa, \kappa$ and $\Phi^{(0)}, \Psi^{(0)}$ as above.
		\STATE List all sequences with $\kappa$ distinct elements in $\mathsf{V}$ by $\mathsf{A}_1, \mathsf{A}_2, \ldots, \mathsf{A}_{\mathtt{M}}$.
		\FOR{$\mathtt{m}=1, \ldots, \mathtt{M}$}
		\STATE Define $\Gamma^{(0)}_k, \Pi^{(0)}_k$  for $1 \leq k \leq K_0$ as in \eqref{equ_initial_set}.
		\STATE Define $\varepsilon_0, K_0$ as above.
		\STATE Set $\pi_{\mathtt{m}}(v_j) = \mathsf{v}_j$ where $v_j, \mathsf{v}_j$ are the $j$-th coordinate of $A, \mathsf{A}_{\mathtt{m}}$ respectively.
		\WHILE{ $K_t \leq \exp \{ (\log \log n)^2 \}$ }
		\STATE Calculate $K_{t+1}$ according to \eqref{equ_iter_K}.
		\STATE Calculate $\mathrm{M}^{(t,s)}_{\Gamma}, \mathrm{M}^{(t,s)}_{\Pi}$ for $0 \leq s \leq t$ according to \eqref{equ_martix_M_P}.
		\STATE Calculate the eigenvalues and eigenvectors of $\Phi^{(t)}, \Psi^{(t)}$, as in \eqref{eq-spectral-decomposition}.
		\STATE Define $\eta^{(t)}_1, \eta^{(t)}_2, \ldots, \eta^{(t)}_{ \frac{K_t}{12} }$ according to \eqref{equ_vector_orthogonal} and \eqref{equ_vector_unit}.
		\STATE Calculate $\varepsilon_{t+1}$ according to \eqref{equ_def_iter_varepsilon}.
		\STATE Sample random vectors $\beta^{(t)}$ satisfying \eqref{equ_averaging_1}, \eqref{equ_averaging_2}, \eqref{equ_sum_averaging_1} and \eqref{equ_sum_averaging_2}.
		\STATE Define $\sigma^{(t)}_k$ for $1 \leq k \leq K_{t+1}$ according to \eqref{equ_def_sigma};
		\STATE Define $\Phi^{(t+1)}, \Psi^{(t+1)}$ according to \eqref{equ_def_iter_matrix}; 
		\STATE Define $\Gamma^{(t+1)}_k, \Pi^{(t+1)}_k$ for $1 \leq k \leq K_{t+1}$ according to \eqref{equ_def_iter_sets}; 
		\ENDWHILE
		\STATE Suppose we stop at $t=t^{*}$;
		\STATE Define $\eta^{(t)}_1, \eta^{(t)}_2, \ldots, \eta^{(t)}_{ \frac{K_{t^{*}}}{12} }$ according to \eqref{equ_vector_orthogonal} and \eqref{equ_vector_unit}.
		\FOR{ $u \in V \backslash A$ }
		\STATE Define $\textup{SUC}_u =0$;
		\FOR{ $\mathsf{u} \in \mathsf{V} \backslash \mathsf{A}_{\mathtt{m}}$ }
		\IF{ $u$ and $\mathsf{u}$ satisfy (\ref{equ_def_matching_ver})}
		\STATE Define $\pi_{\mathtt{m}}(u) = \mathsf{u}$;
		\STATE Set $\textup{SUC}_u =1$;
		\ENDIF
		\ENDFOR
		\IF{ $\textup{SUC}_u =0$ }
		\STATE Set $\pi_{\mathtt{m}} = \emptyset$ and break the {\bf for} cycle.
		\ENDIF
		\ENDFOR
		\ENDFOR 
		\IF{ there exists a $\pi_{\mathtt{m}} \not = \emptyset$ } 
		\STATE Find ${\pi}_{\mathtt{m}^{*}}$ which maximizes $\sum_{(u,v) \in E(V)} G_{u,v} \mathsf{G}_{\pi(u), \pi(v)}$ among $\{\pi_{\mathtt{m}} \not = \emptyset : 1 \leq \mathtt{m} \leq \mathtt{M}\}$.
		\RETURN $\Hat{\pi}= {\pi}_{\mathtt{m}^{*}}$.
		\ELSE
		 \RETURN $\textup{FAIL}$;
		\ENDIF
	\end{algorithmic}
\end{breakablealgorithm}

\subsection{Proof of Lemma~\ref{lemma_matrix_eigenvalue}} \label{sec:matrix-rank}
We prove Lemma~\ref{lemma_matrix_eigenvalue} in this subsection, which then justifies that our algorithm is well-defined. The case when $t=0$ is trivial. In what follows, we prove the statement for $t+1$ assuming it holds up to $t$. For notation convenience, we drop $t+1$ from the superscript of $\Phi$ and $\Psi$ in this subsection.
Recall \eqref{equ_def_iter_matrix}.
Since the matrix $\mathsf{J}$ has rank 1, as we will see later the key is to study the matrices $\Tilde{\Phi} = \Phi + \frac{ \alpha^2 }{ \alpha - \alpha^2 } \mathsf{J}$ and $\Tilde{\Psi} = \Psi + \frac{ \alpha^2 }{ \alpha - \alpha^2 } \mathsf{J}$. Recall that $\phi(\beta_i,\beta_j)=\phi(\langle {\beta}_i, {\beta}_j \rangle)$, where
\begin{equation}
    \phi(u)= \int_{|x| \geq 10} \int_{|y| \geq 10} \frac{1}{2\pi \sqrt{1-u^2}} e^{-\frac{x^2-2uxy+y^2}{2(1-u^2)}} dxdy\,.
    \label{equ_intergration_form_phi}
\end{equation}
Denote the Taylor expansion of $\phi$ as $\phi(u) = \sum_{k=0}^{\infty} c_k u^k$. Recall from \eqref{equ_def_iter_varepsilon} that 
$$ \frac{12}{K_t} \| \beta^{(t)}_i \|^2 = 1 \mbox{ and }\frac{\epsilon}{2} \frac{12}{K_t} \| \Hat{\beta}^{(t)}_i \|^2 = \frac{\epsilon}{2} \frac{12}{K_t} \sum_{j} \eta^{(t)}_j \Psi^{(t)} (\eta^{(t)}_j)^{*} = \sqrt{\frac{\varepsilon_{t+1} (\alpha - \alpha^2)}{\iota}}\,.$$ So we can write $\Tilde{\Phi}$ and $\Tilde{\Psi}$ as
\begin{align}\label{eq-tilde-Phi-Psi-expression}
    \Tilde{\Phi} = \frac{1}{\alpha-\alpha^2} \Big(\alpha \mathrm{I} +  \sum_{k=0}^{\infty} c_k \Phi_k\Big) \mbox{ and } \Tilde{\Psi} = \frac{1}{\alpha - \alpha^2} \Big(\phi \Big( \sqrt{\frac{\varepsilon_{t+1} (\alpha - \alpha^2)}{\iota}} \Big)  \mathrm{I} + \sum_{k=0}^{\infty} c_k \Psi_k\Big)\,,
\end{align}
where $\Phi_k$ and $\Psi_k$ are matrices with diagonal entries  being 0, and non-diagonal entries given by $\left( \frac{12}{K_t} \langle \beta^{(t)}_i, \beta^{(t)}_j \rangle \right)^k$ and $\left( \frac{\epsilon}{2}\frac{12}{K_t} \langle \Hat{\beta}^{(t)}_i, \Hat{\beta}^{(t)}_j \rangle \right)^k$, respectively. Next we prove several claims characterizing $c_k$ and $\phi$. Clearly we have $c_0 = \phi(0) = \alpha^2$.
\begin{Claim}{\label{claim_Taylor_c_1}}
    We have $c_1 = 0$.
\end{Claim}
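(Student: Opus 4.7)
The plan is to exploit the parity symmetry of $\phi$ that is baked into the use of absolute values in the definition \eqref{equ_def_func_phi}. Specifically, I would observe that if $(X,Y)$ is a standard bivariate normal with correlation $u$, then $(X,-Y)$ is a standard bivariate normal with correlation $-u$, and the event $\{|X| \geq 10,\ |Y| \geq 10\}$ is invariant under the sign flip $Y \mapsto -Y$. Hence $\phi(u) = \phi(-u)$, i.e.\ $\phi$ is an even function of $u$ on $[-1,1]$.

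Concretely, in the integral representation \eqref{equ_intergration_form_phi} I would perform the change of variables $y \mapsto -y$: the Jacobian is $1$, the region $\{|y|\geq 10\}$ is unchanged, and the quadratic form $x^2 - 2uxy + y^2$ becomes $x^2 + 2uxy + y^2$. This shows that the integrand with correlation $u$ matches the integrand with correlation $-u$ pointwise after the substitution, proving $\phi(u) = \phi(-u)$.

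Since $\phi$ is analytic at $0$ (a standard fact about Gaussian probabilities, or immediate from the integral formula by differentiation under the integral sign near $u = 0$), its Taylor expansion $\phi(u) = \sum_{k \geq 0} c_k u^k$ must then contain only even powers of $u$. In particular $c_1 = \phi'(0) = 0$, which is the claim.

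There is essentially no obstacle here: the argument is a one-line parity observation, and the only thing worth double-checking is that differentiation under the integral is justified in a neighbourhood of $u = 0$, which is routine since the Gaussian density and its $u$-derivatives decay rapidly and the truncation $|x|,|y| \geq 10$ does not interfere. No appeal to the machinery developed earlier in the paper is needed for this particular claim.
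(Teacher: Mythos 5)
Your proof is correct, and it takes a genuinely cleaner route than the paper. The paper first derives a full term-by-term Taylor expansion of the bivariate Gaussian density in powers of $u$ (yielding the formula for $f_m(x,y)$), and only then computes
\begin{equation*}
c_1 = \frac{1}{2\pi}\int_{|x|\geq 10}\int_{|y|\geq 10} xy\, e^{-(x^2+y^2)/2}\,dx\,dy = 0\,,
\end{equation*}
which vanishes by oddness of the integrand over a sign-symmetric domain. So the paper ultimately also appeals to a parity cancellation, but only after a heavy computation that is really there to serve the subsequent Claims~\ref{claim_Taylor_c_n} and \ref{claim_difference_phi} (bounds on $|c_m|$ and positivity of $c_2$), where the explicit coefficient formula is actually needed. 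Your observation that $\phi(u)=\phi(-u)$ via the substitution $Y\mapsto -Y$ (or $y\mapsto -y$ in the integral) gets $c_1=0$ in one line and, as a bonus, shows that \emph{all} odd coefficients $c_{2k+1}$ vanish, which the paper's computation does not make visible. The one thing to keep in mind is that your shortcut does not replace the expansion machinery for the other two claims; it is a local simplification of this claim only.
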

\begin{proof} 
Applying Taylor's expansion repeatedly, we get that
\begin{align*}
    & \frac{1}{2\pi \sqrt{1-u^2}}    e^{-\frac{x^2-2uxy+y^2}{2(1-u^2)}} \\
    =&  \frac{1}{2\pi \sqrt{1-u^2}}  e^{\frac{1}{2(1-u^2)}(-u^2(x^2+y^2)+2uxy)} e^{-\frac{x^2+y^2}{2}}  \\
    =&  \frac{1}{2 \pi} \sum_{z=0}^{\infty} \frac{(2z-1)!!} {(2z)!!} u^{2z} \sum_{p=0}^{\infty} \frac{1}{p!} \left(  \frac{uxy}{ 1-u^2 } - \frac{u^2(x^2+y^2)}{2(1-u^2)}   \right)^p e^{- \frac{x^2+y^2}{2}}  \\
    =& \frac{1}{2 \pi} \sum_{z=0}^{\infty} \frac{(2z-1)!!} {(2z)!!} u^{2z} \sum_{p=0}^{\infty} \frac{1}{p!} \sum_{k=0}^{p} \frac{\binom{p}{k} (uxy)^k \left( -\frac{u^2 (x^2+y^2)}{2} \right)^{p-k} }{(1-u^2)^p} e^{- \frac{x^2+y^2}{2}}  \\
    =&  \sum_{z=0}^{\infty}  \sum_{p=0}^{\infty}  \sum_{k=0}^{p} \sum_{r=0}^{\infty} \frac{1}{2 \pi} \frac{\binom{p}{k}}{p!} \frac{(2z-1)!!} {(2z)!!} (xy)^k (-\frac{x^2+y^2}{2})^{l} \binom{p+r-1}{r} u^{2p - k +2z +2r} e^{- \frac{x^2+y^2}{2}} \\
    =& \sum_{z=0}^{\infty}  \sum_{\ell=0}^{\infty} \sum_{k=0}^{\infty}  \sum_{r=0}^{\infty} \frac{1}{2 \pi} \frac{\binom{l+k}{k}}{(l+k)!} \frac{(2z-1)!!} {(2z)!!} (xy)^k (-\frac{x^2+y^2}{2})^{l} \binom{k+l +r-1}{r} u^{k +2l +2z +2r} e^{- \frac{x^2+y^2}{2}} \,.
\end{align*}
So we have $c_m = \int_{|x| \geq 10} \int_{|y| \geq 10} f_m(x,y) e^{- \frac{x^2+y^2}{2}} dxdy$ for $m\geq 1$, where $ f_m(x, y) $ is given by
\begin{align}\label{eq-f-m-expression}
   \sum_{ k + 2l + 2z + 2r = m } \frac{1}{2\pi (k+l)!} \frac{(2z-1)!!}{(2z)!!} \binom{k+l}{k} \binom{k+l+r-1}{r} (xy)^{k} \left(- \frac{x^2 + y^2}{2} \right)^{l}\,.
\end{align}
Therefore,
\begin{equation*}
    c_1 = \frac{1}{2\pi} \int_{|x| \geq 10} \int_{|y| \geq 10} xy e^{- \frac{x^2+y^2}{2}} dxdy =0\,. \qedhere
\end{equation*}
\end{proof}

\begin{Claim}{\label{claim_Taylor_c_n}}
    We have $|c_m| \leq (m+1) 4^m$ for $m\geq 1$.
\end{Claim}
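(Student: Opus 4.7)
The plan is to continue the direct calculation from the proof of Claim~\ref{claim_Taylor_c_1}, dominating the integrand of $c_m$ by a pure polynomial in $(x^2+y^2)/2$, integrating via the Gaussian moment, and then collapsing the resulting four-fold combinatorial sum with a generating-function identity.

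First I will enlarge the integration domain to $\mathbb{R}^2$ and apply the triangle inequality in \eqref{eq-f-m-expression}, using $(2z-1)!!/(2z)!!\leq 1$, the identity $\binom{k+l}{k}/(k+l)! = 1/(k!\,l!)$, and the AM--GM bound $|xy|\leq (x^2+y^2)/2$. Each term is then dominated by a constant multiple of the Gaussian moment $\int_{\mathbb R^2}((x^2+y^2)/2)^{k+l}e^{-(x^2+y^2)/2}\,dx\,dy$, which evaluates to $2\pi(k+l)!$ by polar coordinates followed by the substitution $u=(x^2+y^2)/2$. After the $2\pi$ and the $(k+l)!$ cancel, this leaves
\[
|c_m|\leq \sum_{k+2l+2z+2r=m}\binom{k+l}{k}\,\frac{(2z-1)!!}{(2z)!!}\,\binom{k+l+r-1}{r}.
\]

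Next, for fixed $(k,l)$ with $p:=(m-k-2l)/2$, I will evaluate the inner convolution over $z+r=p$ by recognising
\[
\sum_{z\geq 0}\tfrac{(2z-1)!!}{(2z)!!}\,t^z=(1-t)^{-1/2},\qquad \sum_{r\geq 0}\tbinom{k+l+r-1}{r}\,t^r=(1-t)^{-(k+l)}.
\]
The convolution equals the $t^p$-coefficient of $(1-t)^{-(k+l+1/2)}$, namely $\binom{k+l+p-1/2}{p}\leq\binom{k+l+p}{p}$ by monotonicity of $\Gamma$. Combined with $\binom{k+l}{k}\binom{k+l+p}{p}=\binom{k+l+p}{k,l,p}$, this reduces the estimate to a single sum of trinomial coefficients:
\[
|c_m|\leq\sum_{k+2l+2p=m}\frac{(k+l+p)!}{k!\,l!\,p!}.
\]

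Finally I will invoke the trinomial generating-function identity
\[
\sum_{k,l,p\geq 0}\frac{(k+l+p)!}{k!\,l!\,p!}\,z^k(z^2)^l(z^2)^p=\sum_{n\geq 0}(z+2z^2)^n=\frac{1}{1-z-2z^2}=\frac{1}{(1-2z)(1+z)},
\]
and read off the coefficient of $z^m$ by partial fractions to obtain $\frac{2^{m+1}+(-1)^m}{3}\leq 2^{m+1}\leq (m+1)4^m$ for every $m\geq 1$. The only step that requires real thought is the reduction of the $(z,r)$-convolution to the negative half-integer binomial coefficient: that is what makes the resulting generating function factor cleanly into $(1-2z)(1+z)$, so that the extraction of its $z^m$-coefficient is elementary. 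Everything else is bookkeeping, the Gamma integral $\int_0^\infty u^s e^{-u}\,du=s!$, and a one-line partial fraction.
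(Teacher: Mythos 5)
Your proof is correct, and the two arguments share the same first half but diverge on the combinatorial endgame. Like the paper, you start from the explicit formula \eqref{eq-f-m-expression}, apply the triangle inequality together with $|xy|\le (x^2+y^2)/2$, enlarge the domain, and reduce everything to the Gaussian moment $\int_{\mathbb R^2}\left(\tfrac{x^2+y^2}{2}\right)^{k+l}e^{-(x^2+y^2)/2}\,dx\,dy$. Your evaluation of this integral as $2\pi(k+l)!$ is the correct one; the paper writes $2\pi(k+l+1)!$, a slip that only costs them an extra harmless factor $(k+l+1)\le m+1$. Where you part ways is the resulting four-index sum: the paper simply asserts $\sum_{k+2l+2z+2r=m}(k+l+1)\binom{k+l}{k}\binom{k+l+r-1}{r}\le(m+1)4^m$ with no further justification, whereas you collapse the $(z,r)$-convolution via $(1-t)^{-1/2}\cdot(1-t)^{-(k+l)}=(1-t)^{-(k+l+1/2)}$, bound the half-integer binomial by $\binom{k+l+p}{p}$, recognize the trinomial coefficient, and read off the generating function $(1-z-2z^2)^{-1}=((1-2z)(1+z))^{-1}$. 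This gives the exact value $\tfrac{2^{m+1}+(-1)^m}{3}$, a much sharper bound than the paper's $(m+1)4^m$, and it turns the one unjustified inequality in the paper's proof into a transparent partial-fraction computation. One cosmetic inconsistency: in your opening paragraph you announce that you will invoke $(2z-1)!!/(2z)!!\le 1$, but you do not in fact drop the double factorial; you carry it through and absorb it into the $(1-t)^{-1/2}$ factor, which is both cleaner and slightly tighter. Everything checks out, and the final step $\tfrac{2^{m+1}+(-1)^m}{3}\le 2^{m+1}\le 4^m\le(m+1)4^m$ holds for all $m\ge 1$ as required.
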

\begin{proof} 
By \eqref{eq-f-m-expression}, we have
\begin{align*}
    |f_m(x, y)| \leq \sum_{k + 2l + 2z + 2r = m} \frac{1}{2\pi (k+l)!} \binom{k+l}{k} \binom{k+l+r-1}{r} \left( \frac{x^2+y^2}{2} \right)^{k+l}\,.
\end{align*}
Note that
\begin{align*}
    & \frac{1}{2\pi (k+l)!} \int_{|x| \geq 10} \int_{|y| \geq 10} \left( \frac{x^2+y^2}{2} \right)^{k+l} e^{- \frac{x^2 + y^2}{2}} dxdy \\
    \leq & \frac{1}{2\pi(k+l)!} \int_{\mathbb{R}} \int_{\mathbb{R}} \left( \frac{x^2+y^2}{2} \right)^{k+l} e^{- \frac{x^2 + y^2}{2}} dxdy =  \frac{1}{2\pi(k+l)!} 2 \pi (k+l+1)! = k+l+1\,.
\end{align*}
Therefore,
\begin{align*}
    |c_m| & \leq \int_{|x| \geq 10} \int_{|y| \geq 10} |f_m(x, y)| e^{- \frac{x^2 + y^2}{2}} dxdy \\
    & \leq \sum_{k + 2l + 2z + 2r = m} (k+l+1) \binom{k+l}{k} \binom{k+l+r-1}{r} \leq (m+1) 4^m\,. \qedhere
\end{align*}
\end{proof}

\begin{Claim}{\label{claim_difference_phi}}
 For an absolute constant $u_0>0$ we have $\phi(u) - \phi(0) \in (0.99  \iota u^2, 1.01\iota u^2)$ for $|u|\leq u_0$.
\end{Claim}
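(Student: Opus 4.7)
The plan is to use the Taylor expansion $\phi(u) = \sum_{k\geq 0} c_k u^k$ already set up in \eqref{eq-f-m-expression}, truncate after the quadratic term, and control the tail via Claim~\ref{claim_Taylor_c_n}. By the definition $\iota = \tfrac{1}{2}\phi''(0)$ we have $c_2 = \iota$, and by Claim~\ref{claim_Taylor_c_1} we have $c_1=0$, so
\begin{equation*}
\phi(u) - \phi(0) = \iota u^2 + \sum_{m \geq 3} c_m u^m.
\end{equation*}

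The first step is to verify that $\iota>0$, so that the target inequality is meaningful. Specializing \eqref{eq-f-m-expression} to $m=2$, only the decompositions $(k,l,z,r) \in \{(2,0,0,0),(0,1,0,0),(0,0,1,0)\}$ contribute (the $(0,0,0,1)$ term vanishes because $\binom{-1}{1}=0$), and a direct calculation gives
\begin{equation*}
f_2(x,y) = \frac{(xy)^2}{4\pi} - \frac{x^2+y^2}{4\pi} + \frac{1}{4\pi}.
\end{equation*}
Integrating against $e^{-(x^2+y^2)/2}$ over $\{|x|\geq 10,|y|\geq 10\}$ and using independence of the two coordinates, I would obtain $\iota = c_2 = \tfrac{1}{2}(A_2 - \alpha)^2$, where $A_2 := \mathbb{E}[X^2 \mathbf{1}_{|X|\geq 10}]$. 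Since on the event $\{|X|\geq 10\}$ we have $X^2 \geq 100$, this yields $A_2 \geq 100\alpha$, hence $\iota \geq \tfrac{1}{2}(99\alpha)^2 > 0$.

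The second step is the tail bound. By Claim~\ref{claim_Taylor_c_n},
\begin{equation*}
\left|\sum_{m \geq 3} c_m u^m\right| \leq \sum_{m\geq 3} (m+1) (4|u|)^m,
\end{equation*}
which for $|u| \leq \tfrac{1}{8}$ is a convergent series dominated by $C |u|^3$ for some absolute constant $C>0$ (one can take $C = \sum_{m\geq 3}(m+1)4^m \cdot 8^{-(m-3)}$, which is finite). Hence for $|u| \leq u_0$ with $u_0$ chosen so that $C u_0 \leq 0.01 \iota$ (note that $\iota$ is a positive absolute constant by the first step, so this is a valid choice of an absolute constant $u_0$), we get
\begin{equation*}
\Bigl|\phi(u) - \phi(0) - \iota u^2\Bigr| \leq C|u|^3 \leq 0.01 \iota u^2,
\end{equation*}
which is exactly the claimed two-sided bound.

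The only mildly nontrivial step is the verification that $\iota>0$ via the explicit formula for $c_2$; everything else reduces to a straightforward geometric-tail estimate. I do not anticipate any serious obstacle.
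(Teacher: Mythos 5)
Your proof is correct and follows the paper's strategy exactly (Taylor expansion around $u=0$ plus verification that $c_2 = \iota > 0$), though you are more quantitative on both steps: you factor the $m=2$ integrand into a perfect square to get the clean identity $\iota = \tfrac{1}{2}(A_2-\alpha)^2$ with $A_2 = \mathbb{E}[X^2\mathbf{1}_{|X|\geq 10}]$, whereas the paper only observes the integrand is positive, and you bound the tail $\sum_{m\geq 3}c_m u^m$ explicitly via Claim~\ref{claim_Taylor_c_n} rather than invoking $O(u^3)$. (Two trivial slips worth noting: in the $(k,l,z,r)=(0,0,0,1)$ term the binomial is $\binom{0}{1}=0$, not $\binom{-1}{1}$; and your computed constant $\tfrac{1}{4\pi}$ in $f_2$ is the correct one, the paper's stated $c_2$ formula appears to have $1$ in place of $\tfrac12$, but this has no bearing on positivity.)
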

\begin{proof}
Recall $c_1 =0$ and $\iota = \frac{1}{2} \phi''(0) = c_2$. By Taylor's expansion, 
\begin{align*}
    \phi(u) - \phi(0) = \iota u^2 + O(u^3), \mbox{ when } u\to 0\,.
\end{align*}
So it suffices to show $\iota = c_2 > 0$. By \eqref{eq-f-m-expression}, we have
\begin{equation*}
    c_2 = \frac{1}{2\pi} \int_{|x| \geq 10} \int_{|y| \geq 10} \left( \frac{x^2 y^2}{2} - \frac{x^2 + y^2}{2} + 1 \right) e^{- \frac{x^2 + y^2}{2}} dxdy > 0 \,.\qedhere
\end{equation*}
\end{proof}

\begin{Remark}\label{rem-epsilon-u-0}
    For simplicity of our proof, in what follows we wish to assume $\epsilon \leq u_0$ as required by Claim \ref{claim_difference_phi}. Indeed, if $\epsilon > u_0$ we can deliberately add i.i.d.\ noise to each $\{ \Hat{G}_{u,v} \}$ and $\{ \Hat{\mathsf{G}}_{\mathsf{u,v}} \}$ in the step of preprocessing such that the correlation between the modified $\Hat{G}$ and $\Hat{\mathsf{G}}$ becomes $u_0$.  That being said, we also point out that the above assumption is nonessential and is only for technical simplicity. Actually, we can modify our  algorithm by setting $\varepsilon_{t+1} = \frac{1}{\alpha - \alpha^2} ( \phi( \frac{\epsilon}{2} \frac{12}{K_t} \sum_{j=1}^{\frac{K_t}{12}} \eta^{(t)}_j \Psi^{(t)} (\eta^{(t)}_j )^{*} ) - \phi(0) )$. This would require a slightly more complicated analysis and more importantly would result in even more cumbersome notation, which is why formally we made the assumption of $\epsilon \leq u_0$ as explained above. 
 \end{Remark}

We also need some standard lemmas in linear algebra.
\begin{Lemma}{\label{lemma_symmetric_decomposition}}
    For any $d*d$ symmetric matrix $A$, recall that we denote by $\varsigma_k(A)$ its $k$-th largest eigenvalue. Suppose $| \{ k : |\varsigma_k(A)| \geq M  \} | \leq m$.  Then $A$ can be decomposed as $A=B+C$, where $B,C$ are symmetric matrices such that $\mathrm{rank}(B) \leq m$ and $\| C \|_{\mathrm{op}}  \leq M$.
\end{Lemma}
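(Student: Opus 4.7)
The statement is a standard spectral-truncation fact, and my plan is to prove it directly from the spectral theorem with no probabilistic or algorithmic input. Since $A$ is symmetric, I can write $A = \sum_{k=1}^{d} \varsigma_k(A) v_k v_k^{*}$ where $v_1, \dots, v_d$ is an orthonormal basis of real eigenvectors associated with the eigenvalues $\varsigma_1(A) \geq \cdots \geq \varsigma_d(A)$. Let $S = \{k : |\varsigma_k(A)| \geq M\}$, so that $|S| \leq m$ by hypothesis. The natural split is to put the ``large'' spectral part into $B$ and the ``small'' part into $C$, namely
\begin{equation*}
    B \;=\; \sum_{k \in S} \varsigma_k(A)\, v_k v_k^{*}, \qquad C \;=\; \sum_{k \notin S} \varsigma_k(A)\, v_k v_k^{*}.
\end{equation*}

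Both $B$ and $C$ are symmetric since each rank-one summand $v_k v_k^{*}$ is, and clearly $B + C = A$. To bound the rank of $B$, I note that $B$ is a sum of $|S| \leq m$ rank-one matrices, so $\mathrm{rank}(B) \leq m$. For the operator norm of $C$, I observe that $C$ is symmetric and its nonzero spectrum is a subset of $\{\varsigma_k(A) : k \notin S\}$, each of which satisfies $|\varsigma_k(A)| < M$ by definition of $S$. Since the operator norm of a symmetric matrix equals the maximum absolute value of its eigenvalues, this yields $\|C\|_{\mathrm{op}} \leq M$, as required.

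There is no real obstacle here: the only subtlety is to check that the decomposition preserves symmetry (immediate, since each $v_k v_k^{*}$ is symmetric and the coefficients $\varsigma_k(A)$ are real) and that the rank bound uses $|S| \leq m$ rather than a coarser count. No additional estimates on the spectral gap or on the vectors $v_k$ are needed.
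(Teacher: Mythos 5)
Your proof is correct and is essentially the same as the paper's: both diagonalize $A$ via the spectral theorem and split the diagonal into the eigenvalues with $|\varsigma_k| \geq M$ (giving $B$, rank at most $m$) and those with $|\varsigma_k| < M$ (giving $C$, with $\|C\|_{\mathrm{op}} \leq M$). The only difference is notational — you write the decomposition as a sum of rank-one projectors $v_k v_k^*$, while the paper writes it as $U D_1 U^*$ and $U D_2 U^*$ with indicator coefficients — but the argument is identical.
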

\begin{proof}
Since $A$ is symmetric, $A$ can be diagonalized by an orthogonal matrix $U$, i.e. we can write $A= U D U^{*}$, where $D=\mathrm{diag}(\varsigma_1, \varsigma_2, \ldots, \varsigma_d)$ and $\varsigma_1 \geq \varsigma_2 \geq \ldots \geq \varsigma_d$ are the ordered eigenvalues of $A$. Then we can set $D_1 = \mathrm{diag}( \delta_1 \varsigma_1, \delta_2 \varsigma_2, \ldots, \delta_d \varsigma_d)$ and $D_2 = D-D_1$, where $\delta_k \in \{ 0,1 \}$ and $\delta_k=1$ if and only if $|\varsigma_k| \geq M$. It is straightforward to verify that $B = U D_1 U^{*}$ and $C= U D_2 U^{*}$ satisfy required properties.
\end{proof}

\begin{Lemma}{\label{lemma_sum_symmetric_eigenvalue}}
    If $A,B$ are $d*d$ symmetric matrices with $\mathrm{rank}(B) \leq k$, then we have $\varsigma_{m-k}(A+B) \geq \varsigma_{m}(A)$ and $\varsigma_{m-k}(A) \geq \varsigma_{m}(A+B)$ for $k<m$.
\end{Lemma}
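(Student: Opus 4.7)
The plan is to deduce both inequalities from the Courant--Fischer min--max characterization of eigenvalues. For any $d \times d$ symmetric matrix $M$ with ordered eigenvalues $\varsigma_1(M) \geq \cdots \geq \varsigma_d(M)$, one has
$$\varsigma_m(M) = \max_{\dim V = m} \min_{v \in V, \|v\|=1} v^{*} M v = \min_{\dim V = d-m+1} \max_{v \in V, \|v\|=1} v^{*} M v.$$
The structural input I would use is that the kernel $N := \ker(B)$ has dimension at least $d - k$ (since $\mathrm{rank}(B) \leq k$), and that $v^{*}(A+B)v = v^{*}A v$ for every $v \in N$, so on $N$ the two quadratic forms agree.

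For $\varsigma_{m-k}(A+B) \geq \varsigma_m(A)$, first choose a subspace $V_0 \subset \mathbb{R}^d$ of dimension $m$ achieving the max in the Courant--Fischer formula for $\varsigma_m(A)$. The standard dimension inequality $\dim(V_0 \cap N) \geq \dim V_0 + \dim N - d \geq m-k$ lets me extract a subspace $V_1 \subseteq V_0 \cap N$ of dimension exactly $m-k$ on which $B$ vanishes. Then every unit vector $v \in V_1$ satisfies $v^{*}(A+B)v = v^{*} A v \geq \varsigma_m(A)$, and feeding $V_1$ into the max--min form for $A+B$ gives the desired bound. For the companion inequality $\varsigma_{m-k}(A) \geq \varsigma_m(A+B)$, I would run the symmetric argument with the other form: start from a subspace $V_0'$ of dimension $d-(m-k)+1 = d-m+k+1$ realizing the min in the Courant--Fischer formula for $\varsigma_{m-k}(A)$, intersect with $N$ to obtain a subspace of dimension at least $d-m+1$ on which $B$ acts trivially, and then plug this intersection into the min--max form for $A+B$.

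The main obstacle, if one can call it that, is purely bookkeeping rather than analysis: one must pair each inequality with the matching variational formula (max--min for the lower bound on $\varsigma_{m-k}(A+B)$, min--max for the upper bound on $\varsigma_m(A+B)$) and verify that the intersection dimension counts come out correctly, both times exploiting that $\dim N \geq d-k$. Once this alignment is set up, each inequality reduces to a single substitution.
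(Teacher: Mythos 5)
Your proof is correct, but it takes a genuinely different route from the paper's. You apply the Courant--Fischer min--max characterization directly, exploiting the fact that $\dim\ker(B)\geq d-k$ forces $v^{*}(A+B)v = v^{*}Av$ on a subspace of codimension at most $k$, then intersect the optimal Courant--Fischer subspace for $A$ with $\ker(B)$ and track dimensions. This is the standard textbook proof of the rank-$k$ eigenvalue interlacing bound, and it handles both inequalities by symmetric substitutions into the two dual variational formulas. The paper instead argues by matrix decomposition: diagonalize $A$, truncate the tail of the spectrum at $\varsigma_m$ to write $A = C + (-H)$ with $C \geq \varsigma_m \mathrm{I}$ and $\mathrm{rank}(H)\leq d-m$, fold $H$ into $B$ so that $A+B \geq \varsigma_m\mathrm{I} + (B-H)$ with $\mathrm{rank}(B-H)\leq d-m+k$, and then read off the bound from the observation that $\varsigma_m\mathrm{I}+(B-H)$ has at least $m-k$ eigenvalues equal to $\varsigma_m$; the second inequality follows by applying the first to $(A+B, -B)$. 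Both arguments are elementary and self-contained; yours is probably the more canonical one and avoids Weyl monotonicity, while the paper's avoids the min--max variational formulas and leans on a rank-plus-shift decomposition. Either would serve the paper's purposes equally well.
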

\begin{proof}
Consider an orthogonal diagonalization $A=U D U^{*}$, where $D=\mathrm{diag}( \varsigma_1, \varsigma_2, \ldots, \varsigma_d)$. Set $D_1 = \mathrm{diag}( \varsigma_1, \varsigma_2, \ldots, \varsigma_m,  \varsigma_m, \ldots, \varsigma_m )$, $D_2 = D_1 - D$ and $C= U D_1 U^{*}, H = U D_2 U^{*}$. Thus, $C \geq \varsigma_m \mathrm{I}$ (i.e., $C - \varsigma_m \mathrm{I}$ is semi-positive definite)  and $\mathrm{rank}(H) \leq d-m$. So $A+B= C+(B-H)$, where $C \geq \varsigma_m \mathrm{I}$ and $\mathrm{rank}(B-H) \leq k+d-m$. Since $A+B \geq \varsigma_m \mathrm{I} + (B-H)$, we have
\begin{align*}
    \varsigma_{m-k}(A+B) \geq \varsigma_{m-k}(\varsigma_m \mathrm{I} + (B-H)) \geq \varsigma_m = \varsigma_m(A)\,,
\end{align*}
where the last inequality holds since $\mathrm{rank}(B-H) \leq d-m+k$ and thus $\varsigma_m \mathrm{I} + (B-H)$ has at least $(m-k)$ eigenvalues (which are all equal to) $\varsigma_m$. Replacing $A,B$ with $A+B,-B$ we see that the second inequality also holds.
\end{proof}

\begin{Lemma}{\label{lemma_rank_perturbed}}
    Let $A=(a_{ij})_{1 \leq i,j \leq n}$ be a $d*d$ symmetric matrix with $a_{ii}=0$ and $ \sum_{i,j} a_{ij}^2 < \frac{d \delta^2 }{10^6 }$. Then at least $0.99d$ eigenvalues of $A$ are in $[-0.01 \delta ,0.01 \delta ]$.
\end{Lemma}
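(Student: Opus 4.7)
The plan is to argue directly from the identity between the Hilbert--Schmidt norm and the sum of squared eigenvalues. Since $A$ is symmetric, we can orthogonally diagonalize $A = U D U^{*}$ with $D = \mathrm{diag}(\varsigma_1(A), \ldots, \varsigma_d(A))$, and the Hilbert--Schmidt norm is unitarily invariant, so
\begin{equation*}
\sum_{k=1}^{d} \varsigma_k(A)^2 \;=\; \mathrm{tr}(A A^{*}) \;=\; \|A\|_{\mathrm{HS}}^2 \;=\; \sum_{i,j} a_{ij}^2 \;<\; \frac{d \delta^2}{10^6}.
\end{equation*}

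Now I would simply count large eigenvalues via Markov-style pigeonholing on the above sum. Let $N = |\{k : |\varsigma_k(A)| > 0.01 \delta\}|$. Each such eigenvalue contributes at least $(0.01 \delta)^2 = 10^{-4} \delta^2$ to $\sum_k \varsigma_k(A)^2$, so
\begin{equation*}
N \cdot 10^{-4} \delta^2 \;\leq\; \sum_{k=1}^{d} \varsigma_k(A)^2 \;<\; \frac{d \delta^2}{10^6},
\end{equation*}
which yields $N < d / 100 = 0.01 d$. Consequently at least $0.99 d$ of the eigenvalues of $A$ lie in $[-0.01 \delta, 0.01 \delta]$, as desired.

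I do not expect any real obstacle here; the lemma is a direct consequence of the eigenvalue characterization of the Frobenius norm of a symmetric matrix together with a one-line Markov-type count. The hypothesis $a_{ii} = 0$ is not actually needed for the above argument (the bound on $\sum_{i,j} a_{ij}^2$ alone suffices); it presumably reflects how the lemma will be invoked later (for instance when applying it to the off-diagonal matrices $\Phi_k$ and $\Psi_k$ defined after \eqref{eq-tilde-Phi-Psi-expression}, which by construction have vanishing diagonal).
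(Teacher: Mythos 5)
Your proof is correct and follows essentially the same approach as the paper: both compute $\sum_k \varsigma_k(A)^2 = \|A\|_{\mathrm{HS}}^2 = \sum_{i,j} a_{ij}^2$ and then pigeonhole (the paper phrases it as Chebyshev's inequality, after also noting $\sum_k \varsigma_k(A) = 0$, which—as you correctly observe—is not actually used in the count). Your remark that the hypothesis $a_{ii}=0$ is dispensable is accurate.
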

\begin{proof}
We adopt the same approach as in \cite{Alon09}. Denote $\lambda_1,\lambda_2,\ldots,\lambda_d$ the eigenvalues of $A$. Then we have $\sum_{i=1}^d \lambda_i= 0$ and $\sum_{i=1}^d \lambda_i^2 =\sum_{i,j=1}^d a_{ij}^2 \leq 10^{-6}d \delta^2$. Thus we can conclude by Chebyshev's inequality
\begin{equation*}
    \left| \{ i:\lambda_i \in [-0.01 \delta ,0.01 \delta ] \} \right| \geq d- \frac{10^{-6}d \delta^2}{(0.01 \delta)^2} = 0.99d\,. \qedhere
\end{equation*}
\end{proof}
We are now finally ready to provide the proof of Lemma~\ref{lemma_matrix_eigenvalue}.
\begin{proof}[Proof of Lemma~\ref{lemma_matrix_eigenvalue}]
We first consider $\Phi$.
By \eqref{eq-tilde-Phi-Psi-expression} and Claim~\ref{claim_Taylor_c_1}, we can write $\Tilde{\Phi}$ as
\begin{align*}
    & \Tilde{\Phi} = \frac{1}{\alpha - \alpha^2} \left( \phi(1) - \phi(0) \right) \mathrm{I} + \frac{1}{\alpha - \alpha^2} c_0 \mathsf{J} + \frac{1}{\alpha - \alpha^2} \sum_{k=2}^{\infty} c_k \Phi_k  \,.
\end{align*}
Since $\phi(1) = \alpha$ and $c_0 = \phi(0) = \alpha^2$, we have
\begin{align*}
    \Tilde{\Phi} = \mathrm{I} + \frac{ \alpha^2 }{\alpha - \alpha^2} \mathsf{J} + \frac{1}{\alpha - \alpha^2} \sum_{k=2}^{\infty} c_k \Phi_k\,.
\end{align*}
Recalling $\Phi = \Tilde{\Phi} - \frac{\alpha^2}{\alpha - \alpha^2} \mathsf{J}$, we have $\Phi = \mathrm{I} + \frac{1}{\alpha - \alpha^2} \sum_{k=2}^{\infty} c_k \Phi_k$. By Proposition \ref{prop_random_vector}, we have
\begin{align*}
    \sum_{i,j} (\frac{c_2}{\alpha - \alpha^2} \Phi_2(i,j))^2 \leq \sum_{i \neq j} \left( \frac{c_2}{\alpha - \alpha^2} \right)^2 \left( \frac{12}{K_t} \langle \beta^{(t)}_i, \beta^{(t)}_j \rangle  \right)^4 \leq \frac{10^8}{ (\alpha - \alpha^2)^2} \frac{K_{t+1}^2}{K^2_t}\,.
\end{align*}
By  our choice that $K_{t+1} = \frac{1}{\varkappa} K_t^2$ (recall \eqref{equ_iter_K}) and Lemma \ref{lemma_rank_perturbed}, we see that
\begin{equation}\label{eq-Phi-2-eigenvalue-bound}
|\{ l:  |\varsigma_l( \Phi_2 )| \geq 0.01  \} | \leq 0.01K_{t+1}\,.
\end{equation}
Similarly, by Proposition \ref{prop_random_vector} and Claim~\ref{claim_Taylor_c_n}, we have
$$\|\frac{1}{\alpha - \alpha^2} \sum_{k=3}^{\infty} c_k \Phi_k \|_\infty \leq \sum_{k=3}^{\infty} \frac{(k+1)4^k}{\alpha-\alpha^2} \left( \frac{ 24 \sqrt{ \log K_t}}{\sqrt{K_t}}  \right)^k \leq \frac{ 10^6 (\log K_t)^2 }{ K_t^{1.5} }\,.$$
Thus,
$$\|\frac{1}{\alpha - \alpha^2} \sum_{k=3}^{\infty} c_k \Phi_k \|^2_{\mathrm{HS}} \leq K_{t+1}^2 \frac{10^{12} (\log K_t)^4}{K_t^3} \leq 10^{-6} K_{t+1}$$
where we have used \eqref{equ_iter_K} again.
By Lemma \ref{lemma_rank_perturbed}, we have
\begin{equation}\label{eq-Phi-3-eigenvalue-bound}
    |\{  l: |\varsigma_l ( \sum_{k=3}^{\infty} \frac{c_k}{\alpha - \alpha^2} \Phi_k )| \geq 0.01   \} | \leq 0.01 K_{t+1}\,.
\end{equation}
 
Applying Lemma \ref{lemma_symmetric_decomposition} with \eqref{eq-Phi-2-eigenvalue-bound} and \eqref{eq-Phi-3-eigenvalue-bound} , we can write $\Phi_2 = C_1 + D_1$ and $\sum_{k=3}^{\infty} \frac{c_k}{\alpha - \alpha^2} \Phi_k = C_2 + D_2$, where $\| C_1 \|_{\mathrm{op}}, \| C_2 \|_{\mathrm{op}} \leq 0.01$ and $\mathrm{rank}(D_1), \mathrm{rank}(D_2) \leq 0.01 K_{t+1}$. Noting $\Phi = (\mathrm{I} + C_1 + C_2) + (D_1 + D_2)$, we apply Lemma \ref{lemma_sum_symmetric_eigenvalue} and get that
\begin{align*}
    & \varsigma_{0.98 K_{t+1}} (\Phi) \geq \varsigma_{K_{t+1}} ( \mathrm{I} + C_1 + C_2 ) \geq 0.98\,,  \\
    & \varsigma_{0.02 K_{t+1}+1} (\Phi) \leq \varsigma_{1} ( \mathrm{I} + C_1 + C_2 ) \leq 1.02\,.
\end{align*}
This shows that $\Phi$ has at least $0.96K_{t+1}$ eigenvalues  in $(0.98,1.02)$.

We deal with $\Psi$ in a similar way. By \eqref{eq-tilde-Phi-Psi-expression} and Claim~\ref{claim_Taylor_c_1}, we can write $\Tilde{\Psi}$ as
$$ \Tilde{\Psi} = \frac{1}{\alpha - \alpha^2} \left( \phi \left( \sqrt{\frac{\varepsilon_{t+1} (\alpha - \alpha^2)}{\iota}} \right) - \phi(0) \right) \mathrm{I} + \frac{ \alpha^2 }{\alpha - \alpha^2} \mathsf{J} + \frac{1}{\alpha - \alpha^2} \sum_{k=2}^{\infty} c_k \Psi_k\,.$$
By Claim~\ref{claim_difference_phi} we know that $\phi \left( \sqrt{\frac{ \varepsilon_{t+1} (\alpha - \alpha^2) }{\iota} }\right) - \phi(0) \in ( 0.99 \varepsilon_{t+1} (\alpha - \alpha^2), 1.01 \varepsilon_{t+1} (\alpha - \alpha^2) )$. Recalling $\Psi = \Tilde{\Psi} - \frac{\alpha^2}{\alpha - \alpha^2} \mathsf{J}$, we can write $\Psi$ as 
\begin{align*}
    \Psi = \omega \mathrm{I} + \sum_{k=2}^{\infty} \frac{c_k}{ \alpha- \alpha^2 } \Psi_k\,,
\end{align*}
where $\omega \in (0.99 \varepsilon_{t+1}, 1.01 \varepsilon_{t+1})$. Again by Proposition \ref{prop_random_vector}, we have
\begin{align*}
    \sum_{i,j} (\frac{c_2}{\alpha - \alpha^2} \Psi_2(i,j))^2 \leq \sum_{i \neq j} \left( \frac{c_2}{\alpha - \alpha^2} \right)^2 \left( \frac{\epsilon}{2} \frac{12}{K_t} \langle \Hat{\beta}^{(t)}_i, \Hat{\beta}^{(t)}_j \rangle  \right)^4  \\
    \leq \frac{10^{9} \varepsilon^4_t}{ (\alpha - \alpha^2)^2} \frac{K_{t+1}^2}{K^2_t} \overset{\eqref{equ_epsilon_t_bound}}{\leq} \frac{10^{12} \varepsilon^2_{t+1}}{\epsilon^4 \iota^2 } \frac{K_{t+1}}{\varkappa}  \leq 10^{-6} \varepsilon^2_{t+1} K_{t+1}\,,
\end{align*}
where we have used \eqref{equ_iter_K} again.
By Lemma \ref{lemma_rank_perturbed}, $\frac{c_2}{\alpha - \alpha^2} \Psi_2$ has at most $0.01 K_{t+1}$ eigenvalues with absolute values larger than $0.01 \varepsilon_{t+1}$. By Proposition \ref{prop_random_vector} and Claim~\ref{claim_Taylor_c_n}, 
$$\|\frac{1}{\alpha - \alpha^2} \sum_{k=3}^{\infty} c_k \Psi_k \|_\infty \leq \sum_{k=3}^{\infty} \frac{(k+1)4^k }{\alpha-\alpha^2} \left( \frac{\epsilon}{2} \frac{ 24  \varepsilon_t \sqrt{ \log K_t}}{\sqrt{K_t}}  \right)^k \leq \frac{ 10^6 \varepsilon_t^3 (\log K_t)^2 }{ K_t^{1.5} }\,.$$
Thus,
$$\|\frac{1}{\alpha - \alpha^2} \sum_{k=3}^{\infty} c_k \Psi_k \|^2_{\mathrm{HS}} \leq K_{t+1}^2 \frac{10^{12} \varepsilon_t^6 (\log K_t)^4}{K_t^3} \leq 10^{-6} K_{t+1} \varepsilon_{t+1}^2\,.$$
By Lemma \ref{lemma_rank_perturbed} the matrix $\sum_{k=3}^{\infty} \frac{c_k}{\alpha- \alpha^2} \Psi_k$ has at most $0.01K_{t+1}$ eigenvalues with absolute values larger than $0.01\varepsilon_{t+1}$. By Lemma \ref{lemma_symmetric_decomposition}, we can write $\frac{c_2}{\alpha - \alpha^2}\Psi_2 = C_1 + D_1$ and $\sum_{k=3}^{\infty} \frac{c_k}{\alpha- \alpha^2} \Psi_k = C_2 + D_2$, where $\| C_1  \|_{\mathrm{op}}, \| C_2  \|_{\mathrm{op}} \leq 0.01 \varepsilon_{t+1}$ and $\mathrm{rank}(D_1), \mathrm{rank}(D_2) \leq 0.01 K_{t+1}$. By Lemma \ref{lemma_sum_symmetric_eigenvalue}, we know $\Psi = ( \omega \mathrm{I} + C_1 + C_2 ) + (D_1 + D_2)$ satisfies $\varsigma_{0.98K_{t+1}} (\Psi) \geq 0.97 \varepsilon_{t+1}$ and $\varsigma_{0.02K_{t+1}+1} (\Psi) \leq 1.03 \varepsilon_{t+1}$. This completes the proof of the lemma.
\end{proof}

\subsection{Running time analysis}\label{sec:runtime-analysis}
In this subsection, we show that Algorithm~\ref{sec:formal-algorithm} runs in polynomial time. 
\begin{Proposition}{\label{prop_time_complexity}}
     The running time for computing each $\pi_{\mathtt m}$ is $O(n^{2+o(1)})$. Furthermore, the running time for Algorithm \ref{sec:formal-algorithm}  is $O(n^{\kappa +2 + o(1)})$.
\end{Proposition}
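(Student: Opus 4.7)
The plan is to bound the cost of each subroutine of Algorithm~\ref{algo:matching} using that $K_{t^*} \leq e^{2(\log\log n)^2} = n^{o(1)}$ and $t^* \sim 2\log_2\log\log n = n^{o(1)}$, so any factor polynomial in $K_t$ or $t^*$ only contributes an $n^{o(1)}$ overhead. This reduces everything to accounting for the $O(n^2)$-type operations that actually touch the two input matrices.

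First I would bound the cost of one step of the iteration (for fixed $\mathtt m$ and $t$). From \eqref{equ_martix_M_P}, the matrices $\mathrm{M}^{(t,s)}_\Gamma,\mathrm{M}^{(t,s)}_\Pi$ for all $s\leq t$ can be assembled in $O(n^{1+o(1)})$ time, since each entry costs $O(n)$ to count a set intersection and there are $\sum_{s\leq t} K_tK_s \leq tK_t^2$ entries. The spectral decompositions of $\Phi^{(t)},\Psi^{(t)}$ in \eqref{eq-spectral-decomposition}, the Gram--Schmidt-type construction of $\eta^{(t)}_j$ inside $\mathrm{V}^{(t)}$, the sampling (with possible resampling by Remark~\ref{remark-sampling}) of $\beta^{(t)}_k$, and the formation of $\Phi^{(t+1)},\Psi^{(t+1)}$ via \eqref{equ_def_iter_matrix} are all linear algebra in dimension $O(K_{t+1})=n^{o(1)}$, hence have cost $n^{o(1)}$. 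The dominant cost comes from evaluating $D^{(t)}_v(k)$ and $\mathsf{D}^{(t)}_{\mathsf v}(k)$ from \eqref{equ_degree} for every $v,\mathsf v$ and every $k\leq K_t$: for each $k$ this is the action of $\hat G$ (resp.\ $\hat{\mathsf G}$) on the vector $(\mathbf{I}_{u\in\Gamma^{(t)}_k}-\alpha)_u$, at a cost of $O(n^2)$, so all $k$ together cost $O(n^2K_t)=O(n^{2+o(1)})$. The subsequent inner products $\langle\sigma^{(t)}_k,D^{(t)}_v\rangle$ and $\langle\sigma^{(t)}_k,\mathsf D^{(t)}_{\mathsf v}\rangle$ for all $k\leq K_{t+1}$ and all $v,\mathsf v$ cost $O(nK_tK_{t+1})=O(n^{1+o(1)})$, and thresholding to form $\Gamma^{(t+1)}_k,\Pi^{(t+1)}_k$ via \eqref{equ_def_iter_sets} is $O(nK_{t+1})$. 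Hence one step is $O(n^{2+o(1)})$, and the $t^*=n^{o(1)}$ steps total $O(n^{2+o(1)})$.

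For the finishing phase, I would reuse $D^{(t^*)}_v,\mathsf D^{(t^*)}_{\mathsf v}$ from the last iteration and precompute the arrays $\{\langle\eta^{(t^*)}_k,D^{(t^*)}_v\rangle\}_{k,v}$ and $\{\langle\eta^{(t^*)}_k,\mathsf D^{(t^*)}_{\mathsf v}\rangle\}_{k,\mathsf v}$ in time $O(nK_{t^*}^2)=O(n^{1+o(1)})$. The condition \eqref{equ_def_matching_ver} can then be checked in $O(K_{t^*})=n^{o(1)}$ per pair $(v,\mathsf v)$, for a total of $O(n^2K_{t^*})=O(n^{2+o(1)})$ over the $n^2$ pairs. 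Together with the $O(nK_0)=O(n)$ initialization in \eqref{equ_initial_set}, this gives the first assertion: each $\pi_{\mathtt m}$ takes $O(n^{2+o(1)})$ time.

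Finally I would multiply by $\mathtt M=n(n-1)\cdots(n-K_0+1)=O(n^{\kappa})$ and add the selection cost in \eqref{equ_def_final_pi_hat}, where evaluating $\sum_{(u,v)\in E(V)}G_{u,v}\mathsf G_{\pi_{\mathtt m}(u),\pi_{\mathtt m}(v)}$ is $O(n^2)$ per surviving $\pi_{\mathtt m}$. This yields an overall runtime $O(\mathtt M\cdot n^{2+o(1)})=O(n^{\kappa+2+o(1)})$, as claimed. I do not expect any genuine obstacle; the only point that requires mild attention is to avoid recomputing the degree vectors across the two usages per step (constructing the new sets and forming the inner products with $\sigma^{(t)}_k$), which is automatic once one views the degree computations as a single $O(n^2)$ matrix-vector product per column.
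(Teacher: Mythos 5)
Your proof is correct and follows essentially the same accounting as the paper's: the dominant cost is $O(K_t n^2)$ per iterate for the degree vectors $D^{(t)}_v,\mathsf D^{(t)}_{\mathsf v}$, plus $O(n^{2+o(1)})$ for the finishing step, summed over $t^*=O(\log\log\log n)$ steps and multiplied by $\mathtt M=O(n^{\kappa})$. (Your precomputation of the $\langle\eta^{(t^*)}_k,D^{(t^*)}_v\rangle$ arrays shaves a $K_{t^*}$ factor off the paper's $O(K_{t^*}^2 n^2)$ finishing-step bound, but since $K_{t^*}=n^{o(1)}$ this is immaterial.)
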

\begin{proof} 
We first prove the first claim. We can compute $\Gamma^{(0)}_k, \Pi^{(0)}_k$ in $O( \kappa n )$ time. In addition, the iteration has $O( \log \log \log n )$ steps, and in each step for $t \leq t^*$ the running time can be bounded as follows: calculating $\mathrm{M}_{\Gamma}, \mathrm{M}_{\Pi}$ takes $O(  K_t (\sum_{s \leq t-1} K_s) n )$ time; calculating $\{D^{(t)}_v, \mathsf{D}^{(t)}_{\mathsf{v}}\}$ takes $O(K_t n^2)$ time; calculating the spectral decomposition of $\Phi^{(t)}$ and $\Psi^{(t)}$ takes $O(K_t^3)$ time; choosing $\eta$'s and calculating $\sigma$'s take $O(K_t^3)$ time; calculating $\Gamma^{(t+1)},\Pi^{(t+1)}$ (when $t \leq t^*-1$) takes $O(K_{t+1} n)$ time. Furthermore, in the finishing step calculating $\pi_{\mathtt{m}}$ takes $O(K_{t+1}^2 n^2)$ time. Therefore, the total amount of time spent on computing each $\pi_{\mathtt{m}}$ is upper-bounded by
\begin{align*}
    O(\kappa n) + \sum_{t \leq t^*}  O(K_{t} n^2)  + O(K_{t^*}^2 n^2) = O(n^{2 + o(1)})\,.
\end{align*}
We now prove the second claim. Since $\mathtt M \leq n^{\kappa}$, the running time for computing all $\pi_{\mathtt{m}}$ is $O(n^{\kappa + 2 +o(1)})$. In addition, finding $\Hat{\pi}$ from $\{ \pi_{\mathtt{m}} \}$ takes $O(n^{\kappa + 2})$ time. So the total running time is $O(n^{\kappa +2 + o(1)})$.
\end{proof}

We complete this section by pointing out that Theorem~\ref{thm-main} follows directly from Theorem~\ref{main-thm} and Proposition~\ref{prop_time_complexity}.

\section{Analysis of the algorithm}
The main goal of this section is to prove Theorem~\ref{main-thm}, and the crucial input is the following proposition. We say a pair of sequences $A = (u_1,u_2,\ldots, u_{K_0})$ and $\mathsf A = (\mathsf{u}_1, \mathsf{u}_2, \ldots, \mathsf{u}_{K_0} )$ is a good pair if
\begin{equation}\label{eq-correct-seeds}
\mathsf{u}_j = \pi(u_j) \mbox{ for } 1 \leq j \leq K_0\,.
\end{equation}
\begin{Proposition}{\label{main-prop}}
For a pair of sequences $(A, \mathsf A)$, define $\pi(A, \mathsf A) = \pi_{\mathsf m}$ if $\mathsf A = \mathsf A_{\mathtt m}$. 
    If $(A, \mathsf A)$ is a good pair, then
    \begin{align*}
        \mathbb{P}[\pi(A, \mathsf A) = \pi ] \geq 1 - o(1)\,.
    \end{align*}
\end{Proposition}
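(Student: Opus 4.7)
The plan is to establish the conclusion via induction on $t \leq t^{*}$, maintaining at each step the inductive hypothesis that the algorithmic matrices $\mathrm{M}^{(t,t)}_{\Gamma}, \mathrm{M}^{(t,t)}_{\Pi}, \mathrm{P}^{(t,t)}_{\Gamma,\Pi}$ concentrate around $\Phi^{(t)},\Phi^{(t)},\Psi^{(t)}$ respectively (with comparable entrywise control on the cross-step matrices $\mathrm{M}^{(t,s)}, \mathrm{P}^{(t,s)}$ for $s<t$), and that conditionally on the history generated by $\{\Gamma^{(s)}_k,\Pi^{(s)}_k\}_{s\leq t}$ and the sampled $\beta$'s, the degree vectors $D^{(t)}_v, \mathsf{D}^{(t)}_{\pi(v)}$ behave as approximately standard Gaussians with pairwise correlation structure dictated by $\Phi^{(t)},\Psi^{(t)}$. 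Once this is propagated to $t = t^{*}$, the concentration of \eqref{equ_def_matching_ver} together with $K_{t^{*}}\varepsilon_{t^{*}}^{2}\to\infty$ (cf.\ \eqref{equ_estimation_K_t_Varepsilon_t}) will let a union bound match every vertex correctly.

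The base case is immediate: since $(A,\mathsf A)$ is a good pair, the columns $\hat G_{\cdot,u_k}$ and $\hat{\mathsf G}_{\cdot,\pi(u_k)}$ are independent across $k$ and standard bivariate Gaussian with correlation $\epsilon/2$ within each $k$, so Hoeffding/Chernoff yields the initial concentration with $\Phi^{(0)},\Psi^{(0)}$ as in \eqref{equ_initial_matrix}. For the inductive step, I would split the analysis into two phases. In \emph{Phase A} I would remove the effect of conditioning on previous steps by a Gaussian-projection argument in the spirit of \cite{BM10}: each $\Gamma^{(s)}_k,\Pi^{(s)}_k$ is defined by the event that a linear functional of the Gaussians $\hat G, \hat{\mathsf G}$ exceeds a threshold, so conditioning determines a finite-dimensional component of $(\hat G,\hat{\mathsf G})$ while leaving the orthogonal complement i.i.d.\ Gaussian. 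The subspace $\mathrm W^{(t)}$ defined in \eqref{equ_linear_space} is engineered precisely so that for $x\in \mathrm W^{(t)}$, the inner product $\langle x, D^{(t)}_v\rangle$ depends only on the fresh Gaussian component, and a parallel statement holds for $\langle x,\mathsf{D}^{(t)}_{\pi(v)}\rangle$; this is why $\eta^{(t)}_j$ are sought inside $\mathrm V^{(t)}\subset \mathrm W^{(t)}$.

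In \emph{Phase B} I would show that, after this projection, the family of pairs
\[
\bigl(\langle \sigma^{(t)}_k, D^{(t)}_v\rangle,\, \langle \sigma^{(t)}_k, \mathsf{D}^{(t)}_{\pi(v)}\rangle\bigr)_{1\leq k\leq K_{t+1}}
\]
is approximately jointly Gaussian, with intra-matrix covariances given by $\tfrac{12}{K_t}\langle \beta^{(t)}_k,\beta^{(t)}_l\rangle$ and cross-matrix covariances given by $\tfrac{\epsilon}{2}\tfrac{12}{K_t}\langle \hat\beta^{(t)}_k,\hat\beta^{(t)}_l\rangle$; indeed this is the very motivation of \eqref{equ_def_iter_matrix}. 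Applying the tail probability function $\phi$ and averaging over $v\in V\setminus A$ then yields, by large-$n$ concentration and the second moment bounds from Proposition~\ref{prop_random_vector} together with the eigenvalue control from Lemma~\ref{lemma_matrix_eigenvalue}, that $\mathrm{M}^{(t+1,t+1)}_{\Gamma}\approx \Phi^{(t+1)}$ and $\mathrm{P}^{(t+1,t+1)}_{\Gamma,\Pi}\approx \Psi^{(t+1)}$ as claimed. I expect Phase~A to be \emph{the} main obstacle: the latent matching $\pi$ couples $\hat G$ and $\hat{\mathsf G}$ in a way the algorithm cannot explicitly remove, so one must carry the projection through on both matrices simultaneously while keeping track of how the joint conditional law survives the nonlinear indicator step $|\cdot|\geq 10$; moreover the conditioning events accumulate in $t$ and one must check that the dimension loss imposed by $\mathrm W^{(t)}$ (at most $3K_{t-1}$ linear constraints) remains negligible compared to $K_t$, which is guaranteed by the super-exponential growth in \eqref{equ_iter_K}.

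For the finishing step at $t=t^{*}$, \eqref{equ_estimation_K_t_Varepsilon_t} gives $K_{t^{*}}\varepsilon_{t^{*}}\to\infty$ faster than any polylog. By Phase~A/B applied with $\eta$ in place of $\sigma$, for $\mathsf v=\pi(v)$ the sum on the left of \eqref{equ_def_matching_ver} has mean $\bigl(1+o(1)\bigr)\tfrac{K_{t^{*}}}{12}\cdot \tfrac{\epsilon}{2}\cdot \tfrac{12}{K_{t^*-1}}\sum_j \eta^{(t^{*})}_j\Psi^{(t^{*})}(\eta^{(t^{*})}_j)^{*}$, which by \eqref{equ_vector_unit} is $\gtrsim K_{t^{*}}\varepsilon_{t^{*}}$, with Gaussian fluctuations of order $\sqrt{K_{t^{*}}}\ll K_{t^{*}}\varepsilon_{t^{*}}$; for $\mathsf v\neq \pi(v)$ the mean vanishes and fluctuations are the same size. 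A union bound over the $O(n^{2})$ pairs $(v,\mathsf v)$ then shows that \eqref{equ_def_matching_ver} selects exactly $\mathsf v=\pi(v)$ for every $v\in V\setminus A$, so $\pi_{\mathtt m}(v)=\pi(v)$ for all $v$, and this together with the good-pair condition on the seeds yields $\pi(A,\mathsf A)=\pi$ with probability $1-o(1)$, completing the proof.
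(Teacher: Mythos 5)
Your outline tracks the paper's actual strategy at a high level: induct on $t$, use Gaussian projection to peel off the conditioning from earlier steps, verify concentration of the matrices $\mathrm{M}^{(t,t)}_\Gamma,\mathrm{M}^{(t,t)}_\Pi,\mathrm{P}^{(t,t)}_{\Gamma,\Pi}$ around $\Phi^{(t)},\Phi^{(t)},\Psi^{(t)}$ (this is exactly Proposition~\ref{prop_admissible} and the admissibility conditions of Definition~\ref{def-admissible}), and then use $K_{t^*}\varepsilon_{t^*}^2\to\infty$ together with sub-Gaussian concentration to conclude. The paper separates these into a standalone admissibility proposition plus a short finishing argument, whereas you fold them together, but that is only presentational.

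However, there is a genuine gap in Phase~A, and it is exactly where the paper's argument is hardest. You write that ``for $x\in \mathrm W^{(t)}$, the inner product $\langle x, D^{(t)}_v\rangle$ depends only on the fresh Gaussian component.'' This is not true, and in the very next breath you observe (correctly) that the latent matching couples $\hat G$ and $\hat{\mathsf G}$ in a way the algorithm cannot remove --- the two statements are in tension. The subspace $\mathrm W^{(t)}$ only kills correlations with past $D^{(s)}_v$ and $\mathsf D^{(s)}_{\mathsf v}$ \emph{at the same vertex} through $\mathrm M_\Gamma,\mathrm M_\Pi$ (cf.\ \eqref{equ_degree_correlation_2}--\eqref{equ_degree_correlation_3}); it does \emph{not} kill correlations through $\mathrm P_{\Gamma,\Pi}$, since that matrix depends on $\pi$ and is inaccessible, and it does nothing about the $O(n)$ off-diagonal terms $\langle\eta^{(t)}_k,D^{(t)}_v\rangle$ vs.\ $\langle\eta^{(s)}_l,\mathsf D^{(s)}_{\pi(u)}\rangle$ with $u\neq v$, each of size $O(K_t/n)$. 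Consequently the conditional mean $\textup{PROJ}(\langle\eta^{(t+1)}_k,D^{(t+1)}_v\rangle;\mathfrak S_t)$ is small but nonzero, and bounding it is the content of most of Section~3. Concretely, the paper writes the projection in terms of the big conditional covariance matrix $\mathbf Q_t$ (see \eqref{equ_formulate_projection}), proves $\|\mathbf Q_t\|_{\mathrm{op}}\lesssim 1$ (Lemma~\ref{lemma_op_norm_Q}) and $\|\mathbf Q_t\|_1\lesssim K_t^8$ (Lemma~\ref{lemma_1_norm_Q}), and then exploits massive cancellation among the $\sim K_t n$ coordinates of $H_t,\mathsf H_t$ via Hanson--Wright-type inequalities and a decoupling argument (Lemmas~\ref{claim_one_part_projection}--\ref{lemma_tail_estimate_cleaned}). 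It also needs Gaussian-density perturbation bounds (Lemma~\ref{lemma_estimation_perturbed_Gaussian_density}, Corollary~\ref{corollary_estimation_perturbed_Gaussian_tail}) to pass the small projection error through the nonlinear threshold $|\cdot|\geq 10$. Your proposal recognizes that this is ``the main obstacle'' but gives no strategy for it, and the incorrect claim about $\mathrm W^{(t)}$ suggests an underestimate of what the projection step actually delivers.

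A minor error: your expression for the finishing-step mean, $\bigl(1+o(1)\bigr)\tfrac{K_{t^{*}}}{12}\cdot \tfrac{\epsilon}{2}\cdot \tfrac{12}{K_{t^*-1}}\sum_j \eta^{(t^{*})}_j\Psi^{(t^{*})}(\eta^{(t^{*})}_j)^{*}$, carries a spurious factor of $K_{t^*}/K_{t^*-1}$; the correct mean is just $\tfrac{\epsilon}{2}\sum_{j=1}^{K_{t^*}/12}\eta^{(t^{*})}_j\Psi^{(t^{*})}(\eta^{(t^{*})}_j)^{*}\asymp K_{t^*}\varepsilon_{t^*}$, which is what \eqref{equ_def_matching_ver} is calibrated against. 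The rest of the finishing argument (fluctuations $\sim\sqrt{K_{t^*}}\ll K_{t^*}\varepsilon_{t^*}$ via Hanson--Wright, then a union bound) matches the paper.
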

Recall that for each pair $(A, \mathsf A_{\mathtt m})$ with $1\leq \mathtt m\leq \mathtt M$, Algorithm~\ref{sec:formal-algorithm} outputs a $\pi_{\mathtt m} = \pi(A, \mathsf A_{\mathtt m})$. Note that there is one (and only one) good pair, and the corresponding output is the true matching with probability tending to 1 by Proposition~\ref{main-prop}. At this point, we apply \cite[Theorem 1]{WXY21+} which in particular says that with probability tending to 1, the maximizer of $\max_{\pi^{\prime}} \{ \sum_{(u,v) \in E(V)} G_{u,v} \mathsf{G}_{\pi^{\prime}(u), \pi^{\prime}(v)}\}$ is unique and is the true matching $\pi$. Therefore, Theorem~\ref{main-thm} follows.

The rest of the paper is devoted to the proof of Proposition~\ref{main-prop}.

\subsection{Outline of proof}\label{sec:analysis-outline}

We fix a good pair $(A, \mathsf A)$.
The basic intuition is that each pair of $( \Gamma^{(t)}_k,\Pi^{(t)}_k)$ carries signal of strength at least $\varepsilon_t$, and thus the total signal strength of all $K_t$ pairs will grow in $t$ (recall \eqref{equ_estimation_K_t_Varepsilon_t}). Of course, this also requires to show that the signals carried by different pairs are essentially non-repetitive. In order to prove this, we pose the following \emph{admissible} conditions on $( \Gamma^{(t)}_k,\Pi^{(t)}_k)$ and we hope that this will allow us to verify these admissible conditions by induction. 

\begin{Definition}{\label{def-admissible}}
    For $t \geq 0$ and a collection of pairs $( \Gamma^{(s)}_k, \Pi^{(s)}_k)_{1 \leq k \leq K_s,0 \leq s \leq t}$ with $\Gamma^{(s)}_k \subset V \backslash A$ and  $\Pi^{(s)}_k \subset \mathsf{V} \backslash \mathsf{A}$, we say $( \Gamma^{(s)}_k,\Pi^{(s)}_k )_{1 \leq k \leq K_s,0 \leq s \leq t}$ is $t$-admissible if the following hold:
    \begin{enumerate}
        \item[(i.)] $\Big|\frac{|\Gamma^{(s)}_k|}{n} -\alpha  \Big |<n^{-0.1}(\log n)^{10s} \prod_{i \leq s} K_i^{100}$ for  $0 \leq s \leq t$;
        \item[(ii.)] $\Big| \frac{|\Pi^{(s)}_k|}{n} -\alpha  \Big |<n^{-0.1}(\log n)^{10s} \prod_{i \leq s} K_i^{100}$ for $0 \leq s \leq t$ ;
        \item[(iii.)] $\Big | \frac{|\Gamma^{(s)}_k \cap \Gamma^{(s)}_{l}|}{n} -\phi( \frac{12}{K_{s-1}}  \langle {\beta}^{(s-1)}_k ,{\beta}^{(s-1)}_l \rangle) \Big| < n^{-0.1}(\log n)^{10s}\prod_{i \leq s} K_i^{100}$ for  $1 \leq s \leq t$;
        \item[(iv.)]$\Big | \frac{|\Gamma^{(0)}_k \cap \Gamma^{(0)}_{l}|}{n} - \alpha^2  \Big| < n^{-0.1}$;
        \item[(v.)] $\Big | \frac{|\Pi^{(s)}_k \cap \Pi^{(s)}_{l}|}{n} -\phi(  \frac{12}{K_{s-1}}  \langle {\beta}^{(s-1)}_k , {\beta}^{(s-1)}_l \rangle) \Big| < n^{-0.1}(\log n)^{10s}\prod_{i \leq s} K_i^{100}$ for  $1 \leq s \leq t$;
        \item[(vi.)] $\Big | \frac{|\Pi^{(0)}_k \cap \Pi^{(0)}_{l}|}{n} - \alpha^2  \Big| < n^{-0.1}$;
        \item[(vii.)] $\Big | \frac{| \pi(\Gamma^{(s)}_k) \cap \Pi^{(s)}_{l}|}{n} -\phi( \frac{ \epsilon }{2}  \frac{12}{K_{s-1}}   \langle \Hat{\beta}^{(s-1)}_k , \Hat{\beta}^{(s-1)}_l \rangle) \Big| < n^{-0.1}(\log n)^{10s} \prod_{i \leq s} K_i^{100}$ for $1 \leq s \leq t$;
        \item[(viii.)] $\Big | \frac{| \pi(\Gamma^{(0)}_k) \cap \Pi^{(0)}_{k}|}{n} - \phi(\epsilon)  \Big| < n^{-0.1}$ and  $\Big | \frac{| \pi(\Gamma^{(0)}_k) \cap \Pi^{(0)}_{l}|}{n} - \alpha^2  \Big| < n^{-0.1} $;
        \item[(ix.)] $\Big | \frac{|\Gamma^{(s)}_k \cap \Gamma^{(r)}_{l}|}{n} - \alpha^2 \Big| < n^{-0.1} (\log n)^{10s} \prod_{i \leq s} K_i^{100}$ for $0 \leq r < s \leq t$;
        \item[(x.)] $\Big | \frac{|\Pi^{(s)}_k \cap \Pi^{(r)}_{l}|}{n} - \alpha^2 \Big| < n^{-0.1}(\log n)^{10s} \prod_{i \leq s} K_i^{100}$ for $0 \leq r < s \leq t$;
        \item[(xi.)] $\Big | \frac{| \pi(\Gamma^{(s)}_k) \cap \Pi^{(r)}_{l}|}{n} - \alpha^2 \Big| < n^{-0.1}(\log n)^{10 \max(s, r)} \prod_{i \leq \max(s, r)} K_i^{100}$ for $0 \leq r \not = s \leq t$.
    \end{enumerate}
    Here $\alpha,K_t,\phi,\beta$ and $\Hat{\beta}^{(t)}$ are defined previously in Section~\ref{sec:algorithm-description}.
\end{Definition}
\begin{Remark}\label{remark-3.5}
   A comment is in order for this seemingly daunting definition of admissibility.  Intuitively we hope that our iteration is ``nearly'' independent between different steps: for instance, knowing $v \in \Gamma^{(t)}_k$ should not have much influence on the iteration in the $(t+1)$-th step. To confirm this intuition, we try to calculate the correlation between the random variables $\langle \eta^{(t)}_k , D^{(t)}_v \rangle$ and $\langle \eta^{(s)}_k , D^{(s)}_u \rangle$ and we even employ a simplification by regarding the random sets $\Gamma^{(t)}_k, \Pi^{(t)}_k$ as fixed and applying a Gaussian computation. Then we see that this ``informal correlation'' is $O(\frac{1}{n})$ when $u \not = v$. Now we focus on the informal correlation when $u = v$. By a simple calculation we get that the informal correlation $\mathbb{E}[ \langle \eta^{(t)}_k , D^{(t)}_v \rangle \langle \eta^{(s)}_l , D^{(s)}_v \rangle ] \approx \eta^{(t)}_k \mathrm{M}_{\Gamma}^{(t,s)} ( \eta^{(s)}_l  )^{*}$ and $\mathbb{E}[ \langle \eta^{(t)}_k , D^{(t)}_v \rangle \langle \eta^{(s)}_l , \mathsf{D}^{(s)}_{ \pi(v) } \rangle ] \approx \eta^{(t)}_k \mathrm{P}_{\Gamma,\Pi}^{(t,s)}( \eta^{(s)}_l  )^{*}$. By \eqref{equ_linear_space} and \eqref{equ_vector_orthogonal},  the informal  correlation between $\langle \eta^{(t)}_k , D^{(t)}_v \rangle$ and $\langle \eta^{(s)}_l , D^{(s)}_v \rangle$ would be zero. However, since the matrix $\mathrm{P}_{\Gamma,\Pi}$ is inaccessible by the algorithm, our algorithm cannot make a choice such that $\langle \eta^{(t)}_k , D^{(t)}_v \rangle$ is uncorrelated with $\langle \eta^{(s)}_l , \mathsf{D}^{(s)}_{\pi(v)} \rangle$. In order to address this, we assume that $\mathrm{P}^{(t,s)}_{\Gamma,\Pi}$ concentrates around some deterministic matrix (which is then accessible by the algorithm). Then, we can choose $\eta$'s such that the aforementioned correlation is close to 0. This deterministic matrix turns out to be the zero matrix for $t \not= s$ and to be $\Psi^{(t)}$ for $t=s$. Finally, we also need to justify that the informal correlation considered above is a good approximation of the real correlation; this is why for $t$-admissibility we make assumptions for all $s\leq t$ and then the Gaussian computation (with conditioning on linear statistics) can be rigorously justified (see Section~\ref{sec:conditional-distributions}). All of these considerations contribute to the complexity of the admissible conditions above.
\end{Remark}
As mentioned above, the main purpose of the admissible conditions is to ensure the concentration of a few matrices, as we now explain in details. Write 
\begin{equation}\label{eq-def-Delta-s}
\Delta_s = n^{-0.1}(\log n)^{10s} \prod_{i \leq s} K_i^{100}
\end{equation}
 for $0 \leq s \leq t$.
Under the $t$-admissible assumption, we see that for $s \leq t$
    \begin{align}
        &\Big| \frac{ |\Gamma^{(s)}_i \cap \Gamma^{(s)}_j| - \alpha |\Gamma^{(s)}_i| - \alpha |\Gamma^{(s)}_j| + \alpha^2 n }{(\alpha - \alpha^2) n}-\Phi^{(s)}(i,j) \Big| \nonumber\\ 
        &\leq  \frac{1}{\alpha-\alpha^2} \Big|  \frac{|\Gamma^{(s)}_i \cap \Gamma^{(s)}_j|}{n} - \phi ( \frac{12}{K_t} \langle {\beta}^{(t)}_i,{\beta}^{(t)}_j \rangle )  \Big| + \frac{\alpha}{\alpha-\alpha^2} \Big|  \frac{|\Gamma^{(s)}_i|}{n} - \alpha  \Big| + \frac{\alpha}{\alpha-\alpha^2} \Big|  \frac{|\Gamma^{(s)}_j|}{n} - \alpha  \Big| \nonumber\\
        &\overset{ (i.), (iii.) }{\leq} \frac{2}{\alpha} \Delta_s\,,
        \label{equ_concentrate_M_Gamma}
    \end{align}
    where the first inequality follows from \eqref{equ_def_iter_matrix} and the triangle inequality. 
    Similarly (by applying (ii.), (v.) instead of (i.), (iii.)), we get that
    \begin{equation}
       \Big| \frac{ |\Pi^{(s)}_i \cap \Pi^{(s)}_j| - \alpha |\Pi^{(s)}_i| - \alpha |\Pi^{(s)}_j| + \alpha^2 n }{(\alpha - \alpha^2) n}-\Phi^{(s)}(i,j) \Big|  \leq\frac{2}{\alpha} \Delta_s\,.
        \label{equ_concentration_M_Pi}
    \end{equation}
    In addition, by \eqref{equ_def_iter_matrix} and the triangle inequality we have that
    \begin{align}
        &\Big| \frac{ |\pi(\Gamma^{(s)}_i) \cap \Pi^{(s)}_j| - \alpha |\Gamma^{(s)}_i| - \alpha |\Pi^{(s)}_j| + \alpha^2 n }{(\alpha - \alpha^2) n}-\Psi^{(s)}(i,j) \Big| \nonumber \\ 
        &\leq \frac{1}{\alpha-\alpha^2} \Big|  \frac{|\pi(\Gamma^{(s)}_i) \cap \Pi^{(s)}_j|}{n} - \phi( \frac{\epsilon}{2}\frac{12}{K_t} \langle \Hat{\beta}^{(t)}_i, \Hat{\beta}^{(t)}_j \rangle)  \Big| + \frac{\alpha}{\alpha-\alpha^2} \Big|  \frac{|\Gamma^{(s)}_i|}{n} - \alpha  \Big| + \frac{\alpha}{\alpha-\alpha^2} \Big|  \frac{|\Pi^{(s)}_j|}{n} - \alpha  \Big| \nonumber\\
        & \overset{ (i.), (ii.), (vii.) }{\leq} \frac{2}{\alpha} \Delta_s\,.
        \label{equ_concentration_P_Gamma_Pi}
    \end{align}
Therefore, the matrices $\mathrm{M}^{(t,t)}_{\Gamma}, \mathrm{M}^{(t,t)}_{\Pi}, \mathrm{P}^{(t,t)}_{ \Gamma,\Pi}$ concentrate around $\Phi^{(t)}, \Phi^{(t)},\Psi^{(t)}$ respectively (the case of $t=0$ can be derived separately using (iv.), (vi.) and (viii.) in a similar manner). In addition,  recalling \eqref{equ_martix_M_P} and applying the triangle inequality again, we get that for $s<t$
    \begin{equation}
        | \mathrm{M}^{(t,s)}_{\Gamma}(i,j) | \leq  \frac{1}{\alpha - \alpha^2} \Big(\Big|  \frac{|\Gamma^{(t)}_i \cap \Gamma^{(s)}_j|}{n} - \alpha^2  \Big| + \alpha \Big|  \frac{|\Gamma^{(t)}_i|}{n} - \alpha  \Big| +  \alpha\Big|  \frac{|\Gamma^{(s)}_j|}{n} - \alpha  \Big| \Big) \overset{(i.),(ix.)}{\leq} \frac{2}{\alpha} \Delta_t\,.
        \label{equ_bound_M_Gamma_t_s}
    \end{equation}
    Similarly, by applying (ii.), (x.) instead of (i.), (ix.) we get that for $s<t$
    \begin{equation}
    | \mathrm{M}^{(t,s)}_{\Pi}(i,j) |   \leq \frac{2}{\alpha} \Delta_t\,.
    \label{equ_bound_M_Pi_t_s}
    \end{equation}
Furthermore, by \eqref{equ_martix_M_P} and the triangle inequality again, we get that for $s<t$
    \begin{equation}
        | \mathrm{P}^{(t,s)}_{\Gamma,\Pi}(i,j) | 
        \leq  \frac{1}{\alpha - \alpha^2}  \Big( \Big|  \frac{|\pi(\Gamma^{(t)}_i) \cap \Pi^{(s)}_j|}{n} - \alpha^2  \Big| + \alpha \Big|  \frac{|\Gamma^{(t)}_i|}{n} - \alpha  \Big| + \alpha \Big|  \frac{|\Pi^{(s)}_j|}{n} - \alpha  \Big|\Big) \leq\frac{2}{\alpha} \Delta_t\,,
        \label{equ_bound_P_Gamma_Pi_t_s}
    \end{equation}
where the last inequality follows from (i.), (ii.) and (xi.) (and the same bound holds for $\mathrm{P}^{(s,t)}_{\Gamma,\Pi}$ by the same argument). 
Therefore, we obtain that for $s<t$ the matrices $\mathrm{M}^{(t,s)}_{\Gamma}, \mathrm{M}^{(t,s)}_{\Pi}, \mathrm{P}^{(t,s)}_{\Gamma,\Pi}, \mathrm{P}^{(s,t)}_{\Gamma,\Pi}$ have entries upper-bounded by $ 2 \alpha^{-1} \Delta_t$.

We continue to explain our proof ideas. For $t \geq 0$ , define
\begin{equation*}
    \mathcal{E}_{t} = \{     ( \Gamma^{(s)}_k,\Pi^{(s)}_k  )_{1 \leq k \leq K_s, 0 \leq s \leq t}   \mbox{ is $t$-admissible}\}\,.
\end{equation*}
The key is to prove the following:
\begin{Proposition}{\label{prop_admissible}}
We have that $\mathbb P[\mathcal E_{t^*}] \geq 1 -o(1)$.
\end{Proposition}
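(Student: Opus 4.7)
The plan is to prove Proposition~\ref{prop_admissible} by induction on $t$, establishing
\[
\mathbb{P}[\mathcal{E}_{t+1}^c \mid \mathcal{E}_t] \leq n^{-\delta}
\]
for some absolute $\delta>0$ and every $0\leq t < t^*$. Since $t^*\sim 2\log_2\log\log n$ and $K_{t^*}\leq e^{2(\log\log n)^2}$ is subpolynomial, a union bound on $t$ and on the $O(K_{t^*}^2)$ pairs of indices $(k,l)$ involved in each of the eleven conditions is affordable. The inflation factor $(\log n)^{10s}\prod_{i\leq s}K_i^{100}$ in $\Delta_s$ (see \eqref{eq-def-Delta-s}) is built precisely so as to absorb the accumulated slack through $t^*$ iterations while still going to zero with $n$.

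\textbf{Base case ($t=0$).} Conditions (i), (ii), (iv), (vi) concern sums of i.i.d.\ Bernoulli$(\alpha)$ indicators of the form $\mathbf{I}_{|\Hat{G}_{v,u_k}|\geq 10}$ as $v$ ranges over $V\setminus A$ (and analogous products $\mathbf{I}_{|\Hat G_{v,u_k}|\geq 10}\mathbf{I}_{|\Hat G_{v,u_l}|\geq 10}$ for $k\neq l$, which are again i.i.d.\ Bernoulli with mean $\alpha^2$ by independence of $\Hat G_{v,u_k}$ and $\Hat G_{v,u_l}$). Bernstein's inequality delivers the $n^{-0.1}$ slack. For (viii), the good-pair hypothesis \eqref{eq-correct-seeds} implies $(\Hat{G}_{v,u_k},\Hat{\mathsf{G}}_{\pi(v),\pi(u_k)})$ is jointly Gaussian with covariance $\epsilon/2$, so the product of indicators has mean $\phi(\epsilon)$ for matching $k$ and $\alpha^2$ for non-matching; Bernstein again applies. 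Conditions (iii), (v), (vii), (ix), (x), (xi) are vacuous at $t=0$.

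\textbf{Inductive step.} Assume $\mathcal{E}_t$, and let $\mathfrak{S}_t$ be the $\sigma$-algebra generated by the random signs $\{\beta^{(s)}\}_{s\leq t}$, the sets $\{\Gamma^{(s)}_k,\Pi^{(s)}_k\}_{s\leq t}$, and the degree vectors $\{D^{(s)}_v,\mathsf{D}^{(s)}_{\mathsf{v}}\}_{s\leq t,\, v,\mathsf{v}}$. The core computation is to show that, conditional on $\mathfrak{S}_t$, the random vector $\bigl(\langle\sigma^{(t)}_k,D^{(t)}_v\rangle,\langle\sigma^{(t)}_l,\mathsf{D}^{(t)}_{\pi(v)}\rangle\bigr)_{k,l}$ is (approximately) centered jointly Gaussian whose covariance is determined by $\Phi^{(t)}$ and $\Psi^{(t)}$, up to an error of order $\Delta_t$ controlled via \eqref{equ_concentrate_M_Gamma}--\eqref{equ_bound_P_Gamma_Pi_t_s}. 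This is a Gaussian projection argument: one writes each $\Hat{G}_{v,u}$ as its projection onto $\mathfrak{S}_t$ plus a residual Gaussian, then uses the orthogonality relations \eqref{equ_vector_orthogonal}--\eqref{equ_vector_unit} to verify that the projection contributions from previous steps nearly cancel. Once this conditional Gaussian structure is established, each of the conditions (iii)--(xi) for $t+1$ becomes a concentration statement for a sum over $v$ (or pairs $(v,w)$) of bounded random variables that are (conditionally on $\mathfrak{S}_t$) independent across $v$, so Bernstein's inequality followed by a union bound yields the $\Delta_{t+1}$-admissibility.

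\textbf{The main obstacle.} As flagged in Remark~\ref{remark-3.5}, the essential difficulty is that the conditioning coupling between $D^{(t)}_v$ and $\mathsf{D}^{(s)}_{\pi(v)}$ passes through $\mathrm{P}^{(t,s)}_{\Gamma,\Pi}$, which depends on the latent matching $\pi$ and is therefore not accessible to the algorithm; one cannot simply subtract this projection in constructing $\eta^{(t)}_i$. The resolution is built into $\mathcal{E}_t$: conditions (vii), (viii), (xi) force $\mathrm{P}^{(t,s)}_{\Gamma,\Pi}$ to concentrate around the algorithmically computable surrogate, namely $\Psi^{(s)}$ when $t=s$ and $0$ when $t\neq s$. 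The orthogonality of $\eta^{(t)}_i$ with respect to $\Psi^{(t)}$ enforced by \eqref{equ_linear_space}--\eqref{equ_vector_orthogonal} then cancels the leading cross-term, so that the inaccessible part of the projection contributes only at the $\Delta_t$ level. Tracking how these $\Delta_t$-errors propagate through all eleven conditions — especially (xi), which couples two distinct iteration times $r\neq s$ and where the same surrogate argument must be applied twice — is where most of the technical work will concentrate, and it explains why the slack $(\log n)^{10s}\prod_{i\leq s}K_i^{100}$ in $\Delta_s$ needs to be as generous as it is.
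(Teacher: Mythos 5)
Your high-level picture — induction over $t$, Bernstein for the base case, Gaussian projection to handle conditioning, and the role of the admissibility conditions in forcing $\mathrm{P}_{\Gamma,\Pi}$ to concentrate around an accessible surrogate so that the $\eta$-orthogonality kills the leading cross-term — is correct and matches the paper's intuition. But the induction as you state it, $\mathbb{P}[\mathcal{E}_{t+1}^c\mid\mathcal{E}_t]\leq n^{-\delta}$, is not the induction the paper runs, and there is a genuine reason it cannot be run in that form: $\mathcal{E}_t$ (concentration of set sizes and intersections) does not by itself control the conditional distribution of the degree statistics given $\mathfrak{S}_{t-1}$. To make the Gaussian-projection step quantitative one needs to know that (a) the conditional means $\textup{PROJ}(\langle\eta^{(t)}_k,D^{(t)}_v\rangle;\mathfrak{S}_{t-1})$ are small, (b) the conditional variances of the $\textup{GAUS}$ parts are $O(K_t^{20}\Delta_t^2)$, and (c) the degree statistics themselves are bounded by $\log n$. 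The paper encodes these as three auxiliary events $\mathcal{A}_t$, $\mathcal{C}_t$, $\mathcal{B}_t$, none of which is implied by $\mathcal{E}_t$ alone, and runs an \emph{interleaved} induction: $(\mathcal{A}_t,\mathcal{B}_t,\mathcal{C}_t,\mathcal{E}_{t+1})\Rightarrow\mathcal{A}_{t+1}$, then $(\mathcal{A}_{t+1},\mathcal{E}_{t+1})\Rightarrow\mathcal{B}_{t+1},\mathcal{C}_{t+1}$, then $(\mathcal{A}_{t+1},\mathcal{B}_{t+1},\mathcal{C}_{t+1},\mathcal{E}_{t+1})\Rightarrow\mathcal{E}_{t+2}$. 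Without introducing these events your induction hypothesis is too weak to control the conditional law at step $t+1$.

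A second issue is that the hardest part of the argument — bounding the projection terms to establish $\mathcal{A}_{t+1}$ — does not appear in your plan at all, and it is emphatically not a Bernstein/union-bound computation. After expansion (via the inverse conditional covariance $\mathbf{Q}_t$, whose operator and $1$-norms must themselves be bounded as in Lemmas~\ref{lemma_op_norm_Q}--\ref{lemma_1_norm_Q}), the projection term becomes a \emph{quadratic form} in sub-Gaussian variables mixing indicators and Gaussian linear functionals; controlling it requires the Hanson--Wright inequality (Lemma~\ref{lemma_Hanson_Wright}) and a decoupling argument (Lemma~\ref{lemma_decoupling_subGaussian}, used in Lemma~\ref{lemma_tail_estimate_cleaned}). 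Even for the step you do describe — verifying (iii)--(xi) for $t+1$ — the variables $\mathbf{I}_{v\in\Gamma^{(t+2)}_k}$ are only \emph{approximately} conditionally independent across $v$: the paper establishes approximate conditional Gaussianity via the density-perturbation estimate (Lemma~\ref{lemma_estimation_perturbed_Gaussian_density}, Corollary~\ref{corollary_estimation_perturbed_Gaussian_tail}), bounds the first moment and the pairwise covariances separately, and then applies Chebyshev, not Bernstein. Finally, a smaller but not cosmetic point: your $\mathfrak{S}_t$ conditions on the full degree vectors $D^{(s)}_v\in\mathbb{R}^{K_s}$, whereas the paper conditions only on $\langle\eta^{(s)}_k,D^{(s)}_v\rangle$; the orthogonality relations \eqref{equ_vector_orthogonal}--\eqref{equ_vector_unit} are tailored exactly to these $\eta$-functionals, and conditioning on the larger $\sigma$-algebra would change the residual Gaussian and break the cancellation you are relying on.
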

It is fairly obvious at this point that the proof of Proposition~\ref{prop_admissible} would be via induction. To this end, we write
\begin{align*}
    \mathbb{P}[\mathcal{E}_{t^*}^{c}] =  \mathbb{P}[\mathcal{E}_0^{c}] + \sum_{1 \leq t \leq t^*} \mathbb{P}[\mathcal{E}_t^{c} \cap \mathcal{E}_{t-1}]\,.
\end{align*}
By a straightforward concentration result, we see that $\mathbb P[ \mathcal{E}_0] \geq 1 - O((\log n)^{-3})$ (see \eqref{eq-mathcal-E-0-bound} for more detailed explanation). 
Thus, it remains to upper-bound $\mathbb{P}[ \mathcal{E}^{c}_t \cap \mathcal{E}_{t-1}]$.


Recall \eqref{equ_degree} and \eqref{equ_def_iter_sets}. As hinted earlier, the main difficulty arises from the complicate dependency among the iterative steps. For instance, $\langle \eta^{(t)}_k,D^{(t)}_v \rangle$ does not only depend on $G$, but also depends on $\{  \langle \eta^{(s)}_k,D^{(s)}_v \rangle, \langle \eta^{(s)}_k,\mathsf{D}^{(s)}_{\mathsf{v}} \rangle\}$ for $s<t$. To be more precise, it is a linear combination of entries in $\hat G$ whereas the coefficients depend on $( \Gamma^{(t)}_k,\Pi^{(t)}_k)$ which in turn depends on $\{ \langle \eta^{(t-1)}_k,D^{(t-1)}_v \rangle , \langle \eta^{(t-1)}_k,\mathsf{D}^{(t-1)}_{\mathsf{v}} \rangle\}$. In order to address this, we can regard $\{ \Gamma^{(0)}_k, \Pi^{(0)}_k  \}$ as a fixed admissible realization: since the initialization only uses $\{ G_{u, v}, \mathsf{G}_{\mathsf u,\mathsf{v}}: u\in A, \mathsf u\in\mathsf A \}$,  conditioning on the initialization will not change the law of $\{ \Hat{G}_{v,w} , \Hat{\mathsf{G}}_{\mathsf{v,w}} : v,w \in V \backslash A, \mathsf{v,w} \in \mathsf{V} \backslash \mathsf{A} \}$. The much harder part is to deal with the correlation in the iteration. One way to address this, as carried out in \cite{BM10} for a different model (which in particular only involves one matrix), is to consider instead the conditional distribution of $(\Hat{G},\Hat{\mathsf{G}})$ under $\mathfrak{S}_{t-1}$, where $\mathfrak{S}_{t-1}= \sigma \{ \langle \eta^{(s)}_k,D^{(s)}_v \rangle, \langle \eta^{(s)}_k,\mathsf{D}^{(s)}_{\mathsf{v}} \rangle : 0 \leq s \leq t-1, v\in V\setminus A, \mathsf v\in \mathsf V \setminus \mathsf A \}$. Since $ \Gamma^{(s)}_k,\Pi^{(s)}_k$ are measurable with respect to  $\mathfrak{S}_{t-1}$ for $s \leq t$, the conditioning is equivalent to compute the conditional distribution of $\hat G$ under several linear constraints as given in \eqref{equ_degree} and \eqref{equ_def_iter_sets}. At this point, Lemma~\ref{lemma_Gaussian_condition} (below) will play a very useful role in removing correlations by taking projections.

However, there are various obstacles in making the above outline into a rigorous proof, and these will be explained and treated in the rest of the section.

\subsection{A few probability inequalities}

We collect in this subsection a few useful probability inequalities that will be used in later analysis repeatedly. 

The following Hanson-Wright inequality is useful in controlling the quadratic form of sub-Gaussian variables (see \cite{HW71, Wright73, Latala06, DKN10, BM13, FR13, RV14} and in particular see \cite{RV14} for an outstanding account on this topic and for an extremely nice proof inspired by \cite{Bourgain99}). For a random variable $\xi$, we define $\|\xi\|_{\psi_2} = \sup_{p\geq 1} p^{-1/2}\E |\xi^p|^{1/p}$.
\begin{Lemma}{\label{lemma_Hanson_Wright}}
    \textup{(Hanson-Wright Inequality)} Let $X=(X_1,\ldots,X_m)$ be a random vector with independent components which satisfy  $\mathbb{E}[X_i]=0$ and  $\| X_i \|_{\psi_2} \leq K$ for all $1\leq i\leq m$. If $A$ is an $m*m$ symmetric matrix, then for an absolute constant $c>0$ we have
    \begin{equation}
        \mathbb{P} [ \left|X A X^{*} - \mathbb{E}[X A X^{*} ] \right| > s  ] \leq 2 \exp \Big(  -c  \min \Big\{ \frac{s^2}{ K^4 \|A\|^2_{\mathrm{HS}}}, \frac{s}{ K^2 \|A\|_{\mathrm{op}}}  \Big\} \Big)\,.
        \label{equ_Hanson_Wright_tail}
    \end{equation}
\end{Lemma}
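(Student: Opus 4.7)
The plan is to follow the standard Rudelson--Vershynin route. First, I would decompose the centered quadratic form into its diagonal and off-diagonal contributions:
$$XAX^{*} - \mathbb{E}[XAX^{*}] = \sum_{i} a_{ii}\big(X_i^2 - \mathbb{E}X_i^2\big) + \sum_{i \ne j} a_{ij} X_i X_j\,,$$
and bound each piece separately; a union bound then produces the claim with $s$ replaced by $s/2$ in each piece.

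For the diagonal contribution, each $X_i^2 - \mathbb{E}X_i^2$ is centered with sub-exponential norm $\lesssim K^2$. Bernstein's inequality for independent sub-exponential summands yields
$$\mathbb{P}\Big[\,\Big|\sum_i a_{ii}(X_i^2 - \mathbb{E}X_i^2)\Big| > s/2\,\Big] \leq 2\exp\Big( -c\min\Big\{ \frac{s^2}{K^4 \sum_i a_{ii}^2},\, \frac{s}{K^2 \max_i |a_{ii}|} \Big\} \Big)\,,$$
which is stronger than what we need after using $\sum_i a_{ii}^2 \leq \|A\|_{\mathrm{HS}}^2$ and $\max_i |a_{ii}| \leq \|A\|_{\mathrm{op}}$. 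For the off-diagonal part, the classical move is a decoupling step: letting $X'$ be an independent copy of $X$ and $A_{\mathrm{off}}$ denote $A$ with its diagonal zeroed out, a Bourgain--Tzafriri style decoupling inequality reduces the task to controlling the fully decoupled bilinear form $X A_{\mathrm{off}} X'^{*}$ (up to absolute constants).

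For the decoupled form I would run a conditional Chernoff argument. Conditional on $X'$, the quantity $X A_{\mathrm{off}} X'^{*} = \sum_i X_i (A_{\mathrm{off}} X'^{*})_i$ is a weighted sum of independent centered sub-Gaussians, so
$$\mathbb{E}\big[\exp(\lambda X A_{\mathrm{off}} X'^{*}) \,\big|\, X'\big] \leq \exp\big(c_1 \lambda^2 K^2 \|A_{\mathrm{off}} X'^{*}\|_2^2\big)\,.$$
Taking expectation over $X'$ and controlling $\|A_{\mathrm{off}} X'^{*}\|_2^2$ as a sub-Gaussian quadratic form leads to an MGF bound $\exp(c_2 \lambda^2 K^4 \|A\|_{\mathrm{HS}}^2)$ valid in the range $|\lambda| \leq c_3/(K^2 \|A\|_{\mathrm{op}})$. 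Optimizing $\lambda$ in Markov's inequality then produces the two-tailed bound with the claimed $\|A\|_{\mathrm{HS}}$ versus $\|A\|_{\mathrm{op}}$ crossover.

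The main obstacle in making the proof fully self-contained is precisely the last step---controlling the MGF of $\|A_{\mathrm{off}} X'^{*}\|_2^2$ with the correct dependence on $\|A\|_{\mathrm{HS}}$ and $\|A\|_{\mathrm{op}}$, without introducing a spurious factor of $\|A^{*}A\|_{\mathrm{op}}$ in the wrong place. The cleanest way to avoid a circular appeal is the Rudelson--Vershynin device: reduce to the Gaussian chaos case via a comparison that replaces $X$ by a Gaussian vector at a cost absorbed into $K$, since for a Gaussian chaos the MGF can be computed exactly via the eigenvalues of $A_{\mathrm{off}}$ and the two-term tail with the stated dependence on $\|A\|_{\mathrm{HS}}$ and $\|A\|_{\mathrm{op}}$ falls out directly. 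I would structure the writeup along that line; alternatively, since this inequality is used only as a black box in the rest of the paper, one can simply cite \cite{RV14}.
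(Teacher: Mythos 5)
The paper does not prove Lemma~\ref{lemma_Hanson_Wright}; it states the inequality as a known result and refers to the literature, singling out \cite{RV14} for the proof. Your sketch is a correct outline of that Rudelson--Vershynin argument (diagonal/off-diagonal split, Bernstein for the diagonal, decoupling plus conditional Chernoff plus a Gaussian comparison for the off-diagonal), and your closing observation --- that since the inequality is used only as a black box one can simply cite \cite{RV14} --- is precisely what the paper does.
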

\begin{Corollary}{\label{corollary_Hanson_Wright_Gaussian}}
    For any Gaussian vector $X=(X_1,\ldots,X_m)$ and an $m*m$ symmetric matrix $A$, the following holds for an absolute constant $c>0$
    \begin{align}
        \mathbb{P} \left[ |X A X^{*} - \mathbb{E}[ X A X^*]| > s \right] \leq 2 \exp \Big( -c \min \Big\{ \frac{s^2}{\mathbb{E}[(X A X^{*})^2]} , \frac{s}{\sqrt{\mathbb{E}[(X A X^{*})^2]}} \Big\}  \Big)\,.
        \label{equ_corollary_Hanson_Wright_tail}
    \end{align}
\end{Corollary}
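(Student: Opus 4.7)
The plan is to reduce to the independent-components setting of Lemma~\ref{lemma_Hanson_Wright} by a standard whitening of $X$, and then to convert the Hilbert-Schmidt and operator norms that appear there into the single quantity $\mathbb{E}[(XAX^{*})^2]$ using the variance formula for a Gaussian quadratic form. First I would assume without loss of generality that $X$ is centered: otherwise write $X = \mu + Y$ with $Y$ centered Gaussian and expand $XAX^{*} = YAY^{*} + 2 \mu A Y^{*} + \mu A \mu^{*}$. The deterministic term affects only the mean and disappears upon centering, while the linear cross term is itself a centered Gaussian whose variance is $4\mu A \Sigma A \mu^{*}$ and is easily seen to be dominated by $\mathbb{E}[(XAX^{*})^2]$, hence contributes a sub-Gaussian deviation already covered by the claimed bound.

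With $X$ centered, let $\Sigma$ denote its covariance and write $X = Z \Sigma^{1/2}$ for a standard Gaussian vector $Z$ with independent components (if $\Sigma$ is singular, restrict to its range, which does not change $X A X^{*}$). Setting $B := \Sigma^{1/2} A \Sigma^{1/2}$, the matrix $B$ is symmetric and $XAX^{*} = ZBZ^{*}$. Since $\|Z_i\|_{\psi_2}$ is an absolute constant, Lemma~\ref{lemma_Hanson_Wright} gives
\begin{equation*}
\mathbb{P}\bigl[\,|Z B Z^{*} - \mathbb{E}[Z B Z^{*}]| > s\,\bigr] \leq 2 \exp\Bigl(-c \min\Bigl\{ \tfrac{s^2}{\|B\|_{\mathrm{HS}}^2}, \tfrac{s}{\|B\|_{\mathrm{op}}}\Bigr\}\Bigr)
\end{equation*}
for an absolute constant $c>0$. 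It remains to replace $\|B\|_{\mathrm{HS}}^2$ and $\|B\|_{\mathrm{op}}$ by $\mathbb{E}[(XAX^{*})^2]$.

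For this I would invoke the classical identity $\mathrm{Var}(ZBZ^{*}) = 2\|B\|_{\mathrm{HS}}^2$ for centered Gaussian quadratic forms (a direct consequence of Wick's formula applied to $\mathbb{E}[Z_i Z_j Z_k Z_\ell]$). Combined with $\mathrm{Var}(XAX^{*}) \leq \mathbb{E}[(XAX^{*})^2]$, this yields $\|B\|_{\mathrm{HS}}^2 \leq \tfrac{1}{2}\mathbb{E}[(XAX^{*})^2]$, and the elementary bound $\|B\|_{\mathrm{op}} \leq \|B\|_{\mathrm{HS}}$ then gives $\|B\|_{\mathrm{op}} \leq \bigl(\tfrac{1}{2}\mathbb{E}[(XAX^{*})^2]\bigr)^{1/2}$. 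Plugging both into the Hanson-Wright tail bound and absorbing the factor of $\tfrac{1}{2}$ into $c$ produces exactly \eqref{equ_corollary_Hanson_Wright_tail}. No step is a serious obstacle; the only item worth a careful check is the non-centered reduction, where one must verify that the linear cross term does not produce a larger deviation than the quadratic part, which follows from the bound on its variance given above.
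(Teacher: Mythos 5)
Your proof follows essentially the same route as the paper's: whiten $X$ into a vector of i.i.d.\ standard Gaussians (the paper writes $X=YT$, you write $X=Z\Sigma^{1/2}$), apply Lemma~\ref{lemma_Hanson_Wright} to the resulting quadratic form, replace the operator norm by the Hilbert--Schmidt norm, and then bound $\|B\|_{\mathrm{HS}}^2$ by $\mathbb{E}[(XAX^*)^2]$ (the paper via $\|TAT^*\|_{\mathrm{HS}}^2\le\mathbb{E}[(YTAT^*Y^*)^2]$, you via $\mathrm{Var}(ZBZ^*)=2\|B\|_{\mathrm{HS}}^2\le\mathbb{E}[(XAX^*)^2]$, which is the same computation). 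One small point in your favor: the paper's representation $X=YT$ implicitly treats $X$ as centered and says nothing about nonzero mean, whereas you explicitly split off the linear cross term $2\mu AY^*$ and check that its variance $4\mu A\Sigma A\mu^*$ is dominated by $\mathrm{Var}(XAX^*)\le\mathbb{E}[(XAX^*)^2]$, so the corollary as stated (``any Gaussian vector'') really does hold.
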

\begin{proof}
We may write $X =  Y T$, where $T$ is an $l*m$ matrix (for some $l\leq m$), $Y = (Y_1, \ldots, Y_l)$ and $Y_1, \ldots, Y_l$ are i.i.d.\ standard normal variables. Then $X A X^{*} = Y T A T^{*} Y^{*}$. Thus,
\begin{align*}
    \mathbb{P}[ |X A X^{*} - \mathbb{E}[ X A X^*]| > s ] &= \mathbb{P}[ |Y T A T^{*} Y^{*} - \mathbb{E}[ Y T A T^* Y^{*}]| > s ]  \\
    &\leq \exp \Big \{ -c \min \Big(  \frac{ s^2 }{ \| T A T^{*} \|^2_{\mathrm{HS}} }, \frac{s}{ \| T A T^{*} \|_{\mathrm{HS}} }   \Big) \Big \}\,,
\end{align*}
where the inequality follows from  Lemma~\ref{lemma_Hanson_Wright} and the fact that $\| TAT^* \|_{\mathrm{op}} \leq \| TAT^* \|_{\mathrm{HS}}$. Since $\| T A T^{*} \|^2_{\mathrm{HS}} \leq \mathbb{E} [ ( Y T A T^{*} Y^{*} )^2   ] = \mathbb{E} [ (X A X^{*})^2 ]$, the corollary follows. 
\end{proof}

\begin{Corollary}{\label{corollary_Hanson_Wright_independent_subGaussian}}
    Let $X_1, \ldots, X_\ell, Y_1, \ldots, Y_m $ be independent variables with mean-zero such that $\| X_i \|_{\psi_2}, \| Y_j \|_{\psi_2} \leq K$ for all $1\leq i\leq \ell, 1\leq j\leq m$. Let $X = (X_1, \ldots, X_\ell)$ and $Y = (Y_1, \ldots, Y_m)$.  For any $\ell* m$ matrix $A$, we have that for an absolute constant $c>0$
    \begin{align*}
        \mathbb{P} [ | X A Y^{*} | > s ] \leq 2 \exp \Big \{ - c  \min \Big( \frac{s^2}{K^4 \| A \|^2_{\mathrm{HS}}}, \frac{ s }{ K^2 \| A \|_{\mathrm{HS}} }   \Big) \Big \}\,.
    \end{align*}
\end{Corollary}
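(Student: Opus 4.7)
The plan is to reduce the bilinear form $X A Y^{*}$ to a quadratic form in a single sub-Gaussian vector, then invoke Lemma~\ref{lemma_Hanson_Wright}. First I would concatenate $Z = (X_1, \ldots, X_\ell, Y_1, \ldots, Y_m) \in \mathbb{R}^{\ell+m}$; by the independence hypothesis, $Z$ has independent, mean-zero coordinates each with $\psi_2$-norm at most $K$, so it satisfies the assumptions of Lemma~\ref{lemma_Hanson_Wright}. Next I would introduce the symmetric block ``dilation''
\begin{equation*}
B = \frac{1}{2}\begin{pmatrix} 0_{\ell \times \ell} & A \\ A^{*} & 0_{m \times m} \end{pmatrix},
\end{equation*}
so that a one-line block expansion yields $Z B Z^{*} = \tfrac{1}{2}(X A Y^{*} + Y A^{*} X^{*}) = X A Y^{*}$, while the independence of $X$ and $Y$ gives $\E[Z B Z^{*}] = (\E X) A (\E Y)^{*} = 0$. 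Hence $|X A Y^{*}|$ is exactly the quantity $|Z B Z^{*} - \E[Z B Z^{*}]|$ controlled by Hanson-Wright.

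The two norms of $B$ I would need are $\|B\|_{\mathrm{HS}}^{2} = \tfrac{1}{2}\|A\|_{\mathrm{HS}}^{2}$ (each entry of $A$ appears twice in $B$, each time scaled by $1/2$) and $\|B\|_{\mathrm{op}} = \tfrac{1}{2}\|A\|_{\mathrm{op}} \leq \tfrac{1}{2}\|A\|_{\mathrm{HS}}$ (the nonzero eigenvalues of $B$ are $\pm \sigma_i(A)/2$, where $\sigma_i(A)$ are the singular values of $A$). Plugging these into Lemma~\ref{lemma_Hanson_Wright} applied to the pair $(Z, B)$ produces the desired tail bound after absorbing the factor of $2$ into the absolute constant $c$.

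There is no genuine obstacle here: this is the standard polarization/dilation trick for converting a sub-Gaussian bilinear form into a sub-Gaussian quadratic form. The only two minor bookkeeping points are (i) checking that concatenation preserves the independence and $\psi_2$-norm hypotheses, which is immediate from the assumption that all of $X_1,\ldots,X_\ell,Y_1,\ldots,Y_m$ are jointly independent, and (ii) noting that the mild looseness in $\|B\|_{\mathrm{op}} \leq \tfrac{1}{2}\|A\|_{\mathrm{HS}}$ is precisely what produces the $\|A\|_{\mathrm{HS}}$ (rather than the sharper $\|A\|_{\mathrm{op}}$) in the linear-regime denominator of the stated inequality, which is why the corollary is stated in this slightly weaker but more convenient form.
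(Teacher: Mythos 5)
Your proof is correct and follows essentially the same route as the paper: concatenate $Z=(X,Y)$, form the symmetric dilation $B=\tfrac12\left(\begin{smallmatrix}0 & A\\ A^* & 0\end{smallmatrix}\right)$ so that $XAY^*=ZBZ^*$, and apply Lemma~\ref{lemma_Hanson_Wright} after bounding $\|B\|_{\mathrm{op}}\leq\|B\|_{\mathrm{HS}}\leq\|A\|_{\mathrm{HS}}$. You supply a bit more explicit bookkeeping (the exact constants in $\|B\|_{\mathrm{HS}}$ and $\|B\|_{\mathrm{op}}$, and the verification that $\mathbb{E}[ZBZ^*]=0$), but the argument is the same.
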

\begin{proof}
Write $Z = (X,Y)$. Then $X A Y^{*} = Z B Z^{*}$, where $B= \frac{1}{2} \begin{pmatrix} \mathrm{0}_{\ell \times \ell} & A \\ A^{*} & \mathrm{0}_{m \times m} \end{pmatrix}$. Since $\| B \|_{\mathrm{HS}} \leq \| A \|_{\mathrm{HS}}$, the corollary follows from Lemma~\ref{lemma_Hanson_Wright}.
\end{proof}

Using the decoupling technique employed in \cite{RV14} (which in turn was inspired by \cite{Bourgain99}), we derive the following version of sub-Gaussian concentration inequality.
\begin{Lemma}{\label{lemma_decoupling_subGaussian}}
    Let $X_1,\ldots,X_m,Y_1,\ldots,Y_m$ be mean-zero variables with $\| X_i \|_{\psi_2}, \| Y_i \|_{\psi_2} \leq K$ for all $1\leq i\leq m$. In addition, assume for all $i$ that $(X_i, Y_i)$ is independent with $(X_{\setminus i}, Y_{\setminus i})$ where $X_{\setminus i}$ is obtained from $X$ by dropping its $i$-th component (and similarly for $Y_{\setminus i}$). Let A be an $m \times m$ matrix with diagonal entries being 0. Then for every $s > 0$ 
    \begin{align*}
        \mathbb{P} [ | X A Y^{*} | > s ] \leq 2 \exp \Big \{  -c \min \Big(  \frac{s^2}{K^4 \| A \|^2_{\mathrm{HS}}}, \frac{ s }{ K^2 \| A \|_{\mathrm{HS}} } \Big)  \Big \}\,.
    \end{align*}
\end{Lemma}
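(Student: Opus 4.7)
The plan is to follow the decoupling approach of Rudelson--Vershynin, which is exactly the strategy the paper points to. Since the diagonal of $A$ vanishes, I would first write $XAY^{*} = \sum_{i \neq j} X_i A_{ij} Y_j$. Let $\delta_1,\ldots,\delta_m$ be i.i.d.\ Bernoulli$(1/2)$ random variables, independent of everything, and set $I = \{i : \delta_i = 1\}$. For each ordered pair $i \neq j$ one has $\mathbb{P}(\delta_i = 1, \delta_j = 0) = 1/4$, so introducing
$$S = S(X, Y, \delta) = \sum_{i \in I,\, j \in I^{c}} X_i A_{ij} Y_j = X_I \, A_{I, I^{c}} \, Y_{I^{c}}^{*}$$
gives $\mathbb{E}_\delta[S \mid X, Y] = \tfrac{1}{4} X A Y^{*}$. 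Jensen's inequality applied to $z \mapsto e^{\lambda z}$ then yields $\mathbb{E} e^{\lambda X A Y^{*}} \leq \mathbb{E} e^{4 \lambda S}$ for every $\lambda \in \mathbb{R}$, so it suffices to establish a sub-Gaussian/sub-exponential moment generating function bound for $S$.

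Next, I would exploit the pairwise independence assumption to decouple. Because $\{(X_i, Y_i)\}_i$ are independent across $i$, and the random index sets $I$ and $I^{c}$ are disjoint, conditionally on $\delta$ the vectors $X_I = (X_i)_{i \in I}$ and $Y_{I^{c}} = (Y_j)_{j \in I^{c}}$ are functions of the two disjoint independent families $\{(X_i, Y_i)\}_{i \in I}$ and $\{(X_j, Y_j)\}_{j \in I^{c}}$; hence they are themselves independent, mean-zero, and sub-Gaussian with $\psi_2$-norm at most $K$. This is precisely the setting of Corollary \ref{corollary_Hanson_Wright_independent_subGaussian}, which gives
$$\mathbb{P}\bigl[|S| > t \,\big|\, \delta\bigr] \leq 2 \exp\!\Bigl\{ -c \min\Bigl( \tfrac{t^2}{K^{4} \|A_{I, I^{c}}\|_{\mathrm{HS}}^{2}}, \tfrac{t}{K^{2} \|A_{I, I^{c}}\|_{\mathrm{HS}}} \Bigr) \Bigr\}.$$
Since $\|A_{I, I^{c}}\|_{\mathrm{HS}} \leq \|A\|_{\mathrm{HS}}$ and the right-hand side no longer depends on $\delta$, integrating out $\delta$ delivers the same tail bound for $S$ unconditionally.

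Finally, I would convert the tail estimate for $S$ into the claimed tail estimate for $X A Y^{*}$ via a standard Chernoff/MGF argument: the above sub-Gaussian/sub-exponential tail translates into $\mathbb{E} e^{4\lambda S} \leq \exp\{ C \lambda^{2} K^{4} \|A\|_{\mathrm{HS}}^{2}\}$ for $|\lambda| \leq c / (K^{2} \|A\|_{\mathrm{HS}})$, and combining with $\mathbb{E} e^{\lambda X A Y^{*}} \leq \mathbb{E} e^{4\lambda S}$ and Markov's inequality, optimized in $\lambda$, produces the stated bound (with a readjusted absolute constant). The main conceptual obstacle is the first step: since $X_i$ and $Y_i$ for the same $i$ are allowed to be correlated, Corollary \ref{corollary_Hanson_Wright_independent_subGaussian} is not applicable to $X A Y^{*}$ directly, and one truly needs the random-subset decoupling to separate the variables by index. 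Once that separation is achieved, the remainder reduces to bookkeeping with already-established sub-Gaussian concentration inequalities.
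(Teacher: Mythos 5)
Your proof is correct and rests on the same RV14-style decoupling device as the paper; the two differ only in how the decoupled bilinear form $S = X_I\,A_{I,I^c}\,Y_{I^c}^{*}$ is handled afterward. After reducing to bounding $\mathbb{E}\,e^{4\lambda S}$ via conditional Jensen, the paper conditions on $\delta$ \emph{and} on the $X$-block $\{X_i : i\in\Lambda_\delta\}$, treats $S_\delta$ as a linear form in $\{Y_j : j\in\Lambda_\delta^c\}$, bounds its conditional MGF by $\exp\{C'\lambda^2\sigma_\delta^2\}$ with $\sigma_\delta^2$ a quadratic form in $X$, and then refers to Steps 3--5 of the proof in RV14 to integrate out $X$. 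You instead condition only on $\delta$, observe that the collections $\{X_i : i\in I\}$ and $\{Y_j : j\in I^c\}$ are jointly independent (because the pairs $(X_i,Y_i)$ are independent across $i$ and $I$, $I^c$ are disjoint), and apply Corollary~\ref{corollary_Hanson_Wright_independent_subGaussian} directly to obtain a $\delta$-uniform Bernstein-type tail bound for $S$; the final step converts this tail bound into an MGF bound for $S$ and back into a tail bound for $XAY^{*}$. Your route is arguably tidier, since it reuses the already-established bilinear Hanson--Wright corollary rather than rebuilding the latter half of the RV14 argument, at the harmless cost of a tail-to-MGF-to-tail round trip that merely readjusts the absolute constant $c$. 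Both arguments are sound.
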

\begin{proof}
Our proof is based on calculating the exponential moments of $S = \sum_{i \not = j} a_{ij} X_i Y_j$. Following the decoupling technique as in \cite{RV14}, we sample i.i.d.\ Bernoulli random variables $\delta_1,\delta_2,\ldots,\delta_m$ such that $\mathbb{E}[ \delta_i ] = \frac{1}{2}$. Write $\delta = (\delta_1, \ldots, \delta_m)$ and define $S_{\delta} = \sum_{i,j} \delta_i (1-\delta_j) a_{ij} X_i Y_j$. Since $S = 4 \mathbb{E}_{\delta} [ S_{\delta} ]$ (here $\mathbb E_\delta$ is the operation of taking expectation over $\delta$), by the conditional Jensen's inequality we have that $\mathbb{E}[ \exp \{ \lambda S \} ] \leq \mathbb{E}_{X, Y, \delta} [ \exp \{ 4 \lambda S_{\delta} \} ]$ for any $\lambda>0$. Now define $\Lambda_{\delta} = \{ 1\leq i\leq m : \delta_{i} =1  \}$. Conditioned on $\delta$ and $\{ X_i : i \in \Lambda_{\delta} \} $, we see that
\begin{align*}
    S_{\delta} = \sum_{j \in \Lambda_{\delta}^{c}} Y_j ( \sum_{i \in \Lambda_{\delta}} a_{ij} X_i )
\end{align*}
is a linear combination of sub-Gaussian variable $\{ Y_j : j \in \Lambda_{\delta}^{c} \}$ with fixed coefficients given by $\sum_{i \in \Lambda_{\delta}} a_{ij} X_i$. Under this conditioning we can use our assumption on $(X_i, Y_i)$ independent of $(X_{\setminus i}, Y_{\setminus i})$ to prove by induction that $\{ Y_j : j \in \Lambda_{\delta}^{c} \}$ is independent with $\{ X_i : i \in \Lambda_{\delta} \}$. By properties of sub-Gaussian random variables we know $\| S_{\delta} \|_{\psi_2} \leq C \sigma_{\delta}$, where $C>0$ is an absolute constant and $\sigma_{\delta}^2 = \sum_{ j \in \Lambda_{\delta}^{c} }(  \sum_{i \in \Lambda_{\delta}} a_{ij} X_i)^2$. Thus, for $C' >0$ depending on $C$
\begin{align*}
    \mathbb{E}_{\{ Y_j: j \in \Lambda_{\delta}^{c} \}}[ \exp \{ 4 \lambda S_{\delta}  \} ] \leq \exp \{ C^{\prime} \lambda^2 \sigma_{\delta}^2 \}\,.
\end{align*}
Taking expectation with respect to $\{ Y_i : i \in \Lambda_{\delta} \}$ and $\{X_1, \ldots, X_m\}$, we get that
\begin{align*}
    \mathbb{E}_{X,Y}[ \exp \{ 4 \lambda S_{\delta}  \} ] \leq  \mathbb{E}_{X} [ \exp \{ C^{\prime} \lambda^2 \sigma_{\delta}^2  \} ]
\end{align*}
for any given realization $\delta$. Since by our assumption $X_1, \ldots, X_m$ are independent, we can now just follow Step 3-5 in the proof of Theorem 1.1 in \cite{RV14} and obtain the desired bound.
\end{proof}

We will also use the following version of Bernstein's inequality as in \cite[Theorem 1.4]{DP09}.
\begin{Lemma}{\label{lemma_Berstein_ine}}
    Let $X=\sum_{i=1}^{m} X_i$, where $X_i$'s are independent random variables such that $|X_i| \leq K$ almost surely. Then, for $s>0$, we have
    \begin{align*}
        \mathbb{P}[ X> \mathbb{E}[X]+s  ] \leq \exp \Big \{  - \frac{s^2}{2(\sigma^2+Ks/3)} \Big\}\,,
    \end{align*}
    where $\sigma^2 = \sum_{i=1}^{m} \mathrm{Var}(X_i)$ is the variance of $X$. It follows (by applying the above for $X$ and $-X$) that for $\gamma>0$, 
    \begin{align*}
        \mathbb{P} [ |X -\mathbb{E}[X]| \geq  \sqrt{2 \sigma^2 \gamma} + 2K\gamma/3  ] \leq 2 e^{-\gamma}\,.
    \end{align*}
\end{Lemma}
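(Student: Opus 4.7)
The plan is to use the classical Chernoff / exponential-moment approach. First I would apply the exponential Markov bound: for any $\lambda \in (0, 3/K)$,
\begin{align*}
\mathbb{P}[X - \mathbb{E}[X] > s] \leq e^{-\lambda s}\, \mathbb{E}\big[e^{\lambda(X - \mathbb{E}[X])}\big] = e^{-\lambda s} \prod_{i=1}^{m} \mathbb{E}\big[e^{\lambda(X_i - \mathbb{E}[X_i])}\big],
\end{align*}
where independence of the $X_i$ factorizes the joint moment generating function. The problem then reduces to bounding each single-variable MGF $\mathbb{E}[e^{\lambda Y_i}]$ with $Y_i = X_i - \mathbb{E}[X_i]$.

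The core technical estimate is the ``Bernstein MGF bound''
\begin{align*}
\mathbb{E}\big[e^{\lambda Y_i}\big] \leq \exp\!\Big( \frac{\lambda^2 \mathrm{Var}(X_i)}{2(1 - K \lambda / 3)} \Big), \qquad 0 < \lambda < 3/K.
\end{align*}
I would establish this by Taylor expanding $e^{\lambda Y_i} = 1 + \lambda Y_i + \sum_{k \geq 2}(\lambda Y_i)^k / k!$, killing the linear term via $\mathbb{E}[Y_i] = 0$, and invoking the moment comparison $\mathbb{E}[|Y_i|^k] \leq K^{k-2}\, \mathrm{Var}(X_i)$ for $k \geq 2$ (which comes after the usual one-sided bookkeeping from the boundedness $|X_i| \leq K$, since only $|Y_i|^{k-2}$ needs a pointwise bound while the remaining $Y_i^2$ supplies the variance). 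Summing the resulting series using $k! \geq 2 \cdot 3^{k-2}$ yields a geometric tail, and the elementary $1 + x \leq e^x$ converts the bound to the claimed exponential form. Multiplying over $i$ and recalling $\sigma^2 = \sum_i \mathrm{Var}(X_i)$ gives
\begin{align*}
\mathbb{P}[X - \mathbb{E}[X] > s] \leq \exp\!\Big(-\lambda s + \frac{\lambda^2 \sigma^2}{2(1 - K\lambda/3)}\Big).
\end{align*}

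The final step is to optimize in $\lambda$. The tuned choice $\lambda = s/(\sigma^2 + Ks/3)$ lies in $(0,3/K)$ and produces exactly $\exp\!\big(-s^2/(2(\sigma^2 + Ks/3))\big)$, yielding the first inequality. The two-sided form follows by applying the same argument to $-X$ and using a union bound with a factor of $2$. For the deviation statement, I would verify that $s = \sqrt{2\sigma^2 \gamma} + 2K\gamma/3$ satisfies $\gamma \leq s^2/(2(\sigma^2 + Ks/3))$ by direct algebra: expanding the numerator and denominator, the cross term $\tfrac{2K\gamma}{3}\sqrt{2\sigma^2\gamma}$ appearing in $s^2$ is precisely what absorbs the $Ks/3$ contribution in the denominator. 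The main obstacle is the bookkeeping around the constant $K$ in the single-variable MGF estimate; everything else is a textbook manipulation, which is why the paper simply cites \cite{DP09}.
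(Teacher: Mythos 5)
The paper does not prove this lemma but simply cites \cite{DP09} (Theorem~1.4), so there is no internal proof to compare against; your derivation is the textbook Chernoff--Bernstein argument and is essentially correct. One subtlety is worth flagging: from $|X_i| \leq K$ alone one only gets $|Y_i| = |X_i - \mathbb{E}[X_i]| \leq 2K$, not $\leq K$, so the pointwise bound $|Y_i|^{k-2} \leq K^{k-2}$ that you invoke is not literally justified and, carried through honestly, would yield $2Ks/3$ rather than $Ks/3$ in the denominator. This looseness is already present in the statement of the lemma itself (the classical Bernstein hypothesis is $|X_i - \mathbb{E}[X_i]| \leq K$, or the one-sided $X_i - \mathbb{E}[X_i] \leq K$, rather than $|X_i| \leq K$), and every application of the lemma in the paper is to already-centered variables such as $\mathbf{I}_{A} - \mathbb{P}[A]$, for which $|Y_i| = |X_i| \leq K$ and your moment comparison holds as written. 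The remaining steps are fine: the series bound via $k! \geq 2\cdot 3^{k-2}$ for $k\geq 2$ is valid, the tuned $\lambda = s/(\sigma^2 + Ks/3)$ indeed lies in $(0,3/K)$ and produces the claimed exponent, and direct expansion confirms that $s = \sqrt{2\sigma^2\gamma} + 2K\gamma/3$ satisfies $s^2 - 2\gamma(\sigma^2 + Ks/3) = \tfrac{2K\gamma}{3}\sqrt{2\sigma^2\gamma} \geq 0$, giving the two-sided deviation statement.
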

The following lemma is well-known.
\begin{Lemma}{\label{lemma_Gaussian_convar}}
    Let $Z_1,Z_2,\ldots,Z_t$ be multivariate normal variables with mean zero. Assume that the covariance matrix of $Z_1,Z_2,\ldots,Z_{t-1}$ (denoted by $C$) is invertible. Then
    \begin{align*}
    \E(Z_t\mid Z_1, \ldots, Z_{t-1}) = u C^{-1} Z^* \mbox{ and }
        \mathrm{Var}(Z_t | Z_1,\ldots,Z_{t-1}) = \mathbb{E}[Z_t^2] - u C^{-1} u^*\,,
    \end{align*}
    where $u = (\mathbb{E}[Z_t Z_1], \ldots, \mathbb{E}[Z_t Z_{t-1}])$ and $Z = (Z_1, \ldots, Z_{t-1})$.
\end{Lemma}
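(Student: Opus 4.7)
The plan is to use the standard Gaussian orthogonal decomposition, taking advantage of the fact that for jointly Gaussian variables, orthogonality (in $L^2$) implies independence.

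First, I will look for a decomposition $Z_t = a Z^{*} + W$, where $a$ is a $1 \times (t-1)$ row vector of coefficients and $W$ is a scalar Gaussian random variable that is uncorrelated with each $Z_i$ for $1 \leq i \leq t-1$. The coefficient vector $a$ is determined by imposing $\E[W Z_j] = 0$ for every $j$, which yields $\E[Z_t Z_j] = a \E[Z^* Z_j]$ for all $j$, i.e., $u = a C$. Since $C$ is invertible by assumption, this forces $a = u C^{-1}$, and we set $W = Z_t - u C^{-1} Z^{*}$.

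Next, I will observe that $(Z_1, \ldots, Z_{t-1}, W)$ is jointly Gaussian with mean zero (as a linear transformation of $(Z_1, \ldots, Z_t)$), and by construction $W$ is uncorrelated with each $Z_i$. Hence $W$ is independent of $(Z_1, \ldots, Z_{t-1})$. It follows that
\begin{align*}
\E[Z_t \mid Z_1, \ldots, Z_{t-1}] = \E[u C^{-1} Z^{*} + W \mid Z_1, \ldots, Z_{t-1}] = u C^{-1} Z^{*} + \E[W] = u C^{-1} Z^{*},
\end{align*}
which is the first claimed identity.

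Finally, for the conditional variance, using independence of $W$ from $(Z_1,\ldots,Z_{t-1})$ I get
\begin{align*}
\mathrm{Var}(Z_t \mid Z_1, \ldots, Z_{t-1}) = \mathrm{Var}(W) = \E[W^2] = \E\bigl[(Z_t - u C^{-1} Z^{*})^2\bigr].
\end{align*}
Expanding the square and using $\E[Z_t Z^{*}] = u^{*}$ and $\E[Z^{*} Z] = C$, this becomes
\begin{align*}
\E[Z_t^2] - 2 u C^{-1} u^{*} + u C^{-1} C C^{-1} u^{*} = \E[Z_t^2] - u C^{-1} u^{*},
\end{align*}
giving the second identity. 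There is no real obstacle here; the only thing to be careful about is the bookkeeping of row vs.\ column vectors (which is why we write $Z = (Z_1, \ldots, Z_{t-1})$ as a row and use $Z^{*}$ for its transpose throughout), and the use of invertibility of $C$ to uniquely solve $u = aC$.
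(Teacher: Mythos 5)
Your proof is correct. The paper gives no proof of this lemma — it is simply stated as "well-known" — and the argument you supply is the standard one: write $Z_t = uC^{-1}Z^* + W$, note that $W$ is uncorrelated with (hence, by joint Gaussianity, independent of) $Z_1,\ldots,Z_{t-1}$, and read off the conditional mean and variance. The algebra is all in order.
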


\subsection{Conditional distributions}\label{sec:conditional-distributions}
Recall that $\mathfrak{S}_{t}= \sigma \{ \langle \eta^{(s)}_k,D^{(s)}_v \rangle, \langle \eta^{(s)}_k, \mathsf{D}^{(s)}_{\mathsf{v}} \rangle  : s \leq t, v \in V \backslash A , \mathsf{v} \in \mathsf{V} \backslash \mathsf{A} \}$. In order to calculate the law of $\langle \eta^{(t+1)}_k,D^{(t+1)}_v \rangle$ under the conditioning of $\mathfrak{S}_{t}$, we will study the conditional distribution $(\Hat{G},\Hat{\mathsf{G}}) |_{\mathfrak{S}_{t}}$, where as noted earlier the conditioning is given as a set of linear constraints. Thus, the following well-known lemma will then be useful.
\begin{Lemma}{\label{lemma_Gaussian_condition}}
    Let $Z \in \mathbb{R}^M$ be a random vector with i.i.d.\ standard Gaussian components, and let $ D \in \mathbb{R}^{m*M}$ be a linear operator with full row rank. Then for any constant vector $b \in \mathbb{R}^{m}$, the conditional distribution of $Z$ given $DZ=b$ satisfies:
    \begin{align}\label{eq-gaussian-condition-equivalent-form}
        Z|\{DZ=b\} \overset{d}{=} D^{*} (DD^{*})^{-1} b + \mathcal{P}_{D^{\perp}}(\Tilde{Z})\,,
    \end{align}
    where $D^{\perp}$ is the orthogonal space $\{x \in \mathbb{R}^n: Dx= 0 \}$, $\mathcal{P}_{D^{\perp}}$ is the orthogonal projection onto the subspace $D^{\perp}$, and $\Tilde{Z}$ is a copy of $Z$.
\end{Lemma}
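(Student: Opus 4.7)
The plan is to prove the identity in \eqref{eq-gaussian-condition-equivalent-form} by exhibiting an explicit orthogonal decomposition of $Z$ into a ``constrained'' part and a ``free'' part, verifying independence, and then reading off the conditional law.

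First, I would decompose $Z$ as $Z = Z_{\parallel} + Z_{\perp}$, where $Z_{\parallel} = D^{*}(DD^{*})^{-1}DZ$ is the orthogonal projection of $Z$ onto the row space of $D$ (which is well-defined since $D$ has full row rank, so $DD^{*}$ is invertible), and $Z_{\perp} = \mathcal{P}_{D^{\perp}}(Z) = Z - Z_{\parallel}$ is the orthogonal projection onto the kernel $D^{\perp}$. A direct computation gives $DZ_{\parallel} = DD^{*}(DD^{*})^{-1}DZ = DZ$ and $DZ_{\perp} = DZ - DZ_{\parallel} = 0$, so the knowledge of $DZ$ is equivalent to the knowledge of $Z_{\parallel}$, and in particular $\{DZ = b\}$ is equivalent to $\{Z_{\parallel} = D^{*}(DD^{*})^{-1}b\}$.

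Next I would show that $Z_{\parallel}$ and $Z_{\perp}$ are independent. Since $Z$ has i.i.d.\ standard Gaussian components, the pair $(Z_{\parallel}, Z_{\perp})$ is jointly Gaussian (being a linear image of $Z$). It therefore suffices to verify that their cross-covariance vanishes, which follows from
\[
\mathbb{E}\bigl[Z_{\parallel} Z_{\perp}^{*}\bigr] = D^{*}(DD^{*})^{-1}D \bigl(\mathrm{I} - D^{*}(DD^{*})^{-1}D\bigr) = 0,
\]
using $\mathbb{E}[ZZ^{*}] = \mathrm{I}$ and the fact that both matrices in the product are the orthogonal projectors onto mutually orthogonal subspaces.

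By this independence, conditioning on $\{DZ = b\}$ (equivalently $\{Z_{\parallel} = D^{*}(DD^{*})^{-1}b\}$) fixes $Z_{\parallel}$ at its deterministic value while leaving the marginal law of $Z_{\perp}$ unchanged. Hence
\[
Z \mid \{DZ = b\} \;\overset{d}{=}\; D^{*}(DD^{*})^{-1}b \;+\; Z_{\perp},
\]
and since $Z_{\perp} = \mathcal{P}_{D^{\perp}}(Z)$ has the same distribution as $\mathcal{P}_{D^{\perp}}(\tilde{Z})$ for an independent copy $\tilde{Z}$, the identity \eqref{eq-gaussian-condition-equivalent-form} follows. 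There is no real obstacle here; the only point that requires a little care is invoking full row rank of $D$ to ensure $DD^{*}$ is invertible so that the projector $D^{*}(DD^{*})^{-1}D$ is well-defined, after which the argument reduces to the standard fact that orthogonal components of a Gaussian vector are independent.
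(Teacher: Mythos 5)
Your proof is correct and is the standard argument for this well-known fact; the paper itself states Lemma~\ref{lemma_Gaussian_condition} without proof (calling it ``well-known''), so there is no in-paper proof to compare against. Your orthogonal decomposition $Z = Z_{\parallel} + Z_{\perp}$, the observation that $\{DZ=b\} = \{Z_{\parallel} = D^{*}(DD^{*})^{-1}b\}$, and the vanishing cross-covariance argument for independence are exactly right and complete.
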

We next derive a consequence of Lemma~\ref{lemma_Gaussian_condition}, and we continue to use notations therein. For two fixed vectors $\sigma, \tau \in \mathbb R^M$, conditioned on $DZ = b$ we can compute the covariance between $\langle \sigma, Z\rangle$ and $\langle \tau, Z\rangle$ as follows:
\begin{align}
    &\mathrm{Cov}(\langle \sigma , Z|\{DZ=b\} \rangle,\langle \tau,Z |\{DZ=b\} \rangle) \nonumber \\
    =& \mathrm{Cov}( \langle \sigma, D(D^{*}D)^{-1}b + \mathcal{P}_{D^{\perp}}(\Tilde{Z}) \rangle , \langle \tau, D(D^{*}D)^{-1}b + \mathcal{P}_{D^{\perp}}(\Tilde{Z}) \rangle  ) \nonumber \\
    =& \mathrm{Cov}(\langle \sigma ,  \mathcal{P}_{D^{\perp}}(\Tilde{Z}) \rangle , \langle \tau,  \mathcal{P}_{D^{\perp}}(\Tilde{Z}) \rangle   ) = \mathbb{E}[ \sigma \mathcal{P}_{D^{\perp}} \Tilde{Z}^{*} \Tilde{Z} \mathcal{P}_{D^{\perp}} \tau^{*}  ] \nonumber \\
    =& \sigma \mathcal{P}_{D^{\perp}} \mathbb{E}[ \Tilde{Z}^{*} \Tilde{Z}] \mathcal{P}_{D^{\perp}} \tau^{*} =\sigma \mathcal{P}_{D^{\perp}}  \tau^{*} = \langle \mathcal{P}_{D^{\perp}}(\sigma),  \mathcal{P}_{D^{\perp}}(\tau) \rangle   \nonumber \\
    =& \langle \sigma - \mathcal{P}_D(\sigma) , \tau - \mathcal{P}_D(\tau) \rangle = \langle \sigma, \tau \rangle - \langle \mathcal{P}_D(\sigma) , \mathcal{P}_D(\tau) \rangle\,. \label{eq-conditional-covariance-projection}
\end{align}

In later analysis we will apply a slightly modified version of Lemma~\ref{lemma_Gaussian_condition}. Consider the sets of Gaussian variables $\mathcal{G}= \{ W_{p}= \sum_{k=1}^{M} D(p,k) Z_k : 1 \leq p \leq m \}$. Note that $ D D^*$ is the covariance matrix of $\{ W_{p} : 1 \leq p \leq m \}$. For each $1 \leq j \leq M$, denote the projection of $Z_j$ onto $\mathrm{span}(\mathcal{G})$ by $\sum_{p=1}^{m} a_{j, p} W_{p} $. Then we have
\begin{align}\label{eq-equivalence-conditional-distribution}
    \{Z_j |\{DZ=b\}\} \overset{d}{=} \{\sum_{p=1}^{m} a_{j, p} b_p + \Tilde{Z}_j - \sum_{p=1}^{m}  a_{j, p}   \sum_{k=1}^{M} D(p,k) \Tilde{Z}_k\}\,,
\end{align}
where $\Tilde{Z}$ is an independent copy of $Z$. We now check that the above expression is equivalent to Lemma~\ref{lemma_Gaussian_condition}. It is obvious that $\mathcal{P}_{D^{\perp}}(Z)=Z - \mathcal{P}_{D}(Z) = Z - \mathbb{E}[ Z | \mathcal{G} ]$, and hence the $j$-th entry of $\mathcal{P}_{D^{\perp}}(\Tilde{Z}) $ is $\Tilde{Z}_j - \sum_{p=1}^{m} a_{j, p} \sum_{k=1}^{n} D(p,k) \Tilde{Z}_k$. It remains to show that the $j$-th entry of $D^{*} (DD^{*})^{-1} b$ is $\sum_{p=1}^{m} a_{j, p} b_p$. To this end, we apply Lemma~\ref{lemma_Gaussian_convar} and get
\begin{align*}
    \mathbb{E}[ Z_j | \mathcal{G} ] =
    (\mathbb{E}[Z_j W_1] \ldots \mathbb{E}[Z_j W_m]) (DD^{*})^{-1} W^{*} = e_j D^{*} (DD^{*})^{-1} W^{*}\,,
\end{align*}
 where $e_j$ is the vector in $\mathbb{R}^{d}$ with the $j$-th entry being 1 and all other entries being 0, and $W= (W_p)_{1 \leq p \leq m}$.
This completes the verification for the equivalence.

Motivated by \eqref{eq-gaussian-condition-equivalent-form}, we define an operation $\Hat{\mathbb{E}}$ as follows. For any function $g$ (of the form $g ( \Gamma, \Pi, \Hat{G}, \Hat{\mathsf{G}} )$) and  any realization $\xi$  for  $\{ \Gamma, \Pi \}$, define  $f( \xi ) = \mathbb{E}_{ \Hat{G}, \Hat{\mathsf{G}} } [ g ( \xi, \Hat{G}, \Hat{\mathsf{G}} ) ]$. Then the operator $\Hat {\mathbb E}$ is defined such that $\Hat{\mathbb{E}}[ g ( \Gamma, \Pi, \Hat{G}, \Hat{\mathsf{G}} ) ] = f( \Gamma, \Pi )$. 
For example, we have $\Hat{\mathbb{E}} [ ( \mathbf{I}_{v \in \Gamma^{(t)}_k}(|\Gamma^{(t)}_k|)^{-1/2} \Hat{G}_{v,u}  )^2 ] = \mathbf{I}_{v \in \Gamma^{(t)}_k}(|\Gamma^{(t)}_k|)^{-1}$. Later, when calculating $\Hat{\mathbb{E}}$ involving $(D^{(t)}_v, \mathsf{D}^{(t)}_{\mathsf{v}})$, we will regard $(D^{(t)}_v, \mathsf{D}^{(t)}_{\mathsf{v}})$ as a vector-valued function of $(\Gamma,\Pi,\Hat{G},\Hat{\mathsf{G}})$, where for instance $D^{(t)}_v(k) = g_{v,k}( \Gamma,\Pi,\Hat{G},\Hat{\mathsf{G}} ) = \frac{1}{\sqrt{(\alpha - \alpha^2)n} } \sum_{u \in V \backslash A} ( \mathbf{I}_{u \in \Gamma^{(t)}_k} - \alpha ) \Hat{G}_{v,u} $.

We use $\Hat{\mathbb{E}}$ to calculate the projection of $\langle \eta^{(t+1)}_k , D^{(t+1)}_v \rangle$ to $\mathfrak{S}_{t}$ as follows. First we  choose a basis for $\mathfrak{S}_{t}$. One natural choice is $\{ \langle \eta^{(s)}_k,D^{(s)}_v \rangle , \langle \eta^{(s)}_k, \mathsf{D}^{(s)}_{\mathsf{v}} \rangle :s \leq t, v \in V \backslash A ,\mathsf{v} \in \mathsf{V} \backslash \mathsf{A} , 1 \leq k \leq K_s  \}$. For convenience of definiteness, we list the basis in the following order: first we list all $\langle \eta^{(s)}_k, D^{(s)}_v \rangle$ indexed by $(s,k,v)$ in the dictionary order and then we list all $\langle \eta^{(s)}_k,\mathsf{D}^{(s)}_{\mathsf{v}} \rangle$ indexed by $(s,k,\mathsf{v})$ in the dictionary order. Under this ordering and the assumption that $\mathcal{E}_{t}$ happens, we have for all $s \leq t$
\begin{align}
    & \Hat{\mathbb{E}}[\langle \eta^{(t)}_k,D^{(t)}_v \rangle \langle \eta^{(s)}_m,D^{(s)}_u \rangle]  = \Hat{\mathbb{E}}[ \langle \eta^{(t)}_k, \mathsf{D}^{(t)}_{\mathsf{v}} \rangle \langle \eta^{(s)}_m,\mathsf{D}^{(s)}_{\mathsf{u}} \rangle ]=0  \mbox{ for } u\neq v, \mathsf u\neq \mathsf v\,, \label{equ_degree_correlation_1}\\
    & \Hat{\mathbb{E}}[\langle \eta^{(t)}_k, D^{(t)}_v \rangle \langle \eta^{(s)}_m, D^{(s)}_v \rangle]  = \eta^{(t)}_k \mathrm{M}_{\Gamma}^{(t,s)} \left( \eta^{(s)}_m \right)^{*} \overset{ \eqref{equ_linear_space}, \eqref{equ_vector_orthogonal} }{=} 0 \mbox{ for } (t,k) \neq (s,m) \,,  \label{equ_degree_correlation_2} \\ 
    & \Hat{\mathbb{E}}[\langle \eta^{(t)}_k, \mathsf{D}^{(t)}_{\mathsf{v}} \rangle \langle \eta^{(s)}_m, \mathsf{D}^{(s)}_{\mathsf{v}} \rangle] = \eta^{(t)}_k \mathrm{M}_{\Pi}^{(t,s)} \left( \eta^{(s)}_m \right)^{*} \overset{ \eqref{equ_linear_space}, \eqref{equ_vector_orthogonal} }{=} 0 \mbox{ for } (t,k) \neq (s,m)  \,, \label{equ_degree_correlation_3} 
\end{align}
where the first display follows from the fact that all $\Hat{G}_{u, w}$ over $u\neq w$ are independent Gaussian variables (and the similar version for $\hat{\mathsf G}$). Also, we have
\begin{align}
    & \Big| \Hat{\mathbb{E}}[ \langle \eta^{(t)}_k, D^{(t)}_v \rangle^2 ] - 1 \Big|  = \Big| \eta^{(t)}_k \mathrm{M}_{\Gamma}^{(t,t)} \Big( \eta^{(t)}_k \Big)^{*} -1 \Big|  \nonumber \\
    =& \Big| \eta^{(t)}_k \Phi^{(t)} \Big( \eta^{(t)}_k \Big)^{*} + \eta^{(t)}_k \Big( \mathrm{M}_{\Gamma}^{(t,t)}-\Phi^{(t)} \Big) \Big( \eta^{(t)}_k \Big)^{*} -1 \Big| \overset{ \eqref{equ_vector_unit}, \eqref{equ_concentrate_M_Gamma} }{\leq} K_t \Delta_t  \,,  \label{equ_degree_variance_1} \\ 
    &\Big| \Hat{\mathbb{E}}[\langle \eta^{(t)}_k, \mathsf{D}^{(t)}_{\mathsf{v}} \rangle^2 ] - 1 \Big| = \Big| \eta^{(t)}_k \mathrm{M}_{\Pi}^{(t,t)} \Big( \eta^{(t)}_k \Big)^{*} -1 \Big| \nonumber\\
    =& \Big| \eta^{(t)}_k \Phi^{(t)} \Big( \eta^{(t)}_k \Big)^{*} +  \eta^{(t)}_k  \Big( \mathrm{M}_{\Pi}^{(t,t)} - \Phi^{(t)} \Big) \Big( \eta^{(t)}_k \Big)^{*} - 1 \Big| \overset{ \eqref{equ_vector_unit}, \eqref{equ_concentration_M_Pi} }{\leq} K_t \Delta_t  \,.\label{equ_degree_variance_2}
\end{align}
Thus, we may write their $\Hat{\mathbb{E}}$-correlation matrix as $\begin{pmatrix} \mathbf{P}_t & \mathbf{R}_t \\ \mathbf{R}_t^{*} & \mathbf{L}_t  \end{pmatrix}$, where $\mathbf{P}_t, \mathbf{L}_t$ are diagonal matrices with diagonal entries in $(1-K_t \Delta_t,1+K_t \Delta_t)$, and $\mathbf{R}_t$ is the matrix with row indexed by $(s,k,u)$ for $0 \leq s \leq t, 1 \leq k \leq K_s, u \in V \backslash A$ and column indexed by $(s,k, \mathsf{u})$ for  $0 \leq s \leq t, 1 \leq k \leq K_s, u \in \mathsf{V} \backslash \mathsf{A}$, and entries $\mathbf{R}_t((s,k,u);(r,l,\mathsf{w}))$ given by $\Hat{\mathbb{E}}[\langle \eta^{(s)}_k,D^{(s)}_u \rangle  \langle \eta^{(r)}_l,\mathsf{D}^{(r)}_{\mathsf{w}} \rangle]$. Thus, by Lemma~\ref{lemma_Gaussian_convar} we have
\begin{align}
    \mathbb{E}[ \langle \eta^{(t+1)}_k,D^{(t+1)}_v \rangle | \mathfrak{S}_{t}] = 
    \begin{pmatrix}
        H_{t} & \mathsf{H}_{t}  
    \end{pmatrix}
    \begin{pmatrix}
        \mathbf{P}_t       & \mathbf{R}_t \\
        \mathbf{R}_t^{*} & \mathbf{L}_t
    \end{pmatrix}^{-1}
    \begin{pmatrix}
        Y_{t} \\
        \mathsf{Y}_{t}
    \end{pmatrix}\,.
    \label{equ_formulate_projection}
\end{align}
Here $H_t,\mathsf{H}_t$ and $Y_t,\mathsf{Y}_t$ are all $\sum_{0 \leq s \leq t} K_s(n - \kappa)$ dimensional vectors; $H_t$ and $Y_t$ are indexed by triple $(s,l,u): 0 \leq s \leq t, 1 \leq k \leq K_s, u \in V \backslash A$ in the dictionary order; $\mathsf{H}_t$ and $\mathsf{Y}_t$ are indexed by triple $(s,l,\mathsf{u}): 0 \leq s \leq t, 1 \leq k \leq K_s, \mathsf{u} \in \mathsf{V} \backslash \mathsf{A}$ in the dictionary order. In addition, their entries are given by
\begin{equation}
\begin{aligned}
    &Y_t (s,l,u) = \langle \eta^{(s)}_l,D^{(s)}_u \rangle, \quad  H_t ( s,l,u ) = \Hat{\mathbb{E}}[\langle \eta^{(t+1)}_k, D^{(t+1)}_v \rangle  \langle \eta^{(s)}_l,{D}^{(s)}_{{u}} \rangle]\,; \\
    &
    \mathsf{Y}_t (s,l,\mathsf{u}) = \langle \eta^{(s)}_l, \mathsf{D}^{(s)}_{\mathsf{u}} \rangle, \quad  \mathsf{H}_t (s,l,\mathsf{u}) = \Hat{\mathbb{E}}[\langle \eta^{(t+1)}_k, D^{(t+1)}_v \rangle  \langle \eta^{(s)}_l, \mathsf{D}^{(s)}_{\mathsf{u}} \rangle] \,.
\end{aligned}
\end{equation}
\begin{Remark}{\label{remark_projection_form}}
    In conclusion, we have shown that  (recall \eqref{eq-equivalence-conditional-distribution})
    \begin{align}
   \Big\{ \langle \eta^{(t+1)}_k,D^{(t+1)}_v \rangle | \mathfrak{S}_{t} \Big\}\overset{d}{=} 
    & \Big\{ \begin{pmatrix}
        H_{t} & \mathsf{H}_{t}  
    \end{pmatrix}
    \begin{pmatrix}
        \mathbf{P}_t       & \mathbf{R}_t \\
        \mathbf{R}_t^{*} & \mathbf{L}_t
    \end{pmatrix}^{-1}
    \begin{pmatrix}
        Y_{t} \\
        \mathsf{Y}_{t}  
    \end{pmatrix} \Big| \mathfrak{S}_{t}
    \label{equ_projection_part}  \\
    + & \langle \eta^{(t+1)}_k,   \Tilde{D}^{(t+1)}_v \rangle -
    \begin{pmatrix}
        H_{t} & \mathsf{H}_{t}  
    \end{pmatrix}
    \begin{pmatrix}
        \mathbf{P}_t       & \mathbf{R}_t \\
        \mathbf{R}_t^{*} & \mathbf{L}_t
    \end{pmatrix}^{-1}
    \begin{pmatrix}
        \Tilde{Y}_{t} \\
        \Tilde{\mathsf{Y}}_{t}  
    \end{pmatrix} \Big| \mathfrak{S}_{t} \Big\}\,.
    \label{equ_Gaussian_part}
\end{align}
In the above $\Tilde{Y}_t (s,l,u) = \langle \eta^{(s)}_l, \Tilde{D}^{(s)}_u \rangle , \Tilde{\mathsf{Y}}_t (s,l,\mathsf{u}) = \langle \eta^{(s)}_l, \Tilde{\mathsf{D}}^{(s)}_{\mathsf{u}} \rangle$, where
$$ \Tilde{D}^{(t)}_v(k) = \frac{1}{\sqrt{(\alpha - \alpha^2)n}} \sum_{u \in V \backslash A }( \mathbf{I}_{u \in \Gamma^{(t)}_k} - \alpha) \Hat{G}_{v,u}^{(\textup{new})}$$ is a linear combination of Gaussian variables $\{\Hat{G}_{v,u}^{(\textup{new})}\}$, with coefficients fixed under the conditioning of $\mathfrak{S}_t$, and $\{\Hat{G}_{v,u}^{(\textup{new})}\}$ is an independent copy of $\{\Hat{G}_{v,u}\}$ (and similarly for $\Tilde{\mathsf D}^{(t)}_{\mathsf v}(k)$). Thus, conditioned on any realizations of $\mathfrak{S}_t$, the term  \eqref{equ_projection_part} is determined and the term  \eqref{equ_Gaussian_part} is a linear combination of $\{\Hat{G}_{u,v}^{(\textup{new})}, \Hat{\mathsf{G}}_{\mathsf{u},\mathsf{v}}^{(\textup{new})}\}$ with determined coefficients. For notation convenience, we denote \eqref{equ_projection_part} as $\textup{PROJ}(\langle \eta^{(t+1)}_k, D^{(t+1)}_v \rangle; \mathfrak{S}_t)$, and denote (\ref{equ_Gaussian_part}) as $\langle \eta^{(t+1)}_k, \Tilde{D}^{(t+1)}_v \rangle - \textup{GAUS}(\langle \eta^{(t+1)}_k, D^{(t+1)}_v \rangle; {\mathfrak{S}_t})$.
\end{Remark}
We now explain our intuition for bounding \eqref{equ_projection_part}. On $\mathcal{E}_{t}$, we have for $u \neq v$
\begin{equation}
\begin{aligned}
    \Hat{\mathbb{E}}[\langle \eta^{(t)}_k, D^{(t)}_v \rangle \langle \eta^{(s)}_l, \mathsf{D}^{(s)}_{\pi(u)} \rangle] & = \frac{\epsilon}{2} \sum_{i=1}^{K_{t}} \sum_{j=1}^{K_s} \eta^{(t)}_k(i) \eta^{(s)}_l(j) \frac{(\mathbf{I}_{\pi(v) \in \Pi^{(s)}_j} - \alpha) (\mathbf{I}_{u \in \Gamma^{(t)}_i}-\alpha)}{(\alpha -\alpha^2) n} \\
    & \leq \sum_{i=1}^{K_{t}} \sum_{j=1}^{K_s} \frac{|\eta^{(t)}_k(i) \eta^{(s)}_l(j)|}{\alpha n} \leq \frac{ 4 \sqrt{K_t K_s} } {\alpha n} \,;  
    \label{equ_bound_correlation_dif_vertices}
\end{aligned}
\end{equation}
and also for $s<t$
\begin{equation}
\begin{aligned}
    \Hat{\mathbb{E}}[\langle \eta^{(t)}_k, D^{(t)}_v \rangle \langle \eta^{(s)}_l, \mathsf{D}^{(s)}_{\pi(v)} \rangle] & = \frac{\epsilon}{2} \eta^{(t)}_k \mathbf{P}_{\Gamma,\Pi}^{(t,s)} \left(\eta^{(s)}_l \right)^{*}   \\
    & \overset{\eqref{equ_bound_P_Gamma_Pi_t_s}}{\leq} \sum_{i=1}^{K_{t}} \sum_{j=1}^{K_s} \alpha^{-1} \Delta_{t} |\eta^{(t)}_k(i) \eta^{(s)}_l(j)| \leq \frac{ 4 \sqrt{K_t K_s}\Delta_{t} }{ \alpha } \,;
    \label{equ_bound_correlation_same_vertices}
\end{aligned}
\end{equation}
and in addition for $k \neq l$
\begin{equation}
\begin{aligned}
    \Hat{\mathbb{E}}[\langle \eta^{(t)}_k, D^{(t)}_v \rangle \langle \eta^{(t)}_l, \mathsf{D}^{(t)}_{\pi(v)} \rangle] & = \frac{\epsilon}{2} \eta^{(t)}_k \mathbf{P}_{\Gamma,\Pi}^{(t,t)} \left(\eta^{(t)}_l \right)^{*} \overset{\eqref{equ_vector_orthogonal}}{=} \frac{\epsilon}{2} \eta^{(t)}_k ( \mathbf{P}_{\Gamma,\Pi}^{(t,t)}- \Psi^{(t)} ) \left(\eta^{(t)}_l \right)^{*}  \\
    & \overset{\eqref{equ_concentration_P_Gamma_Pi}}{\leq} \sum_{i=1}^{K_{t}} \sum_{j=1}^{K_s} \alpha^{-1} \Delta_{t} |\eta^{(t)}_k(i) \eta^{(t)}_l(j)| \leq \frac{ 4 K_t \Delta_{t} }{ \alpha } \,,
    \label{equ_bound_correlation_same_vert_same_time}
\end{aligned}
\end{equation}
where we have used the fact that $\| \eta^{(t)}_k \|, \| \eta^{(s)}_l \| \leq 2$ and the Cauchy-Schwartz inequality. Replacing $t$ with $t+1$ in preceding discussions \eqref{equ_degree_correlation_1} \eqref{equ_degree_correlation_2} \eqref{equ_degree_correlation_3} \eqref{equ_bound_correlation_dif_vertices} and \eqref{equ_bound_correlation_same_vertices}, we then get that
\begin{align}
    &\|  \begin{pmatrix} H_{t} & \mathsf{H}_{t}  \end{pmatrix}   \|_1 \leq 2K_{t} n \cdot \frac{  4 \sqrt{K_{t+1} K_t}} {\alpha n} + 2K_t \cdot \frac{  4 \sqrt{K_{t+1} K_t} \Delta_{t+1} }  {\alpha} \leq \frac{ 10 \sqrt{K_{t+1}} K_t^2 } {\alpha}  \label{equ_bound_1_norm_H_t}\,, \\
    &\|  \begin{pmatrix} H_{t} & \mathsf{H}_{t}  \end{pmatrix}   \|_2^2 \leq 2K_{t} n \cdot (\frac{  4 \sqrt{K_{t+1} K_t}} {\alpha n})^2 + 2K_t \cdot (\frac{ 4 \sqrt{K_{t+1} K_t} \Delta_{t+1} }  {\alpha})^2 \leq \frac{ 100K_{t+1}K_t^2 \Delta_{t+1}^2 } {\alpha^2} \label{equ_bound_2_norm_H_t}\,.
\end{align}
However, the bound on \eqref{equ_projection_part} we get from directly applying \eqref{equ_bound_1_norm_H_t} or \eqref{equ_bound_2_norm_H_t} is not sharp and later we need a better control which would have to take advantage of cancellations between the entries of $H_t$ and $\mathsf{H}_t$. The main trouble for implementing this is that it seems difficult to write down the exact formula for the entries of $\begin{pmatrix} \mathbf{P}_t       & \mathbf{R}_t \\ \mathbf{R}_t^{*} & \mathbf{L}_t \end{pmatrix}^{-1}$ since they seem to involve extremely complicate calculations. Instead we try to control the operator norm and  the 1-norm of this matrix, as incorporated in next few lemmas. This enables us to control the cancellation; see \eqref{equ_bound_lambda_u}, \eqref{equ_bound_lambda_u_w} and \eqref{equ_bound_square_lambda_u_w} for a detailed explanation of the role of the following lemmas in later proofs. Write $\mathbf Q_t = \begin{pmatrix} \mathbf{P}_t & \mathbf{R}_t \\ \mathbf{R}_t^{*} & \mathbf{L}_t \end{pmatrix}^{-1}$.
\begin{Lemma}{\label{lemma_op_norm_Q}}
    Suppose that $\epsilon \in (0, 0.9)$ (see Remark~\ref{rem-epsilon-assumption} below). On the event $\mathcal{E}_{t}$, we have $\|\mathbf Q_t \|_{\mathrm{op}} \leq 100$.
\end{Lemma}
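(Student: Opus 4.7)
The plan is to show that the smallest eigenvalue of $\mathbf{M}_t := \mathbf{Q}_t^{-1} = \begin{pmatrix} \mathbf{P}_t & \mathbf{R}_t \\ \mathbf{R}_t^{*} & \mathbf{L}_t \end{pmatrix}$ is at least $1/100$, which is equivalent to $\|\mathbf{Q}_t\|_{\mathrm{op}} \leq 100$. The key observation is that $\mathbf{M}_t$ is, by construction, the $\hat{\mathbb{E}}$-covariance matrix of the joint vector $(Y_t, \mathsf{Y}_t)$. Therefore, for any unit vector $(x, y)$, one has $(x, y)\mathbf{M}_t (x, y)^{*} = \mathrm{Var}_{\hat{\mathbb{E}}}(L)$, where $L = L_1 + L_2$ is the induced linear combination with $L_1 = \sum x\cdot Y_t$ depending only on entries of $\hat{G}$ and $L_2 = \sum y\cdot \mathsf{Y}_t$ depending only on entries of $\hat{\mathsf{G}}$.

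The heart of the argument is the inequality
\[
|\mathrm{Cov}(L_1, L_2)| \leq \epsilon \sqrt{\mathrm{Var}(L_1)\,\mathrm{Var}(L_2)},
\]
which directly reflects that the two matrices have correlation $\epsilon$. To derive it, write $L_1 = \sum_{v \neq w} a_{v,w} \hat{G}_{v,w}$ and $L_2 = \sum_{\mathsf{v} \neq \mathsf{w}} b_{\mathsf{v}, \mathsf{w}} \hat{\mathsf{G}}_{\mathsf{v}, \mathsf{w}}$, and recall from the preprocessing that the only non-vanishing cross-covariances are $\mathrm{Cov}(\hat{G}_{u, v}, \hat{\mathsf{G}}_{\pi(u), \pi(v)}) = \mathrm{Cov}(\hat{G}_{u, v}, \hat{\mathsf{G}}_{\pi(v), \pi(u)}) = \epsilon/2$. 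Thus
\[
\mathrm{Cov}(L_1, L_2) = \tfrac{\epsilon}{2} \sum_{v \neq w} a_{v, w}\bigl( b_{\pi(v), \pi(w)} + b_{\pi(w), \pi(v)} \bigr),
\]
and applying Cauchy--Schwartz to each of the two sums (using that $\pi$ is a bijection, so the change of variables $(\mathsf{v}, \mathsf{w}) = (\pi(v), \pi(w))$ or $(\pi(w), \pi(v))$ gives $\sum_{v \neq w} b_{\pi(v), \pi(w)}^{2} = \mathrm{Var}(L_2)$) yields the claim.

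Combining this with the elementary inequality $2\sqrt{ab} \leq a + b$ gives $\mathrm{Var}(L) \geq (1 - \epsilon)\bigl(\mathrm{Var}(L_1) + \mathrm{Var}(L_2)\bigr)$. On the event $\mathcal{E}_t$, the discussion surrounding \eqref{equ_degree_correlation_1}--\eqref{equ_degree_variance_2} shows that $\mathbf{P}_t$ and $\mathbf{L}_t$ are in fact \emph{diagonal} with every diagonal entry in $(1 - K_t\Delta_t, 1 + K_t\Delta_t)$, so for unit $(x, y)$ one obtains $\mathrm{Var}(L_1) + \mathrm{Var}(L_2) = x\mathbf{P}_t x^{*} + y \mathbf{L}_t y^{*} \geq 1 - K_t\Delta_t = 1 - o(1)$ (the latter since $K_t\Delta_t = n^{-0.1 + o(1)}$ for $t \leq t^*$). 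Therefore $\mathrm{Var}(L) \geq (1 - \epsilon)(1 - o(1)) > 1/100$ for $\epsilon < 0.9$ and $n$ sufficiently large, completing the proof. Since each step is a direct Gaussian computation, I do not expect any essential obstacle; the true content of the lemma is simply the recognition that the off-diagonal block $\mathbf{R}_t$ is uniformly controlled by $\epsilon$ times the geometric mean of the marginal variances, which is a structural feature of the preprocessed matrices rather than something requiring delicate estimation.
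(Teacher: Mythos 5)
Your proof is correct and rests on the same key structural fact as the paper's proof: the cross-covariance between any linear functional of $\Hat G$ and any linear functional of $\Hat{\mathsf G}$ is bounded by $\epsilon$ times the geometric mean of the variances, because only the two entries $\Hat{\mathsf G}_{\pi(u),\pi(v)}$ and $\Hat{\mathsf G}_{\pi(v),\pi(u)}$ correlate (each with strength $\epsilon/2$) with $\Hat G_{u,v}$ after preprocessing, and Cauchy--Schwarz together with the bijectivity of $\pi$ does the rest. The difference is in the final reduction. The paper isolates this as the operator-norm bound $\|\mathbf R_t\|_{\mathrm{op}} \le \epsilon + K_t\Delta_t$ (which it also reuses later, e.g.\ in the proof of Lemma~\ref{lemma_1_norm_Q}), and then lower-bounds the vector norm $\|(x,\mathsf y)\mathbf M_t\|_2$ by $(1-\epsilon-2K_t\Delta_t)\|(x,\mathsf y)\|_2$ via the block decomposition, triangle inequality, and a second Cauchy--Schwarz. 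You instead bound the quadratic form $(x,\mathsf y)\mathbf M_t(x,\mathsf y)^* = \mathrm{Var}_{\Hat{\mathbb E}}(L_1+L_2)$ directly: the cross-covariance bound plus AM--GM gives $\mathrm{Var}(L)\ge(1-\epsilon)(\mathrm{Var}(L_1)+\mathrm{Var}(L_2))$, and the near-identity diagonal structure of $\mathbf P_t,\mathbf L_t$ on $\mathcal E_t$ gives $\mathrm{Var}(L_1)+\mathrm{Var}(L_2)\ge 1-K_t\Delta_t$. For a symmetric positive-definite $\mathbf M_t$ the two estimates are equivalent, and your route is a little shorter and more transparent, as it avoids having to argue about $\|\mathbf R_t\|_{\mathrm{op}}$ and the vector-norm manipulation. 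The only thing lost is the standalone estimate on $\|\mathbf R_t\|_{\mathrm{op}}$ that the paper recycles in the next lemma; if you follow your route, you would still want to record that bound separately.
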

\begin{proof}
For any vectors $x,\mathsf{y}$ with proper dimensions such that $x \mathbf{R}_t \mathsf{y}^{*}$ is well-defined (we may index $x$ with $(s,k,v)$ and $\mathsf{y}$ with $(s,k,\mathsf{v})$ respectively), we may write $x \mathbf{R}_t \mathsf{y}^{*}$ as
\begin{align*}
    & \sum_{r,s \leq t} \sum_{k \leq K_r,l \leq K_s} \sum_{v \in V \backslash A , \mathsf{u} \in \mathsf{V} \backslash \mathsf{A} } x(r,k,v) \mathsf{y}(s,l, \mathsf{u}) \mathbf{R}_t((r,k,v);(s,l,\mathsf{u}))\\
    = & \sum_{r,s \leq t} \sum_{k \leq K_r,l \leq K_s} \sum_{v \in V \backslash A, \mathsf{u} \in \mathsf{V} \backslash \mathsf{A} } x(r,k,v) \mathsf{y}(s,l, \mathsf{u}) \Hat{\mathbb{E}} \Big[ \langle \eta^{(r)}_k,D^{(r)}_v \rangle \langle \eta^{(s)}_l,\mathsf{D}^{(s)}_{\mathsf{u}} \rangle \Big]   \\
    = & \Hat{\mathbb{E}} \Big[ \sum_{s\leq t} \sum_{k \leq K_s} \sum_{v \in V \backslash A} x(s,k,v) \langle \eta^{(s)}_k,D^{(s)}_v \rangle, \sum_{s \leq t} \sum_{k \leq K_s} \sum_{\mathsf{v} \in \mathsf{V} \backslash \mathsf{A}} \mathsf{y}(s,k,\mathsf{v}) \langle \eta^{(s)}_k,\mathsf{D}^{(s)}_{\mathsf{v}} \rangle \Big]\,.
\end{align*}
Note that $\sum x(s,k,v) \langle \eta^{(s)}_k,D^{(s)}_v \rangle \in \mathrm{span} \{ \Hat{G}_{z,w}  \}$, $\sum \mathsf{y}(s,k,\mathsf{v}) \langle \eta^{(s)}_k,\mathsf{D}^{(s)}_{\mathsf{v}} \rangle \in \mathrm{span} \{ \Hat{\mathsf{G}}_{\mathsf{z},\mathsf{w}} \}$. In addition, for any $a_{u, v}, b_{u, v}\in \mathbb R$ we have
\begin{align*}
    &\mathbb{E} \Big[ \Big(\sum_{u,v \in V} a_{u,v} \Hat{G}_{u,v} \Big)  \Big(\sum_{u,v \in V} {b_{u,v}} \Hat{\mathsf{G}}_{\pi(u),\pi(v)} \Big) \Big] =\frac{1}{2} \epsilon \sum_{u,v \in V} (a_{u,v} b_{u,v} + a_{u,v} b_{v,u}) \\
    \leq &\epsilon \Big(\sum_{u,v \in V} a_{u,v}^2 \Big)^{\frac{1}{2}} \Big(\sum_{u,v \in V} {b_{u,v}^2} \Big)^{\frac{1}{2}}
    =\epsilon [\mathbb E( \sum_{u,v \in V} a_{u,v} \Hat{G}_{u,v} )^2]^{1/2} [\mathbb E ( \sum_{u,v \in V} {b_{u,v}} \Hat{\mathsf{G}}_{\pi(u),\pi(v)})^2]^{1/2}\,.
\end{align*}
Therefore, we have (below the sums are over $s\leq t, k\leq K_s, v\in V\setminus A$ (for $x$-vector) and $s\leq t, k\leq K_s, \mathsf v\in \mathsf V\setminus \mathsf A$ (for $y$-vector) respectively)
\begin{align*}
    x\mathbf{R}_t \mathsf{y}^{*}& = \Hat{\mathbb{E}} \Big[ \sum x(s,k,v) \langle \eta^{(s)}_k,D^{(s)}_v \rangle, \sum \mathsf{y}(s,k, \mathsf{v}) \langle \eta^{(s)}_k,\mathsf{D}^{(s)}_{\mathsf{v}} \rangle   \Big]     \\
    & \leq \epsilon \Hat{\mathbb{E}} \Big[ \Big( \sum x(s,k,v) \langle \eta^{(s)}_k,D^{(s)}_v \rangle \Big)^2 \Big]^{\frac{1}{2}} \Hat{\mathbb{E}} \Big[ \Big( \sum \mathsf{y}(s,k, \mathsf{v}) \langle \eta^{(s)}_k, \mathsf{D}^{(s)}_{\mathsf{v}} \rangle  \Big)^2 \Big]^{\frac{1}{2}}  \\
    &\leq \epsilon \left (\sum x(s,k,v)^2 (1+K_s \Delta_s) \right)^{\frac{1}{2}} \left ( \sum \mathsf{y}(k,s,\mathsf{v})^2 (1 + K_s \Delta_s) \right)^{\frac{1}{2}}   \\
    & \leq (\epsilon+ K_t\Delta_t) \| \mathsf{x} \|_2 \|  \mathsf{y} \|_2\,,
\end{align*}
where the second inequality follows from  \eqref{equ_degree_correlation_1}, \eqref{equ_degree_correlation_2}, \eqref{equ_degree_correlation_3}, \eqref{equ_degree_variance_1}  and \eqref{equ_degree_variance_2}.
This implies that
\begin{align}
    \| \mathbf{R}_t \|_{\mathrm{op}} , \| \mathbf{R}_t^{*}  \|_{\mathrm{op}} \leq \epsilon+ K_t \Delta_t\,.
    \label{equ_bound_op_norm_R_t}
\end{align}
By \eqref{equ_degree_variance_1}  and \eqref{equ_degree_variance_2} we see that $\| \mathbf{P}_t - \mathrm{I} \|_{\mathrm{op}} , \| \mathbf{L}_t - \mathrm{I} \|_{\mathrm{op}} \leq K_t \Delta_t$. In addition, we have
\begin{align*}
    & \Big\|  \begin{pmatrix}
    x & \mathsf{y}
    \end{pmatrix}
    \begin{pmatrix}
    \mathbf{P}_t & \mathbf{R}_t \\ 
    \mathbf{R}_t^* & \mathbf{L}_t   
    \end{pmatrix} \Big\|_2 
    =\| \begin{pmatrix}
    x \mathbf{P}_t + \mathsf{y} \mathbf{R}_t^{*} &
    x \mathbf{R}_t + \mathsf{y} \mathbf{L}_t
    \end{pmatrix} \|_2  \\
    &= \sqrt{ \|  x \mathbf{P}_t + \mathsf{y} \mathbf{R}_t^{*}  \|^2_2  +  \|  x \mathbf{R}_t + \mathsf{y} \mathbf{L}_t   \|^2_2  }  
    \geq \sqrt{ ( \|  x \mathbf{P}_t \|_2 -  \| \mathsf{y} \mathbf{R}_t^{*}  \|_2 )^2 + ( \|  \mathsf{y} \mathbf{L}_t \|_2  - \|  x \mathbf{R}_t  \|_2 )^2  }  \\
    & \geq \Big(  \| x \|_2 \Big( \| x \mathbf{P}_t \|_2 - \| \mathsf{y} \mathbf{R}_t^{*}  \|_2 \Big) + \| \mathsf{y} \|_2 \Big( \|  \mathsf{y} \mathbf{L}_t \|_2  - \|  x \mathbf{R}_t  \|_2 \Big)  \Big)   \Big/  \sqrt{ \| x \|_2^2 + \|  \mathsf{y} \|_2^2 } 
\end{align*}
by Cauchy-Schwartz inequality. Using the control on the operator norms of $\mathbf{P}_t, \mathbf{L}_t$ and $\mathbf{R}_t$, this is lower-bounded by
\begin{align*}
    & \frac{  \| x \|_2 \Big( (1 - K_t \Delta_t) \| x  \|_2 - (\epsilon + K_t \Delta_t) \| \mathsf{y} \|_2) + \| \mathsf{y} \|_2 \Big( (1 - K_t \Delta_t) \|  \mathsf{y}  \|_2 - (\epsilon + K_t \Delta_t) \|  x  \|_2 \Big)  }{  \sqrt{ \| x \|_2^2 + \|  \mathsf{y} \|_2^2 } }       \\
    & \geq (1 - \epsilon - 2K_t\Delta_t) \sqrt{ \| x \|^2 + \| \mathsf{y} \|^2 } = (1 - \epsilon - 2K_t\Delta_t) \| \begin{pmatrix}  x & \mathsf{y} \end{pmatrix}  \|_2\,.
\end{align*}
Therefore, the operator norm  of $\mathbf Q_t$ is upper-bounded by $\frac{1}{1-\epsilon-2K_t\Delta_t}$ as required. 
\end{proof}

\begin{Remark}\label{rem-epsilon-assumption} 
    In the lemma above, we made a technical assumption that $\epsilon\in (0,0.9)$.  The reason is that when $\epsilon$ approaches 1, the linear spaces spanned by $\{ \Hat{G}_{u,v} \}$ and by $\{ \Hat{\mathsf{G}}_{\mathsf{u,v}} \}$ tend to have non-trivial intersections, and thus our selected basis $\{ \langle \eta^{(t)}_k, D^{(t)}_v \rangle , \langle \eta^{(t)}_k, \mathsf{D}^{(t)}_{\mathsf{v}} \rangle  \}$ does not behave well. In particular, when $\epsilon = 1$, those variables are actually linearly dependent. To address this technical issue, we can employ the same operation of adding noise as described in Remark~\ref{rem-epsilon-u-0}. Also, when $\epsilon \to 1$ it is possible that simpler analysis shall work (e.g., for $\epsilon = 1$ the analysis can be hugely simplified) but we omit the consideration here. For a similar technical reason, we assume $\epsilon \leq 0.1$ in the lemma below.
  \end{Remark}

\begin{Lemma}{\label{lemma_1_norm_Q}}
    Suppose $\epsilon \leq 0.1$. On the event $\mathcal{E}_{t}$, we have $\| \mathbf{Q}_t  \|_{1} \leq \frac{10000 K_t^8}{\alpha^4} $.
\end{Lemma}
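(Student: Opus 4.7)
Plan: Write $\mathbf Q_t^{-1}=\mathbf D+\mathbf E$ where $\mathbf D=\mathrm{diag}(\mathbf P_t,\mathbf L_t)$ is diagonal with entries in $(1-K_t\Delta_t,1+K_t\Delta_t)$ and $\mathbf E = \begin{pmatrix}0 & \mathbf R_t \\ \mathbf R_t^{*} & 0\end{pmatrix}$. A row-sum estimate from \eqref{equ_bound_correlation_dif_vertices}--\eqref{equ_bound_correlation_same_vert_same_time} together with the geometric growth of $K_r$ in \eqref{equ_iter_K} gives $\|\mathbf R_t\|_1 \leq CK_t^2/\alpha$, so a direct Neumann expansion in the $1$-norm does not converge. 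Instead I will iterate the identity $\mathbf Q_t = \mathbf D^{-1} - \mathbf Q_t\mathbf E\mathbf D^{-1}$ (which follows from $\mathbf Q_t\mathbf Q_t^{-1}=I$) finitely many times to get
\[
\mathbf Q_t = \sum_{k=0}^{N-1}(-\mathbf D^{-1}\mathbf E)^k \mathbf D^{-1} + (-\mathbf D^{-1}\mathbf E)^N \mathbf Q_t,
\]
take the $1$-norm, and rearrange. Provided one can show $\|(\mathbf D^{-1}\mathbf E)^N\|_1\leq 1/2$ for a suitable $N$, the partial sum dominates and yields $\|\mathbf Q_t\|_1 \leq C' (K_t^2/\alpha)^{N-1}$, which for appropriate $N$ gives the advertised bound $10000K_t^{8}/\alpha^{4}$.

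The technical heart is therefore to control $\|(\mathbf D^{-1}\mathbf E)^N\|_1$ for small $N$. Since $(\mathbf D^{-1}\mathbf E)^{2m}$ is block diagonal with blocks essentially $(\mathbf R_t\mathbf R_t^{*})^m$ and $(\mathbf R_t^{*}\mathbf R_t)^m$, the task reduces to bounding $\|\mathbf R_t\mathbf R_t^{*}\|_1$ and its higher powers. I would split $\mathbf R_t=\mathbf R_t^{(0)}+\mathbf R_t^{(1)}+\mathbf R_t^{(2)}$ according to the three regimes in \eqref{equ_bound_correlation_dif_vertices}--\eqref{equ_bound_correlation_same_vert_same_time}: (i) $\mathbf R_t^{(0)}$ supported on $\{\mathsf w=\pi(u),\,(s,k)=(r,l)\}$, which is block diagonal (a scaled permutation matrix per block) with $\|\mathbf R_t^{(0)}\|_1 \leq \epsilon\varepsilon_0$; (ii) $\mathbf R_t^{(1)}$ supported on $\{\mathsf w=\pi(u),\,(s,k)\neq(r,l)\}$, of $1$-norm $o(1)$ thanks to the $\Delta_t$ factor; and (iii) $\mathbf R_t^{(2)}$ supported on $\{\mathsf w\neq\pi(u)\}$, which by direct computation admits the rank-one-per-block factorization
\[
\mathbf R_t^{(2)}\bigl((s,k,u);(r,l,\mathsf w)\bigr) = \frac{\epsilon}{2n(\alpha-\alpha^2)}\,\langle\eta^{(s)}_k, d^{(s)}_{\pi^{-1}(\mathsf w)}\rangle\,\langle\eta^{(r)}_l, \mathsf d^{(r)}_{\pi(u)}\rangle,
\]
where $d^{(s)}_u(i)=\mathbf I_{u\in\Gamma^{(s)}_i}-\alpha$ and $\mathsf d^{(r)}_{\mathsf v}(j)=\mathbf I_{\mathsf v\in\Pi^{(r)}_j}-\alpha$ (up to a small diagonal correction when $\pi^{-1}(\mathsf w)=u$).

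Summing the factorization of $\mathbf R_t^{(2)}(\mathbf R_t^{(2)})^{*}$ over $\mathsf w$ produces the identity
\[
\sum_{u\in V\setminus A}\langle\eta^{(s_1)}_{k_1}, d^{(s_1)}_u\rangle\langle\eta^{(s_2)}_{k_2}, d^{(s_2)}_u\rangle = n(\alpha-\alpha^2)\,\eta^{(s_1)}_{k_1}\mathrm M^{(s_1,s_2)}_{\Gamma}(\eta^{(s_2)}_{k_2})^{*},
\]
which by \eqref{equ_vector_orthogonal} vanishes when $s_1=s_2,\,k_1\ne k_2$, and by \eqref{equ_bound_M_Gamma_t_s} is $O(K_t\Delta_t/\alpha)$ when $s_1\neq s_2$; consequently $\mathbf R_t^{(2)}(\mathbf R_t^{(2)})^{*}$ is essentially block diagonal in $(s,k)$. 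Within each diagonal block a Cauchy--Schwarz over the column vertex combined with the analogous Gram identity $\sum_v b^{(r_1)}_{l_1}(v)b^{(r_2)}_{l_2}(v)=n(\alpha-\alpha^2)\eta^{(r_1)}_{l_1}\mathrm M^{(r_1,r_2)}_{\Pi}(\eta^{(r_2)}_{l_2})^{*}$ controls the row sum in terms of $\epsilon^2$, $\sqrt{K_t}$ and $\sqrt{\alpha}$, and iterating this bound with the operator-norm input $\|\mathbf R_t\|_{\mathrm{op}}\leq 2\epsilon$ from Lemma~\ref{lemma_op_norm_Q} lets one choose $N$ to make $\|(\mathbf D^{-1}\mathbf E)^N\|_1 \leq 1/2$. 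The cross terms $\mathbf R_t^{(a)}(\mathbf R_t^{(b)})^{*}$ with $(a,b)\neq(2,2)$ are handled crudely from the smallness of $\|\mathbf R_t^{(0)}\|_1$ and $\|\mathbf R_t^{(1)}\|_1$. The main obstacle will be the bookkeeping in this cancellation analysis: the permutation $\pi$ enters through $\pi^{-1}(\mathsf w)$ and $\pi(u)$, coupling row and column indices nontrivially, and a careful reindexing $v'=\pi^{-1}(\mathsf w)$ is required before the orthogonality relations of \eqref{equ_vector_orthogonal} can be applied.
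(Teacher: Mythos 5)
You take a genuinely different route from the paper. The paper inverts the block matrix via the Schur-complement formula \eqref{eq-standard-linear-algebra-inverse}, reduces to bounding $\|(\mathbf L_t-\mathbf R_t^*\mathbf P_t^{-1}\mathbf R_t)^{-1}\|_1$, and proves this by the dual statement $\|(\mathbf L_t-\mathbf R_t^*\mathbf P_t^{-1}\mathbf R_t)x\|_\infty\gtrsim \alpha^2K_t^{-4}\|x\|_\infty$ through a case split on whether $\|x\|_1$ is below or above the threshold $\alpha^2 n/(64K_t^3)$: entrywise bounds on $\mathbf R_t^*\mathbf P_t^{-1}\mathbf R_t$ handle the first case, while Cauchy--Schwarz ($\|x\|_1\leq\sqrt{\dim}\,\|x\|_2$) plus the operator-norm bound $\|\mathbf R_t\mathbf P_t^{-1}\mathbf R_t^*\|_{\mathrm{op}}\leq\epsilon^2(1+o(1))$ handle the second. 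Your truncated-Neumann identity $\mathbf Q_t=\sum_{k<N}(-\mathbf D^{-1}\mathbf E)^k\mathbf D^{-1}+(-\mathbf D^{-1}\mathbf E)^N\mathbf Q_t$ is algebraically correct, your factorization of $\mathbf R_t^{(2)}$ via the vectors $d^{(s)}_u$ and $\mathsf d^{(r)}_{\mathsf v}$ is correct, and the near-block-diagonality of $\mathbf R_t^{(2)}(\mathbf R_t^{(2)})^*$ coming from the Gram identities and \eqref{equ_vector_orthogonal} is a nice observation.

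However there is a concrete gap in the key step, namely the claim that some \emph{constant} $N$ makes $\|(\mathbf D^{-1}\mathbf E)^N\|_1\leq\frac12$. Your own Cauchy--Schwarz computation of the diagonal-block row sum gives, writing $v^{(u)}=(\langle\eta^{(r)}_l,\mathsf d^{(r)}_{\pi(u)}\rangle)_{r,l}$ and $V=(v^{(u)})_u$,
\begin{align*}
\big\|\mathbf R_t^{(2)}(\mathbf R_t^{(2)})^*\big\|_1 \;\lesssim\; \frac{\epsilon^2}{\alpha n}\max_u\sum_{u'}\big|\langle v^{(u)},v^{(u')}\rangle\big|
\;\leq\; \frac{\epsilon^2}{\alpha n}\sqrt{n}\,\|V\|_{\mathrm{op}}\max_u\|v^{(u)}\|_2
\;=\;O\!\Big(\tfrac{\epsilon^2\sqrt{K_t}}{\sqrt{\alpha}}\Big)\,,
\end{align*}
using $\|V\|_{\mathrm{op}}^2=O(\alpha n)$ (forced by $\|\mathbf R_t\|_{\mathrm{op}}=O(\epsilon)$) and the deterministic bound $\|v^{(u)}\|_2^2\leq\sum_{r\leq t}O(K_r)=O(K_t)$. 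So $\|(\mathbf D^{-1}\mathbf E)^2\|_1=\Omega(\epsilon^2\sqrt{K_t}/\sqrt{\alpha})\gg1$ and submultiplicativity $\|(\mathbf D^{-1}\mathbf E)^{2m}\|_1\leq\|(\mathbf D^{-1}\mathbf E)^2\|_1^m$ gives nothing. The operator-norm input $\|\mathbf R_t\|_{\mathrm{op}}\leq2\epsilon$ does make $\|(\mathbf D^{-1}\mathbf E)^N\|_{\mathrm{op}}$ decay geometrically, but there is no general inequality $\|AB\|_1\leq\|A\|_1\|B\|_{\mathrm{op}}$, and converting from $\|\cdot\|_{\mathrm{op}}$ back to $\|\cdot\|_1$ costs a factor $\sqrt{\dim}=\Theta(\sqrt{K_tn})$. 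Hence $\|(\mathbf D^{-1}\mathbf E)^N\|_1\leq\frac12$ forces $N\gtrsim\log(K_tn)/\log(1/\epsilon)$, at which point the partial sum $\sum_{k<N}(CK_t^2/\alpha)^k$ is super-polynomial in $K_t$ (indeed polynomial in $n$), far exceeding the target $10000K_t^8/\alpha^4$; in particular $N=5$, the value implicit in matching the exponent $K_t^8/\alpha^4$, manifestly fails. The paper's $\|x\|_1$ case split is precisely the device that interpolates between the entrywise and operator-norm controls without paying the $\sqrt{n}$, and your argument needs some analogue of it that is currently missing.
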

\begin{proof}
By a standard computation, we have that $\mathbf Q_t$ is equal to 
    \begin{equation}\label{eq-standard-linear-algebra-inverse}
    \begin{pmatrix} 
    \mathbf{P}_t^{-1} + \mathbf{P}_t^{-1} \mathbf{R}_t (\mathbf{L}_t - \mathbf{R}_t^{*} \mathbf{P}_t^{-1} \mathbf{R}_t)^{-1} \mathbf{R}_t^{*} \mathbf{P}_t^{-1} &  -\mathbf{P}_t^{-1} \mathbf{R}_t (\mathbf{L}_t-\mathbf{R}_t^{*} \mathbf{P}_t^{-1} \mathbf{R}_t)^{-1} \\
    - (\mathbf{L}_t-\mathbf{R}_t^{*} \mathbf{P}_t^{-1} \mathbf{R}_t)^{-1} \mathbf{R}_t^{*} \mathbf{P}_t^{-1} &(\mathbf{L}_t-\mathbf{R}_t^{*} \mathbf{P}_t^{-1} \mathbf{R}_t)^{-1}
    \end{pmatrix}\,.
    \end{equation}
By \eqref{equ_bound_correlation_dif_vertices} \eqref{equ_bound_correlation_same_vertices} and \eqref{equ_bound_correlation_same_vert_same_time}, the 1-norms of $\mathbf{R}_t$ and $\mathbf{R}^{*}_t$ are upper-bounded by $\frac{5K_t^2}{\alpha}$; by \eqref{equ_degree_variance_1} and \eqref{equ_degree_variance_2}, the 1-norms of $\mathbf{P}_t, \mathbf{P}_t^{-1}$ and $\mathbf{L}_t, \mathbf{L}_t^{-1}$ are upper-bounded by $1+2K_t \Delta_t$. Combined with \eqref{eq-standard-linear-algebra-inverse} and  the fact that $\| \mathbf{AB} \|_1 \leq \| \mathbf{A} \|_1  \| \mathbf{B} \|_1$, it yields that  
\begin{align*}
   \| \mathbf Q_t\|_{1}
    \leq \frac{100 K_t^4}{\alpha^2} \left \|  (\mathbf{L}_t - \mathbf{R}_t^{*} \mathbf{P}_t^{-1} \mathbf{R}_t)^{-1}  \right \|_{1}\,.
\end{align*}
So it suffices to show that 
\begin{equation}\label{eq-to-show-1-norm}
\| (\mathbf{L}_t -\mathbf{R}_t \mathbf{P}_t^{-1} \mathbf{R}_t^{*})^{-1} \|_1 \leq \frac{64 K_t^4}{\alpha^2}  \,.
\end{equation} 
Recall that for an $N*N$ matrix $\mathbf{A}$, the 1-norm $\| \mathbf{A} \|_1$ is the operator norm of the linear transform $\mathbf{A}$ on the Banach space $(\mathbb{R}^N,\| \cdot  \|_{\infty})$.  Let $N = \sum_{s \leq t} K_s (n-\kappa) = O(K_t n)$ be the dimension of $\mathbf{R}_t, \mathbf{L}_t$ and $\mathbf{P}_t$. So, in order to show \eqref{eq-to-show-1-norm}, it suffices to show that 
\begin{equation}\label{eq-to-show-infinite-norm}
\| (\mathbf{L}_t - \mathbf{R}_t \mathbf{P}_t^{-1} \mathbf{R}_t^{*}) x  \|_{\infty} \geq \frac{\alpha^2}{64K^4_t} \| x \|_{\infty} \mbox{ for all } x \in \mathbb{R}^N\,. 
\end{equation}

Without loss of generality, we may assume $\| x \|_{\infty}=1$. Recall from \eqref{equ_vector_unit}, \eqref{equ_bound_correlation_dif_vertices} \eqref{equ_bound_correlation_same_vertices} and \eqref{equ_bound_correlation_same_vert_same_time} that
\begin{align*}
    \Big| \mathbf{R}_t((s,k,v);(r,l,\mathsf{u})) \Big| = \Big| \Hat{\mathbb{E}} \Big[ \langle \eta^{(s)}_k,D^{(s)}_v \rangle \langle \eta^{(r)}_l, \mathsf{D}^{(r)}_{\mathsf{u}} \rangle \Big] \Big| \leq
    \begin{cases}
    2\varepsilon_s , & s=r, k=l, \mathsf{u}= \pi(v);  \\
    \frac{4K_s \Delta_s}{\alpha}  ,  & (s,k) \neq (r,l), s \geq r, \mathsf{u}= \pi(v) ; \\
    \frac{4K_s}{\alpha n}, & \mathsf{u} \neq \pi(v), s \geq r.
    \end{cases}
\end{align*}
Since $\mathbf{P}_t^{-1}$ is a diagonal matrix with entries  in $(1 - K_t \Delta_t, 1+K_t \Delta_t)$, we have
\begin{align*}
    \Big| \mathbf{R}_t \mathbf{P}_t^{-1} \mathbf{R}_t^{*} ((s,k,v);(r,l,u)) \Big| & \leq \sum_{p,m,\mathsf{w}} (1+ 2K_t\Delta_t) \Big| \mathbf{R}_t((s,k,v);(p,m, \mathsf{w}))\mathbf{R}_t((r,l,u);(p,m, \mathsf{w}))\Big| \\
    & \leq \begin{cases}
    5\varepsilon^2_s ,     & s=r, k=l,u=v ;   \\
    32 K^3_t \Delta_t / \alpha^2 , & (s,k) \neq (r,l), u=v; \\
    32 K_t^3 / (\alpha^2 n);       & u \neq v.
    \end{cases}
\end{align*}
We next divide our proof into two cases.

\noindent {\bf Case 1}: $\|x\|_1 \leq \frac{\alpha^2 n}{64 K^3_t}$. Since $\| x \|_\infty \leq 1$, we have
\begin{align*}
    \Big| \Big( \mathbf{R}_t \mathbf{P}_t^{-1} \mathbf{R}_t^{*} x \Big) (s,k,v) \Big| &= \Big| \sum_{r,l,u} \mathbf{R}_t \mathbf{P}_t^{-1} \mathbf{R}_t^{*} ((s,k,v);(r,l,u))x(r,l,u) \Big| \\
    &\leq 5\varepsilon_s^2 + \frac{32K^3_t\Delta_t}{\alpha^2} \cdot 2K_t + \frac{32K_t^3}{\alpha^2 n} \|x\|_1 \leq \frac{2}{3}\,.
\end{align*}
This implies that $\| (\mathbf{L}_t-\mathbf{R}_t \mathbf{P}_t^{-1} \mathbf{R}_t^{*})x   \|_{\infty} \geq \| \mathbf{L}_t x  \|_{\infty}- \| \mathbf{R}_t \mathbf{P}^{-1}_t \mathbf{R}_t^{*} x  \|_{\infty} \geq \frac{1}{4}$, which is stronger than \eqref{eq-to-show-infinite-norm}. 

\noindent {\bf Case 2}:  $\|x\|_1> \frac{\alpha^2 n}{64 K^3_t}$. In this case, using the dimension of $x$ is $N \in (K_t n, 4K_t n)$, we have $\| x \|_2 \geq \frac{\alpha^2 \sqrt{n}}{16K_t^{3.5}}$  by Cauchy-Schwartz inequality. By \eqref{equ_bound_op_norm_R_t} and \eqref{equ_degree_variance_1}, we have
\begin{align*}
    \| \mathbf{R}_t \mathbf{P}_t^{-1} \mathbf{R}_t^{*}  \|_{\mathrm{op}} \leq \| \mathbf{R}_t \|_{\mathrm{op}} \|  \mathbf{P}_t^{-1} \|_{\mathrm{op}}   \| \mathbf{R}_t^{*} \|_{\mathrm{op}}   \leq (\epsilon+3K_t\Delta_t)^2 (1 + 2 K_t\Delta_t) = \epsilon^2 (1+o(1))\,.
\end{align*}
This implies that $\|  (\mathbf{L}_t-\mathbf{R}_t \mathbf{P}_t^{-1} \mathbf{R}_t^{*})x  \|_2 \geq (1-\epsilon^2- o(1) ) \| x \|_2 \geq \frac{ \alpha^2 \sqrt{n}}{32K_t^{3.5}}$, and so there must exist an entry of $(\mathbf{L}_t-\mathbf{R}_t \mathbf{P}_t^{-1} \mathbf{R}_t^{*})x$ greater than $\frac{\alpha^2}{64 K_t^4}$, yielding \eqref{eq-to-show-infinite-norm}.
\end{proof}

\begin{Corollary}{\label{corollary_infty_2_norm_Q}}
    On $\mathcal E_t$ we have the following estimates on the entries of $\mathbf{Q}_t$:
    \begin{align}
        &\sum_{r,l,u} \mathbf{Q}_t( (s,k,v);(r,l,u) )^2 + \sum_{r,l,\mathsf{u}} \mathbf{Q}_t((s,k,v);(r,l,\mathsf{u}))^2 \leq 10000   \label{equ_bound_2_norm_row_Q} \,, \\
        &\sum_{r,l,u} \Big| \mathbf{Q}_t( (s,k,v);(r,l,u) ) \Big| + \sum_{r,l,\mathsf{u}} \Big| \mathbf{Q}_t ((s,k,v);(r,l,\mathsf{u}))\Big| \leq 10000K_t^8/ \alpha^4 \,.\label{equ_bound_1_norm_row_Q}
    \end{align}
\end{Corollary}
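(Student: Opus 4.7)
The plan is to deduce the corollary essentially immediately from Lemmas~\ref{lemma_op_norm_Q} and~\ref{lemma_1_norm_Q} by viewing each quantity in \eqref{equ_bound_2_norm_row_Q} and \eqref{equ_bound_1_norm_row_Q} as a norm of a single row of $\mathbf{Q}_t$.

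First I would fix the row index $(s,k,v)$ and let $e$ be the corresponding standard basis vector of $\mathbb{R}^N$ (where $N = \sum_{s \leq t} K_s (n-\kappa)$ is the ambient dimension), so that the row of $\mathbf{Q}_t$ indexed by $(s,k,v)$ is exactly $e\,\mathbf{Q}_t$, and the two sums in the corollary are $\|e\,\mathbf{Q}_t\|_2^2$ and $\|e\,\mathbf{Q}_t\|_1$ (ordinary Euclidean $2$-norm and $\ell^1$-norm of a vector), respectively.

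For \eqref{equ_bound_2_norm_row_Q}, since $\|e\|_2 = 1$, we estimate
\begin{align*}
\sum_{r,l,u} \mathbf{Q}_t((s,k,v);(r,l,u))^2 + \sum_{r,l,\mathsf{u}} \mathbf{Q}_t((s,k,v);(r,l,\mathsf{u}))^2
= \|e\,\mathbf{Q}_t\|_2^2 \leq \|\mathbf{Q}_t\|_{\mathrm{op}}^2 \leq 100^2 = 10000,
\end{align*}
where the last inequality uses Lemma~\ref{lemma_op_norm_Q} (which requires $\epsilon \in (0,0.9)$, a standing assumption we can adopt throughout via the noise-addition trick of Remark~\ref{rem-epsilon-u-0}).

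For \eqref{equ_bound_1_norm_row_Q}, the sum of absolute values along the row $(s,k,v)$ is by the paper's convention bounded by the matrix norm $\|\cdot\|_1$:
\begin{align*}
\sum_{r,l,u} |\mathbf{Q}_t((s,k,v);(r,l,u))| + \sum_{r,l,\mathsf{u}} |\mathbf{Q}_t((s,k,v);(r,l,\mathsf{u}))|
\leq \|\mathbf{Q}_t\|_1 \leq \frac{10000\,K_t^8}{\alpha^4},
\end{align*}
where the last inequality is Lemma~\ref{lemma_1_norm_Q}. Since both lemmas hold on the event $\mathcal{E}_t$, so does the corollary. There is no real obstacle here: the bulk of the work has already been carried out in establishing the operator-norm and $1$-norm estimates on $\mathbf{Q}_t$, and the corollary is essentially a row-wise restatement obtained by testing against a single coordinate vector.
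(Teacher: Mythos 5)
Your proposal is correct and takes essentially the same approach as the paper: the paper also observes that the row-wise $\ell^2$ sum is bounded by $\|\mathbf{Q}_t\|_{\mathrm{op}}$ (using symmetry of $\mathbf{Q}_t$) and the row-wise $\ell^1$ sum is bounded by $\|\mathbf{Q}_t\|_1$ directly from the definition, then invokes Lemmas~\ref{lemma_op_norm_Q} and~\ref{lemma_1_norm_Q}. Your version merely spells out the standard-basis-vector mechanism more explicitly, which is fine.
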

\begin{proof}
Since $\mathbf{Q}_t$ is symmetric, its operator norm is at least the 2-norm of each column (or each row). Thus, \eqref{equ_bound_2_norm_row_Q} follows from Lemma~\ref{lemma_op_norm_Q}.  In addition, \eqref{equ_bound_1_norm_row_Q} follows from  Lemma~\ref{lemma_1_norm_Q}.
\end{proof}

\subsection{Proof of proposition \ref{prop_admissible} }
The proof is by induction on $t$. Recall the definition of $\mathcal{E}_t$. Define
\begin{align*}
    \mathcal{A}_t & = \Big\{ \Big| \textup{PROJ}(\langle \eta^{(t)}_k, D^{(t)}_v \rangle; \mathfrak{S}_{t-1}) \Big| , \Big| \textup{PROJ}(\langle \eta^{(t)}_k, \mathsf{D}^{(t)}_{ \mathsf{v}} \rangle; \mathfrak{S}_{t-1}) \Big| \leq  K_t^{20} (\log n)   \Delta_t  \mbox{ for all } v, \mathsf v\Big\}\,,\\
    \mathcal{B}_t & = \Big\{ \Big| \langle \eta^{(t)}_k,D^{(t)}_v \rangle \Big|, \Big| \langle \eta^{(t)}_k, \mathsf{D}^{(t)}_{\mathsf{v}} \rangle \Big| \leq \log n  \mbox{ for all } v, \mathsf v    \Big\}  \,, \\
    \mathcal{C}_t & = \Big \{ \mathrm{Var}(\textup{GAUS}(\langle \eta^{(t)}_k, D^{(t)}_v \rangle; \mathfrak{S}_{t-1})), \mathrm{Var}(\textup{GAUS}(\langle \eta^{(t)}_k, \mathsf{D}^{(t)}_{\mathsf{v}} \rangle ; \mathfrak{S}_{t-1}) )  \leq K_{t}^{20} \Delta^2_{t}   \mbox{ for all } v, \mathsf v\Big \}\,,
\end{align*}
where in the above for all $v, \mathsf v$ means for all $v\in V \backslash A , \mathsf{v} \in \mathsf {V} \backslash \mathsf{A}$. Note that $\mathcal{A}_0, \mathcal{C}_0$ hold obviously since $\mathfrak{S}_{-1}$ denotes the trivial $\sigma$-algebra, and by a simple union bound $\mathcal B_0$ holds with probability at least $1 - O(n)e^{-(\log n)^2/2}$. Our inductive proof consists of the following steps:

\noindent {\bf Step 1.} $\mathcal{E}_0$ holds with high probability;

  \noindent{\bf Step 2.}  If $\mathcal{A}_t,\mathcal{B}_t,\mathcal{C}_t, \mathcal{E}_{t+1}$ holds for $t \leq t^{*}-1$, $\mathcal{A}_{t+1}$ holds with high probability;
   
  \noindent {\bf Step 3.}  If $\mathcal{B}_t, \mathcal{C}_t, \mathcal{E}_{t+1}, \mathcal{A}_{t+1}$ holds for $t \leq t^{*}-2$, $\mathcal{B}_{t+1},\mathcal{C}_{t+1}$ hold with high probability;
  
\noindent {\bf Step 4.} If $\mathcal{A}_{t+1}, \mathcal{B}_{t+1} , \mathcal{C}_{t+1}, \mathcal{E}_{t+1}$ holds for $t \leq t^{*}-2$, $\mathcal{E}_{t+2}$ holds with high probability.

\subsubsection{Step 1: $\mathcal{E}_0$ }
Since
$
    \frac{|\Gamma^{(0)}_k|}{n} - \alpha = \frac{1}{n} \sum_{v \in V \backslash A} ( \mathbf{I}_{v \in \Gamma^{(0)}_k} - \alpha) = \frac{1}{n} \sum_{v \in V \backslash A} ( \mathbf{I}_{ |G_{v,u_k}| \geq 10 }   - \mathbb{P}[ N(0,1) \geq 10] )
$
is a sum of mean zero i.i.d.\ Bernoulli random variables, by Lemma \ref{lemma_Berstein_ine} we have
\begin{align}
    \mathbb{P} \Big\{ \Big| \frac{|\Gamma^{(0)}_k|}{n} - \alpha \Big| > n^{-0.1} \Big\} \leq  2 \exp \{ - n^{0.8}  \}\,.
    \label{equ_concentration_gamma_0}
\end{align}
A similar concentration bound holds for $\frac{|\Pi^{(0)}_k|}{n} - \alpha$.  In addition, recalling $\alpha = \mathbb{P}[ |N(0,1)| \geq 10 ]$ and \eqref{equ_initial_set}, we see that $\frac{|\Gamma^{(0)}_k \cap \Gamma^{(0)}_l |}{n} - \alpha^2, \frac{|\Pi^{(0)}_k \cap \Pi^{(0)}_l |}{n} - \alpha^2, \frac{|\pi(\Gamma^{(0)}_k) \cap \Pi^{(0)}_l|}{n} -  \alpha^2, \frac{|\pi(\Gamma^{(0)}_k) \cap \Pi^{(0)}_k|}{n} - \phi(\epsilon)$ can all be written as sums of i.i.d.\ mean-zero Bernoulli variables. For instance,
\begin{align*}
    \frac{|\pi(\Gamma^{(0)}_k) \cap \Pi^{(0)}_k|}{n} - \phi(\epsilon) = \frac{1}{n} \sum_{v \in V \backslash A} \Big( \mathbf{I}_{ \Big\{ |G_{v,u_k}| , |\mathsf{G}_{\pi(v), \pi(u_k)}| \geq 10 \Big\} } - \phi(\epsilon) \Big)\,.
\end{align*}
Thus we can obtain a similar concentration bound as in \eqref{equ_concentration_gamma_0} using Lemma \ref{lemma_Berstein_ine}. By a union bound, we get that 
\begin{equation}\label{eq-mathcal-E-0-bound}
\mathbb{P} [\mathcal{E}_0] \geq 20 n K_0^2 \exp \{ - n^{0.8} \}\,.
\end{equation}

\subsubsection{Step 2: $\mathcal{A}_{t+1}$}
Our strategy  is to bound the conditional tail probability for $\eqref{equ_projection_part} \geq K_{t+1}^{20} (\log n) \Delta_{t+1}$ given $\mathfrak{S}_{t-1}$. Recall that in \eqref{equ_degree_correlation_1} we have shown $\Hat{\mathbb{E}} [\langle \eta^{(t+1)}_k,D^{(t+1)}_v \rangle \langle \eta^{(s)}_m , D^{(s)}_u \rangle ] =0 $ for any $s \leq t, u \in V$. Thus, we may expand the matrix product in \eqref{equ_projection_part} into a straightforward summation  as follows:
\begin{equation}
    \begin{aligned}
    \sum_{s,r=1}^{t} \sum_{l=1}^{K_r} \sum_{m=1}^{K_s} \sum_{\mathsf{u},w}  \Hat{\mathbb{E}} \Big[\langle \eta^{(t+1)}_k,D^{(t+1)}_v \rangle \langle \eta^{(s)}_m, \mathsf{D}^{(s)}_{\mathsf{u}} \rangle \Big] \mathbf{Q}_t((s,m,\mathsf{u});(r,l,w))\Big( \langle \eta^{(r)}_l,D^{(r)}_{w} \rangle \Big)  \\
    + \sum_{s,r=1}^{t} \sum_{l=1}^{K_r} \sum_{m=1}^{K_s} \sum_{\mathsf{u},\mathsf{w}} \Hat{\mathbb{E}} \Big[\langle \eta^{(t+1)}_k, D^{(t+1)}_v \rangle \langle \eta^{(s)}_m, \mathsf{D}^{(s)}_{\mathsf{u}} \rangle \Big]  \mathbf{Q}_t((s,m,\mathsf{u});(r,l,\mathsf{w}))\Big( \langle \eta^{(r)}_l, \mathsf{D}^{(r)}_{\mathsf{w}} \rangle  \Big)\,.
    \label{equ_explicit_projection}
    \end{aligned}
\end{equation}
Clearly \eqref{equ_explicit_projection} can be upper-bounded by
\begin{align*}
    4K_t^2 \max_{s,r,l,m} \Big \{ &\sum_{\mathsf{u},w} \Hat{\mathbb{E}} \Big[ \langle \eta^{(t+1)}_k,D^{(t+1)}_v \rangle \langle \eta^{(s)}_m,\mathsf{D}^{(s)}_{\mathsf{u}} \rangle \Big] \mathbf{Q}_t((s,m, \mathsf u);(r,l,w)) \Big(\langle \eta^{(r)}_l, D^{(r)}_{w} \rangle \Big) , \\
    & \sum_{\mathsf{u},\mathsf{w}} \Hat{\mathbb{E}} \Big[\langle \eta^{(t+1)}_k, D^{(t+1)}_v \rangle \langle \eta^{(s)}_m, \mathsf{D}^{(s)}_{\mathsf{u}} \rangle \Big] \mathbf{Q}_t((s,m, \mathsf u);(r,l,\mathsf{w}))\Big(\langle \eta^{(r)}_l, \mathsf{D}^{(r)}_{\mathsf{w}} \rangle \Big) \Big \}\,.
\end{align*}
Thus, it suffices to bound every term in the maximum above. For simplicity, we only demonstrate how to bound terms of the form:
\begin{align}
    \sum_{u,w} \Hat{\mathbb{E}} \Big[\langle \eta^{(t+1)}_k,D^{(t+1)}_v \rangle \langle \eta^{(s)}_m, \mathsf{D}^{(s)}_{\pi(u)} \rangle \Big] \mathbf{Q}_t((s,m,\pi(u));(r,l,w))
    \Big(\langle \eta^{(r)}_l,D^{(r)}_{w} \rangle \Big)\,.
    \label{equ_one_part_projection}
\end{align}
\begin{Lemma}{\label{claim_one_part_projection}}
    We have for all $s, m, l, r$
     $$\mathbb{P} \left[ \eqref{equ_one_part_projection} \geq \frac{1}{4} K_{t+1}^{20} K_t^{-2} (\log n)  \Delta_{t+1} ; \mathcal{A}_t,\mathcal{B}_t,\mathcal{C}_t,\mathcal{E}_{t+1}  \right] \leq O(n) e^{- (\log n)^2/2} \,.$$
\end{Lemma}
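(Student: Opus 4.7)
The plan is to view the expression \eqref{equ_one_part_projection} as a weighted linear form $\sum_{w \in V \setminus A} \lambda_w X_w$ in the single variables $X_w := \langle \eta^{(r)}_l, D^{(r)}_w \rangle$, with weights
\begin{equation*}
\lambda_w := \sum_{u \in V \setminus A} c_u\, q_{u,w},\quad c_u := \Hat{\mathbb{E}}\bigl[\langle \eta^{(t+1)}_k, D^{(t+1)}_v \rangle \langle \eta^{(s)}_m, \mathsf{D}^{(s)}_{\pi(u)} \rangle\bigr],\quad q_{u,w} := \mathbf{Q}_t\bigl((s,m,\pi(u));(r,l,w)\bigr).
\end{equation*}
Every $c_u$ and $q_{u,w}$ is a function of $(\Gamma, \Pi)$ alone through the $\Hat{\mathbb{E}}$ operation, so on $\mathcal{E}_{t+1}$ these obey pathwise bounds: the diagonal term satisfies $|c_v| \leq 4\sqrt{K_{t+1}K_s}\,\Delta_{t+1}/\alpha$ by \eqref{equ_bound_correlation_same_vertices}, while the remaining $n-1$ terms satisfy $|c_u| \leq 4\sqrt{K_{t+1}K_s}/(\alpha n)$ by \eqref{equ_bound_correlation_dif_vertices}.

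Using $\sum_w \lambda_w^2 \leq \|\mathbf{Q}_t\|_{\mathrm{op}}^2 \sum_u c_u^2$, Lemma~\ref{lemma_op_norm_Q} together with the above estimates yields $\sum_w \lambda_w^2 \leq C\, K_{t+1} K_s\, \Delta_{t+1}^2/\alpha^2$ for an absolute constant $C$, the diagonal term $u=v$ dominating. Since $|X_w| \leq \log n$ on $\mathcal{B}_t$, Bernstein's inequality (Lemma~\ref{lemma_Berstein_ine}) with variance proxy $\sigma^2 = \sum_w \lambda_w^2$, uniform bound $K = \log n$, and deviation parameter $\gamma = (\log n)^2/2$ produces a tail threshold of order $\sqrt{K_{t+1}K_s}\, \Delta_{t+1}(\log n)/\alpha + (\log n)^3$. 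This is much smaller than the target $\tfrac14 K_{t+1}^{20} K_t^{-2}(\log n)\Delta_{t+1}$, because $K_{t+1} = K_t^2/\varkappa$ by \eqref{equ_iter_K}, so the ratio $K_{t+1}^{20}K_t^{-2}/\sqrt{K_{t+1}}$ is enormous.

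The main obstacle is circular dependence: the weights $\lambda_w$ are determined by $\Gamma^{(s')}_k, \Pi^{(s')}_k$ for $s'\leq t+1$, which themselves are functions of exactly the $\Hat G$-entries that enter each $X_w$, so a naive application of Bernstein fails. The remedy is the decomposition recorded in Remark~\ref{remark_projection_form}: conditionally on $\mathfrak{S}_{t-1}$, each random $X_w$ (i.e., the case $r=t$) splits as $\textup{PROJ}(X_w;\mathfrak{S}_{t-1})$ plus an independent Gaussian term $X_w - \textup{GAUS}(X_w;\mathfrak{S}_{t-1})$ built from the fresh family $(\Hat G^{(\textup{new})}, \Hat{\mathsf G}^{(\textup{new})})$ via Lemma~\ref{lemma_Gaussian_condition}. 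The $\textup{PROJ}$ part is $\mathfrak{S}_{t-1}$-measurable and is uniformly $O(K_t^{20}(\log n)\Delta_t)$ on $\mathcal{A}_t$, so its contribution to $\sum_w \lambda_w X_w$ is handled deterministically via the $1$-norm bound of Corollary~\ref{corollary_infty_2_norm_Q}. The Gaussian part for distinct $w$ depends on disjoint rows of the fresh matrices (thanks to the preprocessing step that made the entries of $\Hat G, \Hat{\mathsf G}$ independent), hence the Gaussian summands are mutually independent with per-term variance bounded on $\mathcal{C}_t$, and the Bernstein (or direct Gaussian) tail bound applies rigorously. A union bound over $v$ then yields the claimed $O(n)\exp(-(\log n)^2/2)$ probability; the case $r<t$ is analogous but easier since $X_w$ is already $\mathfrak{S}_{t-1}$-measurable and the same decomposition is applied on the coefficient side.
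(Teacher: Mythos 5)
Your reduction of \eqref{equ_one_part_projection} to $\sum_w \lambda_w X_w$ is correct algebraically, and the pathwise bounds on $c_u$ are right, but the concentration step contains a genuine gap that you flag and then do not actually fix. The weight $\lambda_w = \sum_u c_u q_{u,w}$ is \emph{not} $\mathfrak{S}_{t-1}$-measurable: $c_u = \Hat{\mathbb{E}}\bigl[\langle \eta^{(t+1)}_k, D^{(t+1)}_v\rangle\langle\eta^{(s)}_m,\mathsf{D}^{(s)}_{\pi(u)}\rangle\bigr]$ expands (cf.~\eqref{equ_bound_correlation_dif_vertices}) into a sum involving the indicators $\mathbf{I}_{u\in\Gamma^{(t+1)}_i}$, and $\Gamma^{(t+1)}_i$ is a function of $\{\langle\sigma^{(t)}_i, D^{(t)}_u\rangle\}_u$, i.e.\ of the very step-$t$ Gaussian degrees that also drive $X_w = \langle\eta^{(t)}_l, D^{(t)}_w\rangle$ when $r=t$. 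Your remedy decomposes only $X_w$ into $\textup{PROJ}$ plus a fresh-Gaussian term; the coefficients $\lambda_w$ are left as an opaque random weight. Bounding $\sum_w \lambda_w^2$ on $\mathcal{E}_{t+1}$ is a pathwise statement, but Bernstein (or any sub-Gaussian bound) over the fresh Gaussian part requires $\lambda_w$ to be independent of that part, and it is not: $\lambda_w$ depends on the entire collection $\{\langle\sigma^{(t)}_i, D^{(t)}_u\rangle\}_u$, which shares randomness with $\langle\eta^{(t)}_l, \tilde D^{(t)}_w\rangle$ for every $w$. The ``disjoint rows'' observation covers only the family $\{X_w\}_w$, not the coupling between $\lambda_w$ and $X_w$.

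The paper's proof closes precisely this gap, and the closure is the substantive content of the lemma. It splits $u=v$ from $u\neq v$, then for $u\neq v$ expands the $\Hat{\mathbb{E}}$ coefficient explicitly into a double sum over $(i,j)$ so that, at fixed $(i,j)$, the coefficient multiplying $\bigl(\mathbf{I}_{u\in\Gamma^{(t+1)}_i}-\alpha\bigr)\langle\eta^{(r)}_l, D^{(r)}_w\rangle$ \emph{is} $\mathfrak{S}_{t-1}$-measurable. For $r<t$ this leaves a linear sum of indicators with deterministic weights, amenable to Bernstein after the $\textup{PROJ}/\textup{GAUS}$ reduction. For $r=t$ one is left with a genuine bilinear form $\sum_{u,w}\lambda_{u,w}\bigl(\mathbf{I}_{\{|\langle\sigma^{(t)}_i, D^{(t)}_u\rangle|\geq 10\}}-\alpha\bigr)\langle\eta^{(t)}_l,D^{(t)}_w\rangle$, which the paper handles by the decoupling machinery of Lemma~\ref{lemma_tail_estimate_cleaned}, splitting off the $u=w$ diagonal, the cross-covariance $\mathbb{E}[Z_w|\mathcal{F}_b]$ term, and a residual, and invoking Lemma~\ref{lemma_decoupling_subGaussian} and Corollary~\ref{corollary_Hanson_Wright_independent_subGaussian}. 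Your proposal omits this bilinear/decoupling step entirely, and without it the argument fails for $r=t$.
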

Since the bounds for other terms are similar, by Lemma~\ref{claim_one_part_projection} and a union bound, we get that $\mathbb{P}[\eqref{equ_explicit_projection} \geq K_{t+1}^{20} (\log n) \Delta_{t+1} ; \mathcal{A}_t,\mathcal{B}_t,\mathcal{C}_t,\mathcal{E}_{t+1} ] \leq  10 K_t^2 n^2 \exp \{ -( \log n )^2/2\} $. In addition, by a similar argument, we can bound the lower tail probability for \eqref{equ_explicit_projection}. Applying a union bound then yields that
\begin{equation}\label{prob-mathcal-A-t+1}
\mathbb{P}[\mathcal{A}_{t+1}^{c} ; \mathcal{A}_t,\mathcal{B}_t,\mathcal{C}_t,\mathcal{E}_{t+1} ] \leq  O(1)K_t^2 n^2 e^{-(\log n)^2/2}\,.
\end{equation}

\begin{proof}[Proof of Lemma~\ref{claim_one_part_projection}]
First we consider the terms in the sum  \eqref{equ_one_part_projection} with $u= v$, which can be written as
\begin{align}
    \sum_{w} \Hat{\mathbb{E}} \Big[ \langle \eta^{(t+1)}_k,D^{(t+1)}_v \rangle \langle \eta^{(s)}_m , \mathsf{D}^{(s)}_{\pi(v)} \rangle \Big] \mathbf{Q}_t ((s,m,\pi(v));(r,l,w))\Big( \langle \eta^{(r)}_l,D^{(r)}_w \rangle \Big)\,.
    \label{equ_proj_v=u}
\end{align}
By \eqref{equ_bound_correlation_same_vertices}, on the event $\mathcal{E}_{t+1}$ we have 
$$\Hat{\mathbb{E}} \left[ \langle \eta^{(t+1)}_k,D^{(t+1)}_v \rangle \langle \eta^{(s)}_m, \mathsf{D}^{(s)}_{\pi(v)} \rangle \right] = \eta^{(t+1)}_k \mathbf{P}_{\Gamma,\Pi}^{(t+1,s)} \left(\eta^{(s)}_m \right)^{*} \leq K_{t+1} \Delta_{t+1}   \,.$$ 
Also on $\mathcal{B}_{t}$ we know $| \langle \eta^{(r)}_l , D^{(r)}_w \rangle| \leq \log n $. Thus, 
\begin{align}
   \eqref{equ_proj_v=u} &\leq  K_{t+1} \Delta_{t+1} (\log n) \sum_{w} \Big| \mathbf{Q}_t((s,m,\pi(v));(r,l,w))\Big| \nonumber\\
   &\overset{\eqref{equ_bound_1_norm_row_Q}}{\leq}  10000 \alpha^{-4} K_t^8 K_{t+1}  (\log n) \Delta_{t+1} \leq \frac{1}{10} K_{t+1}^{20} K_t^{-2} (\log n) \Delta_{t+1}\,. \label{eq-bound-3.34}
\end{align}
Now consider the terms in the sum  \eqref{equ_one_part_projection} with $v \neq u$, which can be written as
\begin{align*}
\sum_{i,j} \eta^{(t+1)}_k (i) \eta^{(s)}_m (j)  \sum_{u ,w}  \frac{( \mathbf{I}_{ \pi(v) \in \Pi^{(s)}_j}-\alpha) ( \mathbf{I}_{u \in \Gamma^{(t+1)}_i} -\alpha) }{(\alpha - \alpha^2) n} \mathbf{Q}_t((s,m,\pi(u));(r,l,w)) \left(\langle \eta^{(r)}_l,D^{(r)}_w \rangle \right)\,.
\end{align*}
Since $ \sum_{i,j} |\eta^{(t+1)}_k (i) \eta^{(s)}_m (j)| \leq 4 K_{t+1}$, it suffices to bound 
\begin{align}
    \sum_{u ,w}  \frac{( \mathbf{I}_{ \pi(v) \in \Pi^{(s)}_j}-\alpha) ( \mathbf{I}_{u \in \Gamma^{(t+1)}_i} -\alpha) }{(\alpha - \alpha^2) n} \mathbf{Q}_t((s,m,\pi(u));(r,l,w))\left(\langle \eta^{(r)}_l,D^{(r)}_w \rangle \right)\,
    \label{equ_proj_v_not=u}
\end{align}
by $K_{t+1}^{10} (\log n) \Delta_{t+1}$ for every $i,j$.
We first consider the case when $r<t$. Conditioned on $\mathfrak{S}_{t-1}$, \eqref{equ_proj_v_not=u} can be organized as
\begin{equation}
    \sum_{u \in V} \lambda_u \left( \mathbf{I}_{\left \{ \left| \langle \sigma^{(t)}_i,D^{(t)}_u \rangle|{\mathfrak{S}_{t-1}} \right| \geq 10 \right \} } - \alpha  \right)\,,
    \label{equ_organized_v_not=u}
\end{equation}
where $\lambda_u$ is measurable with respect to $\mathfrak{S}_{t-1}$ and is given by
\begin{align*}
    \lambda_u = \sum_{w \in V} \frac{ \mathbf{I}_{ \pi(v) \in \Pi^{(s)}_j} - \alpha}{(\alpha - \alpha^2) n} \mathbf{Q}_t((s,m,\pi(u)); (r,l,w))\left( \langle \eta^{(r)}_l , D^{(r)}_w \rangle \right)\,.
\end{align*}
By \eqref{equ_bound_1_norm_row_Q}, on the event $\mathcal{A}_t \cap \mathcal{B}_t \cap \mathcal E_t$  we have 
\begin{align}
    | \lambda_u | \leq \sum_{w \in V} \frac{ \log n }{\alpha n} \Big| \mathbf{Q}_t ((s,m,\pi(u));(r,l,w)) \Big|  \leq \frac{ 10000 K^8_t (\log n)}{ \alpha^5 n }\,.
    \label{equ_bound_lambda_u}
\end{align}
Also, recall from Remark \ref{remark_projection_form} that
\begin{align*}
    \langle \eta^{(t)}_i , D^{(t)}_u \rangle |{\mathfrak{S}_{t-1}} \overset{d}{=} \textup{PROJ}(\langle \eta^{(t)}_i, D^{(t)}_u \rangle; \mathfrak{S}_{t-1}) + \left( \langle \eta^{(t)}_i , \Tilde{D}^{(t)}_u \rangle - \textup{GAUS}(\langle \eta^{(t)}_i, D^{(t)}_u \rangle; \mathfrak{S}_{t-1})  \right) \,.
\end{align*}
On the event $\mathcal{C}_t$ we have that $\textup{GAUS}(\langle \eta^{(t)}_i, D^{(t)}_u \rangle; \mathfrak{S}_{t-1})$ has variance bounded by $ K_{t}^{20} \Delta_{t}^2$.  By a union bound we have
\begin{align}\label{eq-Gaus-probability-bound}
    \mathbb P(\exists u\in V\setminus A: \textup{GAUS}(\langle \eta^{(t)}_i, D^{(t)}_u \rangle; \mathfrak{S}_{t-1}) \geq K_{t}^{10} (\log n) \Delta_{t}\,;\, \mathcal{C}_t) \leq n e^{- (\log n)^{2}/2}\,.
\end{align}
On the event $\mathcal{A}_t$, we have $|\textup{PROJ}(\langle \eta^{(t)}_i, D^{(t)}_u \rangle; \mathfrak{S}_{t-1})| \leq K_t^{20} (\log n)  \Delta_{t}$. In addition, recalling \eqref{equ_def_sigma} we have $|\textup{PROJ}(\langle \sigma^{(t)}_k, D^{(t)}_v \rangle; \mathfrak{S}_{t-1})|
, | \textup{GAUS}(\langle \sigma^{(t)}_k, D^{(t)}_v \rangle; \mathfrak{S}_{t-1})| \leq K_{t}^{21} (\log n) \Delta_{t} $. Thus, on the event $\mathcal{A}_t$ we have
\begin{align*}
    \left \{ \left| \langle \sigma^{(t)}_i,\Tilde{D}^{(t)}_u \rangle \right| \geq 10 + 2 K_{t}^{21} (\log n) \Delta_{t}  \right \} &\subset \left \{ \left | \langle \sigma^{(t)}_i,D^{(t)}_u \rangle |{\mathfrak{S}_{t-1}} \right | \geq 10 \right \}  \\
    &\subset \left \{ \left | \langle \sigma^{(t)}_i, \Tilde{D}^{(t)}_u \rangle \right | \geq 10 - 2K_{t}^{21} (\log n) \Delta_{t}  \right \}\,.
\end{align*}
Therefore, on the complement of the event described in \eqref{eq-Gaus-probability-bound} and on $\mathcal A_t$, 
we have
\begin{align}
    \eqref{equ_organized_v_not=u} \leq \sum_{u \in V} \lambda_u \left( \mathbf{I}_{ \left \{ \left| \langle \sigma^{(t)}_i,\Tilde{D}^{(t)}_u \rangle \right| + 2\mathrm{sgn}(\lambda_u) K_{t}^{21} (\log n) \Delta_{t} \geq 10 \right \} }  - \alpha    \right)\,,
    \label{equ_upper_bound_org}  \\
    \eqref{equ_organized_v_not=u} \geq \sum_{u \in V} \lambda_u \left( \mathbf{I}_{ \left \{ \left | \langle \sigma^{(t)}_i,\Tilde{D}^{(t)}_u \rangle \right | - 2\mathrm{sgn}(\lambda_u) K_{t}^{21} (\log n) \Delta_{t} \geq 10 \right \} }  - \alpha    \right)\,.
    \label{equ_lower_bound_org}
\end{align}
Since $\{  \langle \sigma^{(t)}_i,\Tilde{D}^{(t)}_u \rangle: u \in V \backslash A \}$ is a collection of independent variables, the concentration of (the right hand sides of)  \eqref{equ_upper_bound_org} and \eqref{equ_lower_bound_org} follows from their first and second moments (the concentration for \eqref{equ_lower_bound_org}  is used to control the lower tail probability for \eqref{equ_explicit_projection}, whose proof is omitted due to high similarity). To this end, on the complement of the event described in \eqref{eq-Gaus-probability-bound} and on $\mathcal A_t \cap \mathcal B_t$, the conditional first moment of \eqref{equ_upper_bound_org} is upper-bounded by
\begin{align*}
    & \left| \sum_{u \in V} \lambda_u \mathbb{E} \left [ \mathbf{I}_{ \left \{ \left | \langle \sigma^{(t)}_i,\Tilde{D}^{(t)}_u \rangle \right | + 2\mathrm{sgn}(\lambda_u) K_{t}^{21} (\log n) \Delta_{t} \geq 10 \right \} }  - \alpha \right ] \right| \\
    \leq & \sum_{u \in V} |\lambda_u| \left|\mathbb{E} \left [ \mathbf{I}_{ \left \{ \left | \langle \sigma^{(t)}_i, \Tilde{D}^{(t)}_u \rangle \right | + 2\mathrm{sgn}(\lambda_u) K_{t}^{21} (\log n) \Delta_{t} \geq 10 \right \}}  - \alpha \right ] \right| \\
    \leq & \frac{ 10000 K^{8}_t (\log n)}{ \alpha^5 n } \sum_{u \in V}  \left|\mathbb{E} \left [ \mathbf{I}_{ \left \{ \left | \langle \sigma^{(t)}_i,\Tilde{D}^{(t)}_u \rangle \right | + 2\mathrm{sgn}(\lambda_u) K_{t}^{21} (\log n) \Delta_{t} \geq 10 \right \} }  - \alpha \right ] \right| \\
    \leq & \frac{ 10000 K^8_t (\log n)}{ \alpha^5 } \cdot  \left \{ ( 1+ 2 K_{t}^{21} (\log n) \Delta_{t} ) -1 \right \} \leq 10^5 \alpha^{-5} K_{t}^{29} (\log n)^{2} \Delta_{t} \leq \Delta_{t+1} \,,
\end{align*}
where we recall $\Delta_t$ as in \eqref{eq-def-Delta-s}.
In addition, the variance of \eqref{equ_upper_bound_org} is upper-bounded by 
\begin{align*}
    \sum_{u \in V} \lambda_u^2 \mathrm{Var} \left( \mathbf{I}_{  \left \{ \left | \langle \sigma^{(t)}_i,\Tilde{D}^{(t)}_u \rangle \right | + 2 \mathrm{sgn}(\lambda_u) K_{t}^{21} (\log n) \Delta_{t} \geq 10 \right \} } \right) & \leq \sum_{u} \lambda_u^2 \\
    &  \overset{\eqref{equ_bound_lambda_u}}{\leq} \frac{ 10^8 K_{t}^{16} (\log n)^2 }{ \alpha^{10} n} \leq \frac{ K_{t+1}^{10} (\log n)^3 }{n}\,.
\end{align*}
Now, applying Lemma \ref{lemma_Berstein_ine} we get that with probability $1- o(e^{-n^{0.1}})$, \eqref{equ_upper_bound_org} is upper-bounded by $\frac{1}{10} K_{t+1}^{10} \Delta_{t+1}$. Combined with \eqref{eq-bound-3.34}, this completes the proof of the lemma in this case.

Next we consider the case when $r=t$. Conditioned on $\mathfrak{S}_{t-1}$, \eqref{equ_proj_v_not=u} can be organized as follows:
\begin{align}
    \sum_{u,w} \lambda_{u,w} \left( \mathbf{I}_{ \left \{ \left| \langle \sigma^{(t)}_i, D^{(t)}_u \rangle|{\mathfrak{S}_{t-1}} \right | \geq 10 \right \} }  - \alpha \right) \left(\langle \eta^{(t)}_l, D^{(t)}_w \rangle |{ \mathfrak{S}_{t-1} } \right)\,,
    \label{equ_org_proj_r=t}
\end{align}
where $\lambda_{u, w}$ is measurable with respect to $ \mathfrak{S}_{t-1}$ and is given by 
\begin{align*}
    \lambda_{u,w}= \frac{ \mathbf{I}_{ \pi(v) \in \Pi^{(s)}_j } - \alpha}{ (\alpha - \alpha^2) n} \mathbf{Q}_t((s,m,\pi(u));(t,l,w))\,.
\end{align*}
By \eqref{equ_bound_2_norm_row_Q} and \eqref{equ_bound_1_norm_row_Q}, on the event $\mathcal{E}_{t+1}$ we have
\begin{align}
    &\sum_{w} |\lambda_{u,w}| \leq \frac{2}{\alpha n} \sum_{w} \Big| \mathbf{Q}_t ( (s,m,\pi(u));(t,l,w) ) \Big| \leq \frac{ 10^5 K_t^8 }{ \alpha^5 n } \,,\label{equ_bound_lambda_u_w} \\
    &\sum_{w} \lambda_{u,w}^2 \leq \frac{4}{ (\alpha n)^2 } \sum_{w} \mathbf{Q}_t ( (s,m,\pi(u));(t,l,w) )^2 \leq \frac{ 10^5 }{ \alpha^2 n^2} \,.\label{equ_bound_square_lambda_u_w}
\end{align}
We adopt the similar approach as for the concentration of \eqref{equ_upper_bound_org}. Since the second moment is upper-bounded via \eqref{equ_bound_square_lambda_u_w}, the key is to control the first moment. Recalling  Remark~\ref{remark_projection_form},  it suffices to estimate
\begin{equation}
    \begin{aligned}
    \sum_{u,w} \lambda_{u,w} \Big( \mathbf{I}_{ \left \{ \left| \langle \sigma^{(t)}_i, \Tilde{D}^{(t)}_u \rangle - \textup{GAUS}(\langle \sigma^{(t)}_i, D^{(t)}_u \rangle; \mathfrak{S}_{t-1}) + \textup{PROJ}(\langle \sigma^{(t)}_i, D^{(t)}_u \rangle; \mathfrak{S}_{t-1})  \right| \geq 10 \right \} }  - \alpha \Big)  \\
    \times \left(\langle \eta^{(t)}_l, \Tilde{D}^{(t)}_w \rangle - \textup{GAUS}(\langle \eta^{(t)}_l, D^{(t)}_w \rangle \rangle; \mathfrak{S}_{t-1}) + \textup{PROJ}(\langle \eta^{(t)}_l, D^{(t)}_w \rangle; \mathfrak{S}_{t-1})  \right)\,.
    \label{equ_proj_r=t_conditioned}
    \end{aligned}
\end{equation}
By a similar argument as in the previous case, we may assume without loss of generality that $|\textup{GAUS}(\langle \eta^{(t)}_l, D^{(t)}_w \rangle; \mathfrak{S}_{t-1}) |, |\textup{PROJ} (\langle \eta^{(t)}_l, D^{(t)}_w \rangle; \mathfrak{S}_{t-1})| \leq K_t^{20} (\log n) \Delta_t $ for all $m , u$ (this assumption holds with probability at least $1 - ne^{(\log n)^2/2}$, which is why we may assume without loss; similar convention will be used in what follows for exposition convenience). Thus,
\begin{align*}
   \Big| \sum_{u,w} \lambda_{u,w} \Big( \mathbf{I}_{ \left \{ \left| \langle \sigma^{(t)}_i, \Tilde{D}^{(t)}_u \rangle - \textup{GAUS}(\langle \sigma^{(t)}_i, D^{(t)}_u \rangle; \mathfrak{S}_{t-1}) + \textup{PROJ}(\langle \sigma^{(t)}_i, D^{(t)}_u \rangle; \mathfrak{S}_{t-1})  \right| \geq 10 \right \} }  - \alpha \Big)  \\
    \times\left(- \textup{GAUS}(\langle \eta^{(t)}_l, D^{(t)}_w \rangle \rangle; \mathfrak{S}_{t-1}) + \textup{PROJ}(\langle \eta^{(t)}_l, D^{(t)}_w \rangle; \mathfrak{S}_{t-1})  \right)\Big|
\end{align*}
is upper-bounded by 
\begin{align*}
    2 K_t^{20} (\log n) \Delta_t \sum_{u,w} |\lambda_{u,w}| \overset{\eqref{equ_bound_lambda_u_w}}{\leq}  10^6 \alpha^{-5} K_t^{28} (\log n) \Delta_t \leq  \frac{1}{10}  \Delta_{t+1}\,.
\end{align*}
Now we deal with the part
\begin{align}
    \sum_{u,w} \lambda_{u,w} \left( \mathbf{I}_{ \left \{ \left| \langle \sigma^{(t)}_i, \Tilde{D}^{(t)}_u \rangle - \textup{GAUS}(\langle \sigma^{(t)}_i, D^{(t)}_u \rangle; \mathfrak{S}_{t-1}) + \textup{PROJ}(\langle \sigma^{(t)}_i, D^{(t)}_u \rangle; \mathfrak{S}_{t-1})  \right| \geq 10 \right \} }  - \alpha \right)  \cdot \left(\langle \eta^{(t)}_l, \Tilde{D}^{(t)}_w \rangle  \right)\,.
    \label{equ_uncleaned_proj_r=t}
\end{align}
We wish to approximate it with
\begin{align}
    \sum_{u,w} \lambda_{u,w} \left( \mathbf{I}_{ \left \{ \left| \langle \sigma^{(t)}_i, \Tilde{D}^{(t)}_u \rangle \right | \geq 10 \right \} }  - \alpha \right)    \langle \eta^{(t)}_l, \Tilde{D}^{(t)}_w \rangle\,.
    \label{equ_cleaned_proj_r=t}
\end{align}
By a union bound, we may assume without loss that $|\langle \eta^{(t)}_l, \Tilde{D}^{(t)}_w \rangle| \leq \log n$. In light of this, we define
\begin{equation}
\begin{aligned}
    &\mathcal{L}^{(t)}_{i} =  \Big\{  u : \Big| \langle \sigma^{(t)}_i, \Tilde{D}^{(t)}_u \rangle \Big| \geq 10 \Big\} \triangle \\
    &\Big\{ u : \Big| \langle \sigma^{(t)}_i, \Tilde{D}^{(t)}_u \rangle  - \textup{GAUS}(\langle \sigma^{(t)}_i, D^{(t)}_u \rangle; \mathfrak{S}_{t-1}) + \textup{PROJ}(\langle \sigma^{(t)}_i, D^{(t)}_u \rangle; \mathfrak{S}_{t-1}) \Big|  \leq 10    \Big\}\,,
    \label{equ_bias_set}
\end{aligned}
\end{equation}
where $\triangle$ denotes for the symmetric difference. Then, the approximation error $|\eqref{equ_uncleaned_proj_r=t}- \eqref{equ_cleaned_proj_r=t}|$ is upper-bounded by
\begin{align*}
\sum_{u,w} |\lambda_{u,w}| |\langle \eta^{(t)}_l, \Tilde{D}^{(t)}_w \rangle| \mathbf{I}_{ \left \{ u \in \mathcal{L}_i^{(t)}  \right \}}  \leq \sum_{u,w} \log n |\lambda_{u,w}| \mathbf{I}_{ \left \{ u \in \mathcal{L}_i^{(t)} \right \}}  \overset{\eqref{equ_bound_lambda_u_w}}{\leq} \frac{ 10^5 K_t^8 \log n}{ \alpha^5 n}  |\mathcal{L}^{(t)}_{i}|\,,
\end{align*}
which holds under various assumptions we made earlier. Recall our assumptions on $\textup{PROJ}(\langle \sigma^{(t)}_i, D^{(t)}_u \rangle ; \mathfrak{S}_{t-1})$ and $\textup{GAUS}(\langle \sigma^{(t)}_i, D^{(t)}_u \rangle ; \mathfrak{S}_{t-1})$, on the event $\mathcal{A}_{t}$ we have that with probability at least $1-n e^{-(\log n)^{2}}$ the following holds (which we then assume below):
\begin{align*}
    \mathcal{L}^{(t)}_i \subset \left \{ u : 10 - 2 K_t^{21} (\log n)  \Delta_{t} \leq \left| \langle \sigma^{(t)}_i, \Tilde{D}^{(t)}_u \rangle \right | \leq 10 + 2 K_t^{21} (\log n)  \Delta_{t}  \right \}\,.
\end{align*}
Therefore, $| (\ref{equ_uncleaned_proj_r=t}) - (\ref{equ_cleaned_proj_r=t}) |$ is upper-bounded by
\begin{align*}
    \frac{10^5 K_t^8 (\log n)}{\alpha^5 n} \left| \mathcal{L}^{(t)}_i \right| \leq \frac{ 10^5 K_t^8 (\log n)}{ \alpha^5 n} \sum_{u} \mathbf{I}_{ \left\{ 10 - 2 K^{21}_t (\log n)  \Delta_{t} \leq \left| \langle \sigma^{(t)}_i, \Tilde{D}^{(t)}_u \rangle \right | \leq  10 + 2 K^{21}_t (\log n)  \Delta_{t}  \right \} }\,,
\end{align*}
which, by Lemma \ref{lemma_Berstein_ine} again, is bounded by $10^6 \alpha^{-5} K_t^{29} (\log n)^{3} \Delta_{t} < \frac{1}{10} \Delta_{t+1}$ with probability at least $1 - e^{ -n^{0.1} }$. 

It remains to estimate \eqref{equ_cleaned_proj_r=t} as incorporated in Lemma~\ref{lemma_tail_estimate_cleaned} below (which then completes the proof of Lemma~\ref{claim_one_part_projection}).
\end{proof}
\begin{Lemma}{\label{lemma_tail_estimate_cleaned}}
Under the assumptions of \eqref{equ_bound_lambda_u_w}  and \eqref{equ_bound_square_lambda_u_w}, we have that with probability $o(e^{-n^{0.1}})$, 
\begin{align*}
    \Big| \sum_{u,w} \lambda_{u,w} \Big( \mathbf{I}_{ \left \{ \left| \langle \sigma^{(t)}_i, \zeta_u \rangle \right | \geq 10 \right \} }  - \alpha \Big)    \langle \eta^{(t)}_l, \zeta'_w \rangle\Big| \leq  \Delta_{t+1}\,,
\end{align*}
where either $\zeta_u = \Tilde D^{(t)}_u$ for all $u$ or $\Tilde {\mathsf D}^{(t)}_{\pi(u)}$ for all $u$, and either $\zeta'_w = \Tilde D^{(t)}_w$ for $w$ or $\zeta'_w = \Tilde {\mathsf D}^{(t)}_{\pi(w)}$ for all $w$. 
\end{Lemma}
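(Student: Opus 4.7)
The plan is to split the sum into diagonal ($u=w$) and off-diagonal ($u\neq w$) parts, and to exploit the fact that, conditional on $\mathfrak{S}_{t-1}$, both $\tilde{D}^{(t)}_u$ and $\tilde{\mathsf{D}}^{(t)}_{\pi(u)}$ are deterministic linear functionals of row $u$ of the fresh copies $\hat{G}^{\mathrm{new}}$ and $\hat{\mathsf{G}}^{\mathrm{new}}$ respectively. Write $X_u = \mathbf{I}_{\{|\langle \sigma^{(t)}_i, \zeta_u\rangle|\geq 10\}} - \alpha$ and $Y_w = \langle \eta^{(t)}_l, \zeta'_w\rangle$. In each of the four cases, the pairs $(X_u,Y_u)$ are mutually independent across $u$; in the two \emph{mixed} cases, where $\zeta$ and $\zeta'$ come from different fresh matrices, the entire families $\{X_u\}$ and $\{Y_w\}$ are jointly independent. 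Moreover $|X_u|\leq 1$ and $Y_w$ is a centered Gaussian with variance $\eta^{(t)}_l \mathrm{M}^{(t,t)}_\Gamma (\eta^{(t)}_l)^{*} = 1 + O(K_t\Delta_t)$, so $\|X_u\|_{\psi_2}, \|Y_w\|_{\psi_2} = O(1)$.

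For the off-diagonal part, let $\Lambda_{\mathrm{off}}$ be obtained from $\Lambda=(\lambda_{u,w})$ by zeroing out its diagonal. By \eqref{equ_bound_square_lambda_u_w},
\[
\|\Lambda_{\mathrm{off}}\|_{\mathrm{HS}}^2 \leq \sum_{u,w} \lambda_{u,w}^2 \leq n \cdot \frac{10^5}{\alpha^2 n^2} = O\bigl(\tfrac{1}{n}\bigr).
\]
In the mixed cases I would apply Corollary~\ref{corollary_Hanson_Wright_independent_subGaussian} to the full matrix $\Lambda$; in the same-family cases I would apply Lemma~\ref{lemma_decoupling_subGaussian} to the centered variables $X_u - \mathbb{E}[X_u]$ and $Y_w$ together with $\Lambda_{\mathrm{off}}$, whose row-independence hypothesis is provided by the observation above. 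With threshold $s=\Delta_{t+1}/2 \geq n^{-0.1}$, both give a tail of order $\exp(-c\min(n^{0.8}, n^{0.4})) = \exp(-c n^{0.4})$, which is $o(e^{-n^{0.1}})$. The centering error $|\mathbb{E}[X_u]| = O(K_t\Delta_t)$, arising from $\mathrm{Var}(\langle \sigma^{(t)}_i, \zeta_u\rangle) = 1 + O(K_t\Delta_t)$, contributes a linear-in-$Y_w$ remainder that is easily absorbed by the same Gaussian tail.

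For the diagonal part, Cauchy--Schwarz and \eqref{equ_bound_square_lambda_u_w} give $|\lambda_{u,u}| \leq \sqrt{10^5}/(\alpha n)$, so the crude absolute bound yields only $\sum_u |\lambda_{u,u} X_u Y_u| = O(\log n)$, far too weak. The crucial observation is that $\mathbb{E}[X_u Y_u \mid \mathfrak{S}_{t-1}] = 0$ exactly: the pair $(\langle \sigma^{(t)}_i, \zeta_u\rangle, Y_u)$ is jointly Gaussian and mean-zero, so decomposing $Y_u$ into its projection onto $\langle \sigma^{(t)}_i, \zeta_u\rangle$ plus an independent Gaussian one obtains $\mathbb{E}[X_u Y_u] \propto \mathbb{E}[G\mathbf{I}_{\{|G|\geq 10\}}] = 0$ since $x \mathbf{I}_{\{|x|\geq 10\}}$ is odd. (This is precisely why the algorithm uses absolute values in \eqref{equ_initial_set} and \eqref{equ_def_iter_sets}.) Restricting to the event $\{|Y_u| \leq \log n \text{ for all } u\}$, which holds with probability $1 - O(n) e^{-(\log n)^2/2}$, each $\lambda_{u,u} X_u Y_u$ is bounded by $O(\log n/n)$ with variance $O(1/n^2)$; the total variance is therefore $O(1/n)$, and Bernstein's inequality (Lemma~\ref{lemma_Berstein_ine}) yields
\[
\mathbb{P}\Big[\Big|\sum_u \lambda_{u,u} X_u Y_u\Big| > \Delta_{t+1}/2\Big] \leq 2\exp\bigl(-\Omega(n^{0.8})\bigr) = o(e^{-n^{0.1}}).
\]

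The main obstacle is exactly the diagonal piece: the pointwise bound overshoots by a $\log n$ factor, and one must invoke the symmetry baked into the absolute-value indicators to force the vanishing of $\mathbb{E}[X_u Y_u]$. Once this is in place, a union bound over the (rare) auxiliary events used above, combined with the two tail estimates, gives the required $o(e^{-n^{0.1}})$ bound.
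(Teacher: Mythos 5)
Your decomposition into diagonal ($u=w$) and off-diagonal pieces is a genuinely different and arguably cleaner approach than the paper's. The paper instead decomposes $Z_w=\langle\eta^{(t)}_l,\zeta'_w\rangle$ by conditioning on $\mathcal{F}_b=\sigma\{W_u\}$, producing three pieces: a projection onto $\{W_v:v\neq u\}$ (handled by Lemma~\ref{lemma_decoupling_subGaussian}), a projection onto $W_u$ (handled by a Chernoff bound, using exactly the oddness cancellation you identify), and the residual of $Z$ given $\mathcal{F}_b$ (handled by Corollary~\ref{corollary_Hanson_Wright_independent_subGaussian}, which applies there because the residual is \emph{independent} of $b$). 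Your key observation that $\mathbb{E}[X_u Y_u]=0$ by oddness of $x\mathbf{I}_{\{|x|\geq 10\}}$ is exactly the cancellation the paper exploits for its piece \eqref{eq-3.47-part-2}, and your Bernstein bound for the diagonal with the resulting centering is a reasonable simplification. Your moment computations and the application of \eqref{equ_bound_lambda_u_w}, \eqref{equ_bound_square_lambda_u_w} are in order.

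There is, however, one genuine error. You assert that in the two mixed cases ($\zeta$ and $\zeta'$ from different fresh matrices) ``the entire families $\{X_u\}$ and $\{Y_w\}$ are jointly independent'' and on this basis you would apply Corollary~\ref{corollary_Hanson_Wright_independent_subGaussian} to the full matrix $\Lambda$. This is false: the fresh copies $(\hat G^{(\mathrm{new})},\hat{\mathsf G}^{(\mathrm{new})})$ in Remark~\ref{remark_projection_form} must be an independent copy of the \emph{jointly} Gaussian pair $(\hat G,\hat{\mathsf G})$ for the conditioning formula \eqref{eq-gaussian-condition-equivalent-form} to be valid, so they retain the $\epsilon/2$ correlation between $\hat G^{(\mathrm{new})}_{u,w}$ and $\hat{\mathsf G}^{(\mathrm{new})}_{\pi(u),\pi(w)}$. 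Consequently $\tilde D^{(t)}_u$ and $\tilde{\mathsf D}^{(t)}_{\pi(u)}$ are correlated Gaussians, so $X_u$ and $Y_u$ are \emph{not} independent in the mixed case, and the hypothesis of Corollary~\ref{corollary_Hanson_Wright_independent_subGaussian} fails. The indicator in \eqref{equ_def_matching_ver} and the finishing step \eqref{equ_def_final_pi_hat} of the algorithm depend precisely on this residual correlation being nonzero, so one cannot pretend it away here. The fix is easy given what you already have: do the diagonal/off-diagonal split in all four cases, bound the diagonal exactly as you propose, and bound $b\Lambda_{\mathrm{off}}Z^*$ via Lemma~\ref{lemma_decoupling_subGaussian} in every case — its hypothesis (that $(X_i,Y_i)\perp(X_{\setminus i},Y_{\setminus i})$) holds because the dependence is confined to row pairs $(u,\pi(u))$, which is precisely the row-wise independence the decoupling lemma needs. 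With that replacement, your estimate $\|\Lambda_{\mathrm{off}}\|_{\mathrm{HS}}^2=O(1/n)$ and $s=\Delta_{t+1}/2$ give the same $\exp(-cn^{0.4-o(1)})$ tail and the proof goes through.
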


\begin{proof}
We prove the case when $\zeta_u = \Tilde D^{(t)}_u$ and $\zeta'_w = \Tilde D^{(t)}_w$ and the other cases follow similarly. For notation convenience, denote $Z_u = \langle \eta^{(t)}_l, \Tilde{D}^{(t)}_u \rangle$, $W_u =  \langle \sigma^{(t)}_i, \Tilde{D}^{(t)}_u \rangle$ and  $b_u  = \mathbf{I}_{  \{ | W_u | \geq 10  \} } - \alpha$. Then $ b_u$'s and $ Z_u$'s are all mean-zero sub-Gaussian random variables. Let $\Lambda$ be matrix with $\Lambda_{u,w} = \lambda_{u,w}$. Then \eqref{equ_cleaned_proj_r=t} can be written as $\sum_{u,w} \lambda_{u,w} b_u Z_w = b \Lambda Z^{*}$, where $b= (b_u)_{u \in V \backslash A}$ and $Z=(Z_u)_{u \in V \backslash A}$. Define $\mathcal{F}_b = \sigma\{W_u: u \in V \backslash A \}$. Since $\{ W_u: u\in V\setminus A\}$ is a collection of independent Gaussian variables, we have
\begin{align*}
    \mathbb{E}[ Z_u | \mathcal{F}_b  ] = \sum_{w \in V \backslash A} \mathbb{E}[  Z_u W_w   ] W_w 
\end{align*}
where
\begin{equation}\label{eq-projection-coefficient-bound-Z-W}
 \mathbb{E}[ Z_u W_u ] = O(1) \mbox{ and }\mathbb{E}[ Z_u W_w ] = O( \frac{K_t}{n} ) \mbox{ for } w\neq u. 
 \end{equation}
 Now split (\ref{equ_cleaned_proj_r=t}) into
\begin{align}
    \eqref{equ_cleaned_proj_r=t} = \sum_{u,w} \lambda_{u,w} b_u Z_w &= \sum_{u,w} \lambda_{u,w} \left( \mathbf{I}_{ |W_u| \geq 10} - \alpha \right) ( \sum_{ v \not = u } \mathbb{E}[ Z_w W_v  ] W_v ) \label{eq-3.47-part-1} \\
    & + \sum_{u,w} \lambda_{u,w} \left( \mathbf{I}_{ |W_u| \geq 10} - \alpha \right) \left(  \mathbb{E}[ Z_w W_u  ] W_u \right)  \label{eq-3.47-part-2}\\
    &+ \sum_{u,w} \lambda_{u,w} \left ( \mathbf{I}_{ |W_u| \geq 10} - \alpha \right)  \left( Z_w - \mathbb{E}[ Z_w | \mathcal{F}_b ]  \right)\label{eq-3.47-part-3}\,.
\end{align}
First, we consider \eqref{eq-3.47-part-1}, which equals to
$\sum_{u \not = v} ( \sum_{w} \lambda_{u,w} \mathbb{E}[ Z_w W_v ]  )   ( \mathbf{I}_{ |W_u| \geq 10} - \alpha ) W_v
$.
Let $\Sigma$ be a matrix with $\Sigma_{u, v} = \mathbb{E}[ Z_u W_v ]$, and denote $\Xi$ the matrix with the same non-diagonal entries as $\Lambda \Sigma$ and with diagonal entries being 0. Then the sum can be written as $b \Xi W^{*}$, where $W$ is a vector with entries given by $W_u$.  We may write $\Sigma = D + H$, where $D$ is a diagonal matrix with entries bounded by 1 and $H$ is a matrix with entries bounded by $\frac{ K_t }{n}$. We then have $ \| \Lambda D \|_{\mathrm{HS}}^2 \leq \| \Lambda  \|^2
_{\mathrm{HS}} \leq \frac{K_t}{n}$  (by \eqref{equ_bound_square_lambda_u_w}, where we also need to sum over $u$)  and $ \| \Lambda H \|_{\mathrm{HS}}^2 \leq \| \Lambda  \|_{\mathrm{HS}}^2 \|H \|_{\mathrm{HS}}^2 \leq \frac{K_t^{3}}{n} $. Thus,  $\| \Lambda \Sigma  \|^2_{\mathrm{HS}}\leq \frac{K_t^3}{n}$. Applying Lemma \ref{lemma_decoupling_subGaussian}, we get that $|\eqref{eq-3.47-part-1}| \geq \frac{1}{10} \Delta_{t+1}$ with probability $o(e^{-n^{0.1}})$ (with room to spare).

Next, we consider \eqref{eq-3.47-part-2}, which equals to 
$\sum_{u} ( \sum_{w} \lambda_{u,w} \mathbb{E}[ Z_w W_u ] ) ( \mathbf{I}_{ |W_u| \geq 10} - \alpha ) W_u\,.
$
By \eqref{equ_bound_lambda_u_w} and \eqref{eq-projection-coefficient-bound-Z-W}, we have $\sum_{w} \lambda_{u,w} \mathbb{E}[ Z_w W_u ] \leq \frac{O(K_t^{10})}{n}$. Also note that $ \{  ( \mathbf{I}_{|W_u| \geq 10} - \alpha ) W_u: u\in V\setminus A  \} $ is a collection of mean-zero independent random variables, so by the standard Chernoff bound we know $|\eqref{eq-3.47-part-2}| \geq \frac{1}{10} \Delta_{t+1}$ with probability $o(e^{-n^{0.1}})$.

Finally, we consider \eqref{eq-3.47-part-3}, which equals  $b \Lambda \Tilde{Z}^{*}$, where $\Tilde{Z}$ is given by $\Tilde{Z}_w = Z_w - \mathbb{E}[ Z_w | \mathcal{F}_b ]$. There must exist a matrix $T$ and i.i.d.\ standard Gaussian vector $\Hat{Z}$ such that $\Tilde{Z} = \Hat{Z} T $, and $\Hat{Z}$ is independent with $\mathcal F_b$. So $\eqref{eq-3.47-part-3} = b \Lambda T^{*} (\Hat{Z})^{*}$ where $b,\Hat{Z}$ are sub-Gaussian random vectors with independent entries. By Corollary~ \ref{corollary_Hanson_Wright_independent_subGaussian}, it suffices to control the Hilbert-Schmidt norm of $\Lambda T^{*}$. To this end, note that $ T^{*} T = \mathbb{E} \{    ( \Hat{Z} T )^{*} \Hat{Z} T   \} = \mathbb{E} \{ \Tilde{Z}^{*} \Tilde{Z}  \} $. In addition,
\begin{align*}
    \mathbb{E}[ \Tilde{Z}_v \Tilde{Z}_w ] &= \mathbb{E} [ ( Z_w - \mathbb{E}[ Z_w | \mathcal{F}_b ] ) ( Z_v - \mathbb{E}[ Z_v | \mathcal{F}_b ] )  ] = - \mathbb{E}[ \mathbb{E}[ Z_w | \mathcal{F}_b ] \mathbb{E}[ Z_v | \mathcal{F}_b ]  ] \\
    &= - \mathbb{E} [ ( \sum_{h \in V \backslash A} \mathbb{E}[  Z_w W_h   ] W_h ) ( \sum_{h \in V \backslash A} \mathbb{E}[  Z_v W_h   ] W_h )  ] = O( \frac{K_t^2}{n} )
\end{align*}
for $v \not = w$,  where we have used \eqref{eq-projection-coefficient-bound-Z-W}. So $\|T^{*} T\|^2_{\mathrm{op}} \leq K_t^4$. Thus, we have
\begin{align*}
    \| \Lambda T^{*}   \|^2_{\mathrm{HS}} = \mathrm{tr} ( \Lambda T^{*} T \Lambda^{*} ) \leq K_t^4 \mathrm{tr} ( \Lambda \Lambda^{*} ) \leq K_t^4 \|  \Lambda \|^2_{\mathrm{HS}} \leq \frac{O(K_t^5)}{n}\,.
\end{align*}
Applying Corollary \ref{corollary_Hanson_Wright_independent_subGaussian}, we have  $|\eqref{eq-3.47-part-3}| \geq \frac{1}{10}  \Delta_{t+1}$ with probability $o(e^{-n^{0.1}})$. Combining preceding bounds on \eqref{eq-3.47-part-1}, \eqref{eq-3.47-part-2} and \eqref{eq-3.47-part-3}, we complete the proof of the lemma.
\end{proof}

\subsubsection{Step 3: $\mathcal{B}_{t+1}, \mathcal{C}_{t+1}$ }

Recalling Remark \ref{remark_projection_form}, we have that:
\begin{align*}
    \textup{GAUS}( \langle \eta^{(t+1)}_k, D^{(t+1)}_v \rangle; \mathfrak{S}_{t} ) = 
    \begin{pmatrix}
    H_t & \mathsf{H}_t
    \end{pmatrix}
    \mathbf{Q}_t  
    \begin{pmatrix}
    \Tilde{Y}_t^{*} \\ \Tilde{\mathsf{Y}}_t^{*}
    \end{pmatrix}\,.
\end{align*}
Thus, the variance of $\textup{GAUS}( \langle \eta^{(t+1)}_k, D^{(t+1)}_v \rangle ; \mathfrak{S}_t )$ is given by
\begin{align*}
    \begin{pmatrix}
    H_t & \mathsf{H}_t
    \end{pmatrix}
    \mathbf{Q}_t  
    \begin{pmatrix}
    H_t^{*} \\ \mathsf{H}_t^{*}
    \end{pmatrix} 
    \leq \| \mathbf{Q}_t \|_{\mathrm{op}} \| \begin{pmatrix} H_t & \mathsf{H}_t \end{pmatrix} \|_2^2  \leq K_{t+1}^{20} \Delta_{t+1}^2\,,
\end{align*}
where the last inequality holds on the event  $\mathcal{E}_{t+1}\cap \mathcal{A}_{t+1}$, thanks to
Lemma \ref{lemma_op_norm_Q} and \eqref{equ_bound_2_norm_H_t}.
This implies that 
\begin{equation}\label{eq-mathcal-C-containment}
\mathcal E_{t+1} \cap \mathcal A_{t+1} \subset \mathcal C_{t+1}\,.
\end{equation}
Now we control $\mathcal{B}_{t+1}$ assuming  $\mathcal{A}_{t+1}$. Recall that on the event $\mathcal{A}_{t+1}$ 
$$|\textup{PROJ}( \langle \eta^{(t+1)}_k, D^{(t+1)}_v \rangle ; \mathfrak{S}_t )| \leq K_{t+1}^{21} (\log n) \Delta_{t+1}  = o(1)\,.$$ 
In addition, by Lemma \ref{lemma_Gaussian_convar} and \eqref{equ_degree_variance_1} the variance of $\langle \eta^{(t+1)}_k, \Tilde{D}^{(t+1)}_v \rangle - \textup{GAUS}( \langle \eta^{(t+1)}_k, D^{(t+1)}_v \rangle ; \mathfrak{S}_t )$ is bounded by $1+K_{t+1} \Delta_{t+1} $ . Thus,
\begin{align*}
   & \mathbb{P}[ | \langle \eta^{(t+1)}_k, D^{(t+1)}_v \rangle | > \log n; \mathcal A_{t+1} ]\\ &\leq \mathbb{P}[ | \langle \eta^{(t+1)}_k, \Tilde{D}^{(t+1)}_v \rangle - \textup{GAUS}( \langle \eta^{(t+1)}_k, D^{(t+1)}_v \rangle ; \mathfrak{S}_t ) | > \log n -1 ]  \\
    &\leq \exp \{ - \frac{1}{4}(\log n -1 )^2  \}\,.
\end{align*}
By a union bound we derive that 
\begin{equation}\label{eq-mathcal-B-happens}
\mathbb P(\mathcal B_{t+1}^c \cap \mathcal A_{t+1 }) \leq K_{t+1} n e^{- (\log n-1)^2/4}\,.
\end{equation}

\subsubsection{Step 4: $\mathcal{E}_{t+2}$}
The goal of this subsection is to prove
\begin{equation}\label{eq-mathcal-E-t+1-holds}
\mathbb{P}[\mathcal{E}_{t+2}^{c}; \mathcal{A}_{t+1},\mathcal{B}_{t+1},\mathcal{C}_{t+1}] \leq \frac{1}{(\log n)^2}\,.
\end{equation}
Note that (iv.), (vi.) and (viii.) hold since we already proved $\mathcal{E}_0$. We thus focus on the other requirements. For $\zeta, \zeta' \in \cup_{k , v}\{\langle \sigma^{(t+1)}_k, D^{(t+1)}_v \rangle\} \bigcup \cup_{k , \mathsf v}\{ \langle \sigma^{(t+1)}_k, \mathsf{D}^{(t+1)}_{\mathsf v} \rangle\}$, we get from \eqref{eq-conditional-covariance-projection}  that
\begin{align}
    \mathbb{E} \big[ \big( \zeta|\mathfrak{S}_t \big) \big( \zeta'|\mathfrak{S}_t \big)   \big] 
    &=  \Hat{\mathbb{E}} \big[\zeta \zeta'] - \Hat{\mathbb{E}} \big[ \textup{GAUS}( \zeta ; \mathfrak{S}_t ) \textup{GAUS}( \zeta' ; \mathfrak{S}_t ) \big]  \nonumber\\
    &= \Hat{\mathbb{E}} [\zeta\zeta' ] + o(K_{t+1}^{22} \Delta_{t+1}^2)\,,
    \label{equ_conditional_cov}
\end{align}
where the last inequality holds on the event $\mathcal{C}_{t+1}$ and we have also applied the Cauchy-Schwartz inequality as well as the following fact:
\begin{align*}
    & \mathrm{Var}( \textup{GAUS}( \langle \sigma^{(t+1)}_k, D^{(t+1)}_v \rangle ; \mathfrak{S}_t )  ) \overset{\eqref{equ_def_sigma}}{=} \frac{12}{K_{t+1}} \mathrm{Var} \Big(  \sum_{j=1}^{\frac{1}{12} K_{t+1}} \beta_k^{(t+1)} (j) \textup{GAUS}( \langle \eta^{(t+1)}_j, D^{(t+1)}_v \rangle ; \mathfrak{S}_t )   \Big) \\
    & \leq \frac{12}{K_{t+1}} \Big ( \sum_{j=1}^{\frac{K_{t+1}}{12}} \mathrm{Var}( \textup{GAUS}( \langle \eta^{(t+1)}_j, D^{(t+1)}_v \rangle ; \mathfrak{S}_t )  ) \Big)^2 \overset{\mathcal{C}_{t+1}}{\leq} K_{t+1}^{21} \Delta_{t+1}^2 \,.
\end{align*}
In light of \eqref{equ_conditional_cov}, we consider the Gaussian density under perturbation, as follows. 
\begin{Lemma}{\label{lemma_estimation_perturbed_Gaussian_density}}
For  $d\geq 1$, let $\Sigma, \Sigma_\delta$ be positive definite $d*d$ matrices and let $\mu, \mu_\delta \in \mathbb R^d$ such that $\|\Sigma - \Sigma_\delta\|_\infty, \|\mu - \mu_\delta\|_\infty \leq \delta \leq 1$ and $\|\Sigma\|_\infty, \|\Sigma^{-1}\|_\infty, \|\Sigma^{-1}_{\delta}\|_{\infty} \leq M$. Let $p_{\mu, \Sigma}$ denote the density for a normal vector with mean $\mu$ and covariance matrix $\Sigma$.  Then there exists a constant $C= C(M, d)$ such that 
    \begin{align*}
     \exp \{ -C  ( \| x \| + \| \mu \| + 1)^2 \delta \}  \leq  \frac{ p_{\mu_{\delta} ,\Sigma_{\delta} } (x) }{ p_{\mu , \Sigma} (x) } \leq \exp \{ C  ( \| x \| + \| \mu \| + 1)^2 \delta \} \,.
    \end{align*}
\end{Lemma}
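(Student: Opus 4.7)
The plan is to compute the ratio $p_{\mu_\delta,\Sigma_\delta}(x)/p_{\mu,\Sigma}(x)$ explicitly and bound the resulting factors using the closeness of $(\mu_\delta,\Sigma_\delta)$ to $(\mu,\Sigma)$. Writing
$$\frac{p_{\mu_\delta,\Sigma_\delta}(x)}{p_{\mu,\Sigma}(x)} = \sqrt{\frac{\det\Sigma}{\det\Sigma_\delta}} \cdot \exp\Bigl\{-\tfrac{1}{2}Q(x)\Bigr\},$$
where $Q(x) = (x-\mu_\delta)^*\Sigma_\delta^{-1}(x-\mu_\delta) - (x-\mu)^*\Sigma^{-1}(x-\mu)$, it suffices to show that the determinant factor equals $\exp(O(\delta))$ and that $|Q(x)| \leq C(M,d)(\|x\|+\|\mu\|+1)^2\delta$. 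The constants throughout are allowed to depend on $M$ and $d$.

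For the determinant ratio I would write $\Sigma_\delta = \Sigma + E$ with $\|E\|_\infty \leq d\delta$ (using that $E$ has entries of absolute value at most $\delta$), so since $\Sigma,\Sigma_\delta$ are symmetric, $\|E\|_{\mathrm{op}} \leq \|E\|_\infty \leq d\delta$. The hypothesis $\|\Sigma^{-1}\|_\infty \leq M$ gives $\lambda_{\min}(\Sigma) \geq 1/M$, and for $\delta$ below a threshold depending on $M,d$ (which we may assume, otherwise the estimate is trivial after enlarging $C$), all eigenvalues of $\Sigma$ and $\Sigma_\delta$ lie in a fixed interval $[1/(2M), M]$ and differ pairwise by at most $d\delta$. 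Hence $\det\Sigma/\det\Sigma_\delta = \prod_i(1 + O(M d\delta)) = 1 + C_1(M,d)\delta$, so $\sqrt{\det\Sigma/\det\Sigma_\delta} = \exp(C_1'(M,d)\delta)$.

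For the quadratic form I would decompose
$$Q(x) = (x-\mu_\delta)^*(\Sigma_\delta^{-1}-\Sigma^{-1})(x-\mu_\delta) + 2(\mu-\mu_\delta)^*\Sigma^{-1}(x-\mu) + (\mu-\mu_\delta)^*\Sigma^{-1}(\mu-\mu_\delta),$$
using the identity $\Sigma_\delta^{-1}-\Sigma^{-1} = -\Sigma_\delta^{-1}(\Sigma_\delta-\Sigma)\Sigma^{-1}$ together with submultiplicativity of $\|\cdot\|_\infty$ to get $\|\Sigma_\delta^{-1}-\Sigma^{-1}\|_{\mathrm{op}} \leq \|\Sigma_\delta^{-1}-\Sigma^{-1}\|_\infty \leq M^2\delta$. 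The first summand is then bounded by $M^2\delta\cdot\|x-\mu_\delta\|^2 \leq C_2(\|x\|+\|\mu\|+1)^2\delta$; the cross term by $2\|\mu-\mu_\delta\|_2\cdot M\cdot\|x-\mu\|_2 \leq C_3(\|x\|+\|\mu\|+1)\delta$; and the final term by $Md\,\delta^2 \leq C_4\delta$. Summing yields $|Q(x)| \leq C_5(M,d)(\|x\|+\|\mu\|+1)^2\delta$, and combining with the determinant estimate gives the claim in both directions.

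I do not foresee any serious obstacle here: the argument is a routine determinant-plus-quadratic-form expansion. The only nuisance is bookkeeping between the three matrix norms ($\|\cdot\|_\infty$, $\|\cdot\|_{\mathrm{op}}$, and entrywise maximum), since the hypotheses are phrased in $\|\cdot\|_\infty$ whereas the natural bounds on $y^*Ay$ and eigenvalues want $\|\cdot\|_{\mathrm{op}}$; each conversion costs at most a polynomial factor in $d$, which is absorbed into $C(M,d)$.
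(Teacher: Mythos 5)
Your proof is correct and follows essentially the same route as the paper's: you write the ratio as (determinant factor)\,$\times$\,$\exp\{-\frac12 Q(x)\}$, use exactly the same three-term decomposition of $Q(x)$, bound the middle factor with $\Sigma_\delta^{-1}-\Sigma^{-1} = -\Sigma_\delta^{-1}(\Sigma_\delta-\Sigma)\Sigma^{-1}$ and submultiplicativity, and absorb dimension constants into $C(M,d)$. The only cosmetic difference is in the determinant estimate, where you use an eigenvalue-perturbation (Weyl) argument giving $\det\Sigma/\det\Sigma_\delta = \prod_i(1+O(Md\delta))$, whereas the paper writes $\det\Sigma/\det\Sigma_\delta = 1/\det(\mathrm{I}+\Sigma^{-1}(\Sigma_\delta-\Sigma)) = 1+O(C(M,d)\delta)$; both are standard and equivalent in effect, so this is not a genuinely different approach.
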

\begin{proof}
In what follows we let $C= C(M,d)$ be a constant depending only on $M$ and $d$ whose exact value may change from line to line. Note that for $x \in \mathbb R^d$
\begin{align*}
     p_{\mu , \Sigma} (x) = \frac{1}{\sqrt{(2\pi)^d} \mathrm{det}(\Sigma)} \exp \{ -\frac{1}{2}(x- \mu) \Sigma^{-1} (x - \mu)^{*}  \} \,,
\end{align*}
where $\mathrm{det}$ denotes the determinant of a matrix.
Obviously an analogous formula holds for $p_{\mu_\delta, \Sigma_\delta}$. Thus,
\begin{align*}
    \frac{ p_{\mu_{\delta},  \Sigma_{\delta}} (x) }{ p_{ \mu , \Sigma} (x) } = \sqrt{ \frac{ \mathrm{det}(\Sigma) }{ \mathrm{det}(\Sigma_{\delta}) } } \exp &\{ -\frac{1}{2} (x- \mu_{\delta}) (\Sigma_{\delta}^{-1} - \Sigma^{-1}) (x- \mu_{\delta})^{*} \\
   & - (\mu - \mu_{\delta}) \Sigma^{-1} ( x - \mu )^{*}  - \frac{1}{2} (\mu_{\delta} - \mu) \Sigma^{-1} (\mu_{\delta} - \mu)^{*}   \}\,.
\end{align*}
Note that 
\begin{align*}
    \sqrt{ \frac{ \mathrm{det}(\Sigma) }{ \mathrm{det}(\Sigma_{\delta}) } } = \sqrt{ \frac{1}{\mathrm{det}( \mathrm{I} + \Sigma^{-1}( \Sigma_{\delta} -\Sigma) )} } = \sqrt{ \frac{1}{ 1 + O( C(M,d) \delta) } }= 1 + O( C(M,d) \delta)\,.
\end{align*}
In addition, 
\begin{align*}
    & |(\mu_{\delta} - \mu ) \Sigma^{-1} ( x - \mu )^{*}| \leq \| \Sigma^{-1}  \|_{\mathrm{op}} \| \mu_{\delta}  - \mu \| \| x - \mu  \| \leq C(M,d) \delta  (\| x \| + \| \mu \| + 1 )\,,\\
    & |(\mu_{\delta} - \mu ) \Sigma^{-1} ( \mu_{\delta} - \mu )^{*}| \leq \| \Sigma^{-1}  \|_{\mathrm{op}} \| \mu_{\delta}  - \mu \|^2  \leq C(M,d) \delta^2 \,,\\
    & |(x- \mu_{\delta})( \Sigma_{\delta}^{-1}  - \Sigma^{-1} )(x- \mu_{\delta})^{*} | \leq \| \Sigma^{-1}_{\delta} \|_{\mathrm{op}}  \|\Sigma^{-1}\|_{\mathrm{op}}  \|  \Sigma_{\delta} - \Sigma  \|_{\mathrm{op}} (1+  \| x \|+ \| \mu \|)^2  \\
    &\leq C(M,d) \delta(1+  \| x \|+ \| \mu \|)^2 \,.
\end{align*}
Altogether, this completes the proof of the upper bound on the density ratio. The lower bound follows similarly.
\end{proof}

\begin{Corollary}{\label{corollary_estimation_perturbed_Gaussian_tail}}
We continue to make assumptions as in Lemma~\ref{lemma_estimation_perturbed_Gaussian_density}.  Let $Z, Z'$ be normal vectors with parameters $(\mu, \Sigma)$ and $(\mu_\delta, \Sigma_\delta)$ respectively. For any constant $C_1>0$, there exists $C_2 = C_2(C_1, d, M)> 0$ such that the following holds for all $\delta, L$:
$$\Big| \mathbb{P}[  | Z| \geq \gamma  ] - \mathbb{P} [ | Z' | \geq \gamma  ] \Big | \leq C_2 e^{- C_2^{-1}L^2} + ( e^{ C_2 (L+1)^2 \delta} -1 ) \mbox{ for } \|\gamma\|_\infty, \|\mu\|_\infty \leq C_1\,.$$
\end{Corollary}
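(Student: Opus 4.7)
\medskip\noindent
\textbf{Proof proposal.} The plan is to bootstrap from the density ratio estimate in Lemma~\ref{lemma_estimation_perturbed_Gaussian_density} by splitting the event $\{|Z| \geq \gamma\}$ according to whether $\|Z\| \leq L$ or $\|Z\| > L$, and similarly for $Z'$. On the bulk event $\{\|\cdot\| \leq L\}$ the density ratio is controlled pointwise, while the tail event has exponentially small probability.

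First I would handle the tail. Since $Z \sim \mathcal{N}(\mu, \Sigma)$ with $\|\mu\|_\infty \leq C_1$ and $\|\Sigma\|_\infty \leq M$ (and $d$ fixed), a standard Gaussian concentration bound (or a union bound over coordinates combined with the one-dimensional tail) gives $\mathbb{P}[\|Z\| > L] \leq C e^{-c L^2}$ for constants depending only on $C_1, M, d$. The same bound applies to $Z'$ because the assumptions on $\Sigma_\delta, \mu_\delta$ are quantitatively comparable to those on $\Sigma, \mu$ (using $\delta \leq 1$ and the bound $\|\Sigma_\delta^{-1}\|_\infty \leq M$, which implies a matching lower bound on the smallest eigenvalue so the Gaussian tail constants are uniform). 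This produces the $C_2 e^{-C_2^{-1} L^2}$ contribution.

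Next, for the bulk, write
\begin{align*}
\mathbb{P}[|Z| \geq \gamma, \|Z\| \leq L] = \int_{\mathcal{R}} p_{\mu, \Sigma}(x)\,dx, \qquad \mathbb{P}[|Z'| \geq \gamma, \|Z'\| \leq L] = \int_{\mathcal{R}} p_{\mu_\delta, \Sigma_\delta}(x)\,dx,
\end{align*}
where $\mathcal{R} = \{x : |x| \geq \gamma, \|x\| \leq L\}$. On this region $\|x\| \leq L$ and $\|\mu\| \leq C_1\sqrt{d}$, so Lemma~\ref{lemma_estimation_perturbed_Gaussian_density} gives the pointwise bound $p_{\mu_\delta, \Sigma_\delta}(x) \leq e^{C(L+1)^2 \delta} p_{\mu, \Sigma}(x)$ (after absorbing $C_1\sqrt{d}$ into the constant). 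Integrating and using the trivial bound $\mathbb{P}[|Z| \geq \gamma, \|Z\| \leq L] \leq 1$ yields
\begin{align*}
\mathbb{P}[|Z'| \geq \gamma, \|Z'\| \leq L] - \mathbb{P}[|Z| \geq \gamma, \|Z\| \leq L] \leq e^{C(L+1)^2 \delta} - 1.
\end{align*}
The reverse inequality follows from the matching lower bound in Lemma~\ref{lemma_estimation_perturbed_Gaussian_density}, giving the symmetric estimate $|\mathbb{P}[|Z'| \geq \gamma, \|Z'\|\leq L] - \mathbb{P}[|Z|\geq \gamma, \|Z\|\leq L]| \leq e^{C(L+1)^2\delta} - 1$.

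Combining the bulk estimate with the two tail estimates via the triangle inequality, and choosing $C_2$ large enough to absorb all constants (including the ones from the tail bound and from the $\sqrt{d}$ factors), yields the claimed inequality. There is no real obstacle here: the only mild subtlety is verifying that the Gaussian tail bound applies uniformly to $Z'$, which is why we need $\|\Sigma_\delta^{-1}\|_\infty \leq M$ in the hypotheses of Lemma~\ref{lemma_estimation_perturbed_Gaussian_density} (it forces $\Sigma_\delta$ to have eigenvalues bounded below by $1/(Md)$, say, so $\mathrm{Var}(\langle \theta, Z'\rangle)$ is $O(1)$ in every unit direction $\theta$).
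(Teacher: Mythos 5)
Your argument follows essentially the same route as the paper: truncate the domain of integration to a ball of radius $L$, bound the truncated part by the density-ratio estimate of Lemma~\ref{lemma_estimation_perturbed_Gaussian_density}, and bound the outer part by a Gaussian tail estimate. The decomposition and the way the two error terms are combined match the paper's proof, so the proposal is correct in substance.

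One small slip in the justification of uniformity: you argue that $\|\Sigma_\delta^{-1}\|_\infty \leq M$ gives a lower bound on the smallest eigenvalue of $\Sigma_\delta$, and claim this makes the Gaussian tail constants uniform. A lower bound on the smallest eigenvalue of $\Sigma_\delta$ bounds the directional variances from \emph{below}, which is irrelevant for a tail estimate; what you actually need is an upper bound on the directional variances, i.e., an upper bound on the largest eigenvalue of $\Sigma_\delta$. That upper bound is available, but it comes from the \emph{other} two hypotheses: $\|\Sigma_\delta\|_\infty \leq \|\Sigma\|_\infty + \|\Sigma - \Sigma_\delta\|_\infty \leq M + \delta \leq M+1$, together with $\|\mu_\delta\|_\infty \leq \|\mu\|_\infty + \delta \leq C_1 + 1$. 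With that correction the tail estimate $\mathbb{P}[\|Z'\|>L] \leq C e^{-cL^2}$ indeed holds with constants depending only on $C_1, M, d$, and the rest of your argument goes through.
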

\begin{proof}
We need to compare $ \int_{|x| \geq \gamma }  p_{\mu_{\delta} , \Sigma_{\delta}} (x) dx$ against   $\int_{|x| \geq \gamma } p_{\mu , \Sigma} (x) dx
$.
Note we may truncate the integration region to $\|x\|_\infty \leq L$, since the integral outside is bounded by $e^{- C_2^{-1}L^2}$.  For $\|x\|_\infty \leq L$, by Lemma~\ref{lemma_estimation_perturbed_Gaussian_density}
\begin{align*}
   | \frac{ p_{\mu_{\delta} ,  \Sigma_{\delta}} (x) }{ p_{\mu , \Sigma} (x) }  - 1| \leq C_2 \delta/4+ ( \exp \{ C_2( (\| x \|+ \| \mu \|)^2 \delta) /4  \} -1 ) =  (e^{ C_2( (L^2 +1 )\delta )}-1) \,.
\end{align*}
This completes the proof of the corollary by a standard computation.
\end{proof}
In our application of Corollary~\ref{corollary_estimation_perturbed_Gaussian_tail} later,  we will choose $L = \log n$ and $ \frac{1}{n} \leq \delta \leq n^{-0.01}$, so the approximation error will be bounded by $O( (\log n)^2 \delta)$.

Now we return to the proof of $\mathcal{E}_{t+2}$, i.e., to inductively verify admissible conditions. In what follows, we always assume that $\mathcal A_{t+1}, \mathcal B_{t+1}, \mathcal C_{t+1}$ hold.
Recalling Remark~\ref{remark_projection_form}, we have $\langle \sigma^{(t+1)}_k, D^{(t+1)}_v \rangle | \mathfrak{S}_t $ has the same distribution as
\begin{align*}
     \langle \sigma^{(t+1)}_k, \Tilde{D}^{(t+1)}_v \rangle - \textup{GAUS}(\langle \sigma^{(t+1)}_k, D^{(t+1)}_v \rangle; {\mathfrak{S}_t}) + \textup{PROJ}(\langle \sigma^{(t+1)}_k, D^{(t+1)}_v \rangle; \mathfrak{S}_t)\,.
\end{align*}
Note that
\begin{align*}
    \mathrm{Var}( \langle \sigma^{(t+1)}_k, \Tilde{D}^{(t+1)}_v \rangle ) = \mathrm{Var} \Big( \sqrt{\frac{12}{K_{t+1}}} \sum_{j=1}^{\frac{1}{12} K_{t+1}} \beta_k^{(t+1)} (j) \langle \eta^{(t+1)}_j, \Tilde{D}^{(t+1)}_v \rangle  \Big) \nonumber\\
    \overset{\eqref{equ_degree_correlation_2}}{=} \frac{12}{K_{t+1}} \sum_{j=1}^{\frac{1}{12} K_{t+1}} \mathrm{Var} ( \langle \eta^{(t+1)}_j, \Tilde{D}^{(t+1)}_v \rangle ) \overset{\eqref{equ_degree_variance_1}}{=} 1 + o(K_{t+1} \Delta_{t+1})\,.
\end{align*}
Recalling \eqref{eq-conditional-covariance-projection}, we see that the variance of $\langle \sigma^{(t+1)}_k, \Tilde{D}^{(t+1)}_v \rangle - \textup{GAUS} (\langle \sigma^{(t+1)}_k, D^{t+1}_v \rangle ; \mathfrak{S}_t )$ is 
\begin{equation}\label{equ_var_sigma_D}
1 + o(K_{t+1} \Delta_{t+1}) - o(K_{t+1}^{22} \Delta_{t+1}^2) = 1 + o(K_{t+1} \Delta_{t+1})\,,
\end{equation}
on the event $\mathcal C_{t+1}$. 
In addition, on the event $\mathcal A_{t+1}$, we have 
\begin{equation}\label{equ-mean-sigma-D}
|\textup{PROJ}(\langle \sigma^{(t+1)}_k, D^{(t+1)}_v \rangle; \mathfrak{S}_t)| \leq K_{t+1}^{21} (\log n) \Delta_{t+1}\,.
\end{equation}
Combined with Corollary \ref{corollary_estimation_perturbed_Gaussian_tail}, it yields that
\begin{align*}
    \Big| \mathbb{P} \Big[ v \in \Gamma^{(t+2)}_k \mid \mathfrak S_t \Big] - \alpha \Big| 
    &= o( K_{t+1}^{21} ( \log n )^3 \Delta_{t+1} ) \,.
\end{align*}
By \eqref{eq-conditional-covariance-projection}, we 
$$\mathrm{Cov}(\langle \sigma^{(t+1)}_k, D^{(t+1)}_u \rangle|{\mathfrak{S}_t}, \langle \sigma^{(t+1)}_k, D^{(t+1)}_v \rangle|{\mathfrak{S}_t}) = o(K_{t+1}^{22} \Delta_{t+1}^2)\,.$$ So
applying Corollary \ref{corollary_estimation_perturbed_Gaussian_tail} again gives that
\begin{align*}
     \Big| \mathbb{P} \Big[ u,v \in \Gamma^{(t+2)}_k \mid \mathfrak S_t\Big] - \mathbb{P} \Big[ u \in \Gamma^{(t+2)}_k \mid \mathfrak S_t\Big] \mathbb{P} \Big[ v \in \Gamma^{(t+2)}_k \mid \mathfrak S_t \Big] \Big|  
    = o( K_{t+1}^{22} ( \log n )^2 \Delta^2_{t+1})\,.
\end{align*}
So by Chebyshev inequality we have
\begin{align*}
    \mathbb{P} \Big[  \Big| \frac{|\Gamma^{(t+2)}_k |}{n} - \alpha\Big| > \Delta_{t+2}  \mid \mathfrak S_t \Big] \leq \frac{1}{K_{t+2}^2 (\log n)^4}\,.
\end{align*}
By a union bound this verifies (i.). Similarly we can verify (ii.). 

The concentration results concerning $\frac{|\Gamma^{(t+2)}_i \cap \Gamma^{(t+2)}_j|}{n}, \frac{|\Pi^{(t+2)}_i \cap \Pi^{(t+2)}_j|}{n}, \frac{|\Gamma^{(t+2)}_i \cap \Pi^{(t+2)}_j|}{n}$ (which correspond to  (iii.), (v.) and (vii.)  respectively) can be proved similarly. 

For $\frac{|\Gamma^{(t+2)}_k \cap \Gamma^{(t+2)}_l|}{n}$, it suffices to estimate \begin{align}
    \frac{1}{n} \sum_{v \in V} \left(  \mathbb{P} \left[ \left | \langle \sigma^{(t+1)}_k, D^{(t+1)}_v \rangle | \mathfrak{S}_{t} \right| , \left| \langle \sigma^{(t+1)}_l, D^{(t+1)}_v \rangle | \mathfrak{S}_{t} \right| \geq 10\right] - \phi( \frac{12}{K_{t+1}} \langle \beta^{(t+1)}_k, \beta^{(t+1)}_l \rangle )  \right) \label{eq-Gamma-Gamma-first-moment}
    \end{align}
and
\begin{equation}\label{eq-Gamma-Gamma-second-moment}
\begin{aligned}
    \frac{1}{n^2} \sum_{u \not = v} \mathrm{ Cov } \Big(  \mathbf{I}_{ \left\{  \left| \langle \sigma^{(t+1)}_k , D^{(t+1)}_v \rangle | \mathfrak{S}_{t} \right| , \left| \langle \sigma^{(t+1)}_l , D^{(t+1)}_v \rangle | \mathfrak{S}_{t} \right| \geq 10 \right \} } ,  \\
    \mathbf{I}_{ \left\{   \left| \langle \sigma^{(t+1)}_k , D^{(t+1)}_u \rangle | \mathfrak{S}_{t} \right| , \left| \langle \sigma^{(t+1)}_l , D^{(t+1)}_u \rangle | \mathfrak{S}_{t} \right| \geq 10    \right \} }   \Big)\,.
\end{aligned}
\end{equation}
Note that
\begin{align*}
    &\mathrm{Cov}( \langle \sigma^{(t+1)}_k, \Tilde{D}^{(t+1)}_v \rangle, \langle \sigma^{(t+1)}_l, \Tilde{D}^{(t+1)}_v \rangle ) \\
    =& \frac{12}{K_{t+1}} \sum_{i=1}^{\frac{1}{12}K_{t+1}} \sum_{j=1}^{\frac{1}{12}K_{t+1}}  \beta^{(t+1)}_k (i) \beta^{(t+1)}_l (j) \mathrm{Cov} ( \langle \eta^{(t+1)}_i, \Tilde{D}^{(t+1)}_v \rangle, \langle \eta^{(t+1)}_j, \Tilde{D}^{(t+1)}_v \rangle ) \\
    \overset{\eqref{equ_degree_correlation_2}}{=} & \frac{12}{K_{t+1}} \sum_{i=1}^{\frac{1}{12}K_{t+1}} \beta^{(t+1)}_k(i) \beta^{(t+1)}_l(i) \mathrm{Var}(\langle \eta^{(t+1)}_i, \Tilde{D}^{(t+1)}_v \rangle) \\ \overset{\eqref{equ_degree_variance_1}}{=} & \frac{12}{K_{t+1}} \langle \beta^{(t+1)}_k, \beta^{(t+1)}_l \rangle + o(K_{t+1} \Delta_{t+1})\,.
\end{align*}
Recalling \eqref{equ_var_sigma_D} and \eqref{equ-mean-sigma-D}, we see that on the event 
$\mathcal{A}_{t+1} \cap \mathcal C_{t+1}$ we have (here $N(\mu, \Sigma)$ denotes for a normal vector with mean $\mu$ and covariance matrix $\Sigma$)
\begin{align*}
    &\begin{pmatrix}
    \langle \sigma^{(t+1)}_k, D^{(t+1)}_v \rangle | \mathfrak{S}_{t} \\
    \langle \sigma^{(t+1)}_l, D^{(t+1)}_v \rangle | \mathfrak{S}_{t}
    \end{pmatrix}
    \overset{d}{=} 
    N \left( 
    \begin{pmatrix}
    \mu_k \\ \mu_l
    \end{pmatrix},
    \begin{pmatrix}
    1 + \sigma_k & \langle \beta^{(t+1)}_k, \beta^{(t+1)}_l \rangle + \rho_{k,l} \\
    \langle \beta^{(t+1)}_k, \beta^{(t+1)}_l \rangle + \rho_{k,l} & 1 + \sigma_l
    \end{pmatrix}
    \right )\,,
\end{align*}
where $|\mu_k|,|\mu_l| \leq K^{21}_{t+1} (\log n) \Delta_{t+1}$ and $|\sigma_k|,|\sigma_l|,|\rho_{k,l}| \leq K_{t+1}^{20} \Delta_{t+1}$. 
So by Corollary \ref{corollary_estimation_perturbed_Gaussian_tail}, we have $|\eqref{eq-Gamma-Gamma-first-moment}| = o(K_{t+1}^{21} (\log n)^3 \Delta_{t+1})$.  By \eqref{eq-mathcal-E-t+1-holds}, for $u\neq v$ the pairwise covariance between $\langle \sigma^{(t+1)}_i, D^{(t+1)}_v \rangle | \mathfrak{S}_{t}, \langle \sigma^{(t+1)}_i, D^{(t+1)}_v \rangle | \mathfrak{S}_{t}$ and $\langle \sigma^{(t+1)}_i, D^{(t+1)}_u \rangle | \mathfrak{S}_{t}, \langle \sigma^{(t+1)}_i, D^{(t+1)}_u \rangle | \mathfrak{S}_{t}$ is $o(K_{t+1}^{22} \Delta_{t+1}^2)$. Thus, by Corollary~\ref{corollary_estimation_perturbed_Gaussian_tail} again we have $\eqref{eq-Gamma-Gamma-second-moment} = o(K_{t+1}^{22} (\log n)^2 \Delta_{t+1}^2)$. Therefore, applying Chebyshev's inequality and a union bound yields the desired concentration for $\frac{ |\Gamma^{t+2}_i \cap \Gamma^{(t+2)}_j| }{n}$. Furthermore, we can control $\frac{|\Pi^{(t+2)}_i \cap \Pi^{(t+2)}_j|}{n}$ in the same way.

For $\frac{|\pi(\Gamma^{(t+2)}_k) \cap \Pi^{(t+2)}_l|}{n}$, it suffices to estimate \begin{align}\label{eq-Gamma-Pi-first-moment}
    \frac{1}{n} \sum_{v \in V} \left(  \mathbb{P} \left[ \left | \langle \sigma^{(t+1)}_k, D^{(t+1)}_v \rangle | \mathfrak{S}_{t} \right| , \left| \langle \sigma^{(t+1)}_l, \mathsf{D}^{(t+1)}_{\pi(v)} \rangle | \mathfrak{S}_{t} \right| \geq 10 \right] - \phi( \frac{12}{K_{t+1}} \langle \Hat{\beta}^{(t+1)}_k, \Hat{\beta}^{(t+1)}_l \rangle )  \right)
\end{align}
and 
\begin{equation}\label{eq-Gamma-Pi-second-moment}
\begin{aligned}
    \frac{1}{n^2} \sum_{u \not = v} \mathrm{ Cov } \Big( \mathbf{I}_{ \left\{   \left| \langle \sigma^{(t+1)}_k , D^{(t+1)}_v \rangle | \mathfrak{S}_{t} \right| , \left| \langle \sigma^{(t+1)}_l , \mathsf{D}^{(t+1)}_{\pi(v)} \rangle | \mathfrak{S}_{t} \right| \geq 10     \right\}  }  ,  \\
    \mathbf{I}_{ \left\{   \left| \langle \sigma^{(t+1)}_k , D^{(t+1)}_u \rangle | \mathfrak{S}_{t} \right| , \left| \langle \sigma^{(t+1)}_l , \mathsf{D}^{(t+1)}_{\pi(u)} \rangle | \mathfrak{S}_{t} \right| \geq 10     \right\}  }  \Big)\,.
\end{aligned}
\end{equation}
Note that
\begin{align*}
    &\mathrm{Cov}( \langle \sigma^{(t+1)}_k, \Tilde{D}^{(t+1)}_v \rangle, \langle \sigma^{(t+1)}_l, \Tilde{\mathsf{D}}^{(t+1)}_{\pi(v)} \rangle ) \\
    =& \frac{12}{K_{t+1}} \sum_{i=1}^{\frac{1}{12}K_{t+1}} \sum_{j=1}^{\frac{1}{12}K_{t+1}}  \beta^{(t+1)}_k (i) \beta^{(t+1)}_l (j) \mathrm{Cov} ( \langle \eta^{(t+1)}_i, \Tilde{D}^{(t+1)}_v \rangle, \langle \eta^{(t+1)}_j, \Tilde{\mathsf{D}}^{(t+1)}_{\pi(v)} \rangle ) \\
    = & \frac{12}{K_{t+1}} \sum_{i=1}^{\frac{1}{12}K_{t+1}} \sum_{j=1}^{\frac{1}{12}K_{t+1}} \beta^{(t+1)}_k(i) \beta^{(t+1)}_l(j) \eta^{(t)}_i \mathrm{P}^{(t+1,t+1)}_{\Gamma,\Pi} \left( \eta^{(t)}_j \right)^{*}  \\
    \overset{\eqref{equ_concentration_P_Gamma_Pi} \eqref{equ_vector_orthogonal}}{=} & \frac{12}{K_{t+1}} \sum_{i=1}^{\frac{1}{12}K_{t+1}}  \beta^{(t+1)}_k(i) \beta^{(t+1)}_l(i) \eta^{(t)}_i \Psi^{(t+1)} \left( \eta^{(t)}_i \right)^{*} + o(K_{t+1}^2 \Delta_{t+1})  \\
    = & \frac{12}{K_{t+1}} \langle \Hat{\beta}^{(t+1)}_k, \Hat{\beta}^{(t+1)}_l \rangle + o(K_{t+1}^2 \Delta_{t+1})  \,.
\end{align*}   
Recalling \eqref{equ_var_sigma_D} and \eqref{equ-mean-sigma-D} (as well as their analogues for $\tilde{\mathsf D}$), we see that on the event 
$\mathcal{A}_{t+1} \cap \mathcal C_{t+1}$
\begin{align*}
    &\begin{pmatrix}
    \langle \sigma^{(t+1)}_k, D^{(t+1)}_v \rangle | \mathfrak{S}_{t} \\
    \langle \sigma^{(t+1)}_l, \mathsf D^{(t+1)}_{\pi(v)} \rangle | \mathfrak{S}_{t}
    \end{pmatrix}
    \overset{d}{=} 
    N \left( 
    \begin{pmatrix}
    \mu_k \\ \mu_l
    \end{pmatrix},
    \begin{pmatrix}
    1 + \sigma_k & \langle \Hat{\beta}^{(t+1)}_k, \Hat{\beta}^{(t+1)}_l \rangle + \rho_{k,l}  \\
    \langle \Hat{\beta}^{(t+1)}_k, \Hat{\beta}^{(t+1)}_l \rangle + \rho_{k,l} & 1 + \sigma_l
    \end{pmatrix}
    \right )\,,
\end{align*}   
where $|\mu_k|,|\mu_l| \leq K_{t+1}^{21} (\log n) \Delta_{t+1}$ and $|\sigma_k|, |\sigma_l|, |\rho_{k,l}| \leq K^{20}_{t+1} \Delta_{t+1}$. So by Corollary \ref{corollary_estimation_perturbed_Gaussian_tail} we have $\eqref{eq-Gamma-Pi-first-moment} =  o(K_{t+1}^{21} (\log n)^3 \Delta_{t+1})$. By \eqref{eq-mathcal-E-t+1-holds}, for $u\neq v$ the pairwise covariance between $\langle \sigma^{(t+1)}_i, D^{(t+1)}_v \rangle | \mathfrak{S}_{t}, \langle \sigma^{(t+1)}_i, \mathsf D^{(t+1)}_{\pi(v)} \rangle | \mathfrak{S}_{t}$ and $\langle \sigma^{(t+1)}_i, D^{(t+1)}_u \rangle | \mathfrak{S}_{t}, \langle \sigma^{(t+1)}_i, \mathsf{D}^{(t+1)}_{\pi(u)} \rangle | \mathfrak{S}_{t}$ is $o(K_{t+1}^{22} \Delta_{t+1}^2)$. Applying Corollary \ref{corollary_estimation_perturbed_Gaussian_tail} again we have $\eqref{eq-Gamma-Pi-second-moment} = o(K_{t+1}^{22} (\log n)^2 \Delta_{t+1}^2)$. What remains is a standard application of Chebyshev's inequality and a union bound as above.

Furthermore, we control the concentration of $\frac{ | \Gamma^{(t+2)}_k \cap \Pi^{(s)}_l |}{n}, \frac{ | \Gamma^{(t+2)}_k \cap \Gamma^{(s)}_l |}{n}, \frac{ | \Pi^{(t+2)}_k \cap \Pi^{(s)}_l |}{n} $ (which correspond to (ix.), (x.) and (xi.) respectively). Note that under $\mathfrak{S}_{t}$, $\Pi^{(s)}_l$'s are fixed subsets for $s\leq t$. In addition, on the event $\mathcal{E}_{t+1}$ we have $\Big| \frac{| \Pi^{(s)}_l |}{n} - \alpha \Big| <\Delta_s$. So, 
\begin{align*}
    \frac{ | \Gamma^{(t+2)}_k \cap \Pi^{(s)}_l |}{n} - \alpha^2 = \alpha \Big( \frac{1}{\alpha n}  \sum_{ u \in \Pi^{(s)}_l } \Big( \mathbf{I}_{ u \in \Gamma^{(t+2)}_k } - \alpha \Big) \Big) + \alpha \Big( \frac{|\Pi^{(s)}_l|}{n} - \alpha \Big)\,.
 \end{align*}
 Since $|\alpha \Big( \frac{|\Pi^{(s)}_l|}{n} - \alpha \Big)| \leq \Delta_s$ on the event $\mathcal{E}_{t+1}$, the above can be handled similarly to that for $\frac{|\Gamma^{(t+2)}_k |}{n} - \alpha$. The same applies to the other two items here. We omit further details since the modifications are minor.
 
 Putting all above together, we finally complete the proof of \eqref{eq-mathcal-E-t+1-holds}.

\subsubsection{Conclusion}
By putting together \eqref{eq-mathcal-E-0-bound}, \eqref{prob-mathcal-A-t+1}, \eqref{eq-mathcal-C-containment}, 
\eqref{eq-mathcal-B-happens} and \eqref{eq-mathcal-E-t+1-holds},  we have proved {\bf Step 1}--{\bf Step 4} listed at the beginning of this subsection. In addition, since $t^*\leq \log \log n$, our quantitative bounds imply that all these hold simultaneously for $t = 0, \ldots, t^*$ except with probability $O(1/\log n)$. We first apply \eqref{eq-mathcal-E-t+1-holds} with $t=-1$ to derive that $\mathcal E_1$ holds with high probability and then by the inductive logic explained at the beginning of this subsection, we complete the proof of the proposition. We also point out that in addition we have shown that $\mathcal A_{t^*}, \mathcal B_{t^*}, \mathcal C_{t^*}$ hold with probability $1-o(1)$, which will be used in Section~\ref{sec-proof-main-prop}.

\subsection{Proof of Proposition~\ref{main-prop}}\label{sec-proof-main-prop}

It remains to show that on $\mathcal E^{\diamond} = \mathcal A_{t^*} \cap \mathcal B_{t^*} \cap \mathcal C_{t^*} \cap \mathcal{E}_{t^*}$, our matching algorithm succeeds with probability $1-o(1)$. At this point, the approach is fairly straightforward: since we have accumulated enough signals over the iterative steps thanks to the success of $ \mathcal{E}^\diamond$, we just need to use this to prove large deviation bounds for statistics employed in the finishing step of our algorithm.

For convenience we will omit the index $t^*$. So we will denote $\varepsilon_{t^*},K_{t^*}$ as $\varepsilon,K$.  Recalling Remark~\ref{remark_projection_form}, we have 
\begin{align*}
    \langle \eta_k, D_v \rangle |{ \mathfrak{S} } \overset{d}{=} \langle \eta_k , \Tilde{D}_v \rangle - \textup{GAUS}( \langle \eta_k , D_v \rangle ; \mathfrak{S}_{t^*-1} ) + \textup{PROJ}( \langle \eta_k , D_v \rangle ; \mathfrak{S}_{t^*-1} )\,.
\end{align*}
Note $\mathrm{Var}(\textup{GAUS}( \langle \eta_k , D_v \rangle ; \mathfrak{S}_{t^*-1} )) \leq K^{20} \Delta^2$ and $|\textup{PROJ}( \langle \eta_k , D_v \rangle ; \mathfrak{S}_{t^*-1})| \leq K^{21} (\log n) \Delta$ on  $\mathcal E^\diamond$. So on  $\mathcal E^\diamond$ we have $|\textup{GAUS}( \langle \eta_k , D_v \rangle ; \mathfrak{S}_{t^*-1} )|, |\textup{PROJ}( \langle \eta_k , D_v \rangle ; \mathfrak{S}_{t^*-1} ) |\leq n^{-0.05}$ with probability at least $1 - K n e^{-n^{0.01}}$. On this event,  for $\pi(v)=\mathsf{v}$,
\begin{align*}
    \sum_{k=1}^{\frac{K}{12}} \langle \eta_k, D_v \rangle \langle \eta_k , \mathsf{D}_{\pi(v)} \rangle |_{\mathfrak{S}_{t^*-1}}
    \overset{d}{=}  \sum_{k=1}^{\frac{K}{12}} \langle \eta_k, \Tilde{D}_v \rangle  \langle \eta_k , \Tilde{\mathsf{D}}_{\pi(v)} \rangle + o(n^{-0.01})
\end{align*}
is a quadratic form of Gaussian variables with expectation larger than $ \frac{K \varepsilon}{20} $ and variance bounded by $K$; for $\pi(v) \not = \mathsf{v}$,
\begin{align*}
    \sum_{k=1}^{\frac{K}{12}} \langle \eta_k, D_v \rangle \langle \eta_k , \mathsf{D}_{\mathsf{v}} \rangle |_{\mathfrak{S}_{t^*-1}}
    \overset{d}{=} \sum_{k=1}^{\frac{K}{12}} \langle \eta_k, \Tilde{D}_v \rangle  \langle \eta_k , \Tilde{\mathsf{D}}_{\mathsf{v}} \rangle + o(n^{-0.01})
\end{align*}
is also a quadratic form of Gaussian variables, with expectation $O(\frac{1}{n})$ and variance bounded by $K$. Thus, applying Corollary \ref{corollary_Hanson_Wright_Gaussian}, we get that for $\mathsf v\neq \pi(v)$
\begin{align*}
    \mathbb{P}& \Big[ \sum_{ k=1 }^{ \frac{K}{12} } \langle \eta_k, D_v \rangle \langle \eta_k, \mathsf{D}_{\mathsf{v}} \rangle > \frac{K \varepsilon}{100} ; \mathcal E^\diamond\Big] \leq \mathbb{P} \Big[  \sum_{ k=1 }^{ \frac{K}{12} } \langle \eta_k, \Tilde{D}_v \rangle \langle \eta_k, \Tilde{\mathsf{D}}_{\mathsf{v}} \rangle > \frac{K \varepsilon}{200}  \Big]  + K n e^{-n^{0.01}}\\
    & \leq  \exp \Big \{ - \Big(  \frac{ (\frac{K \varepsilon}{200})^2 }{ \frac{K}{12}  }  \Big)^{\frac{1}{2}}  \Big \} + K n e^{-n^{0.01}} \leq \exp \Big\{ - \Big( \frac{K \varepsilon^2}{ 40000 } \Big)^{\frac{1}{2}}  \Big \} + K n e^{-n^{0.01}} \overset{ (\ref{equ_estimation_K_t_Varepsilon_t}) }{<} \frac{1}{n^5}\,,
\end{align*}
and (similarly) that
\begin{equation*}
    \mathbb{P} \Big[ \sum_{ k=1 }^{ \frac{K}{12} } \langle \eta_k, D_v \rangle \langle \eta_k, \mathsf{D}_{\pi(v)} \rangle \leq \frac{K \varepsilon}{100} ; \mathcal E^\diamond \Big] \leq \frac{1}{n^5}\,.
\end{equation*}
Combining the above two results, we get that
\begin{equation*}
    \mathbb{P}[ \Hat{\pi} (v) \not = \pi(v) ; \mathcal E^\diamond] \leq \frac{1}{n^4}\,.
\end{equation*}
At this point, a union bound completes the proof of the proposition.

	\small

\end{document}